\documentclass[EJP, preprint]{ejpecp} 

\usepackage{enumerate}  
\usepackage{bm}

\SHORTTITLE{IDS asymptotics for Gibbsian singular potentials}

\TITLE{Asymptotics of the IDS for Schr\"{o}dinger operators with singular potentials and Gibbs point processes
   } 

\AUTHORS{%
Yuta Nakagawa\footnote{Graduate School of Human and Environmental Studies, Kyoto University, Japan.
    \BEMAIL{nakagawa.yuta.58n@st.kyoto-u.ac.jp}
  }}

\KEYWORDS{random Schr\"{o}dinger operator; density of states; Gibbs point process} 

\AMSSUBJ{Primary 82B44, Secondary 60G55; 60G60; 60K35} 

\SUBMITTED{} 

\VOLUME{0}
\YEAR{2026}
\PAPERNUM{0}
\DOI{10.1214/YY-TN}

\ABSTRACT{
The asymptotic behavior of the integrated density of states (IDS), \(N(E)\), is investigated for random Schrödinger operators with a single-site potential \(V\) satisfying \(\essinf V = -\infty\). Under the assumption that the underlying point process is a Gibbs point process with repulsive pairwise interactions, the leading term of \(\log N(E)\) as \(E \to -\infty\) is determined using a periodic approximation method. It is shown that repulsive pairwise interactions lead to a significantly faster decay of \(N(E)\) compared to the Poisson case. Furthermore, configurations with multiple clusters can provide the dominant contribution to the IDS in the Gibbs setting, contrasting with the single-cluster dominance typically observed in Poisson models. Finally, refined estimates of the leading constants are provided for specific classes of potentials, including those with multiple singularities.
}

\newcommand{\veps}{\varepsilon} 
\newcommand{\borel}{\mathcal{B}(\mathbb{R}^d)}

\DeclareMathOperator{\tr}{tr}
\DeclareMathOperator{\diam}{diam}
\DeclareMathOperator{\essinf}{ess\,inf}
\DeclareMathOperator{\esssup}{ess\,sup}
\DeclareMathOperator{\Int}{Int}
\DeclareMathOperator{\supp}{supp}

\let\Im\relax
\DeclareMathOperator{\Im}{Im}

\begin{document}



\section{Introduction}\label{sec:intro}
We consider the random Schr\"{o}dinger operator on \(L^2(\mathbb{R}^d,dx)\) defined by
\begin{equation}\label{eq:operator}
H_{\omega}= -\Delta + V_{\omega},
\end{equation}
where
\begin{equation} 
V_{\omega}=\sum_{x\in\omega}V(\,\cdot\,-x).
\end{equation}
Here, \(V\), called the \emph{single-site potential}, is a real-valued measurable function on \(\mathbb{R}^d\), and \(\omega\subset\mathbb{R}^d\) is a configuration of points representing a realization of a point process \(\Gamma\) (see Section \ref{subsec:def}). The \emph{integrated density of states (IDS)} \(N(E)\) of the operator \(H_{\omega}\) is the function on \(\mathbb{R}\) formally given by
\begin{equation}
\lim_{L\to\infty}\frac{1}{L^d}\# \{\text{eigenvalues of } H_{\omega,L} \text{ less than or equal to } E\},
\end{equation}
where \(H_{\omega,L}\) is the operator \(H_{\omega}\) restricted to the box \((-L/2,L/2)^d \subset \mathbb{R}^d\) with Dirichlet boundary conditions. When \(\Gamma\) is stationary and ergodic (see Section \ref{subsec:properties}), it is known that both \(N(E)\) and the spectrum of \(H_{\omega}\) are almost surely independent of the realization \(\omega\). The IDS \(N(E)\) is non-decreasing and increases only on the spectrum. We refer to \cite{CL, PF} for the precise definition and further properties of the IDS.

The asymptotic behavior of \(\log N(E)\) as \(E\to-\infty\) is well known for the case where the single-site potential \(V\not\equiv 0\) is a nonpositive, bounded, and continuous function with compact support.
If the point process \(\Gamma\) is a Poisson point process (see Section \ref{subsec:def} for the definition), it is known that 
\begin{equation}\label{eq:classical}
\log N(E)\sim -\frac{1}{|\min V|}|E|\log|E|\quad(E\to-\infty),
\end{equation} 
where \(f(E)\sim g(E)\ (E\to-\infty)\) means that \(f(E)/g(E)\) converges to one as \(E\to-\infty\). We refer to \cite{Pastur} for this classical result.

When \(\Gamma\) is a stationary and ergodic Gibbs point process, which is characterized by interactions between points (see Section \ref{subsec:def} for the precise definition), the behavior of the IDS depends on the type of interaction. If the interaction is sufficiently weak, the corresponding IDS \(N(E)\) exhibits the same behavior as in \eqref{eq:classical} (see \cite{Nakagawa}).
In contrast, for a Gibbs point process with a pairwise energy function (see Example \ref{ex:pairwise}), the corresponding IDS can decay significantly faster. For instance, if the energy of the points \(\{x_i\}\) is given by 
\begin{equation}\label{eq:intro pairwise}
a\sum_{i<j}\mathbb{1}_{[0,R]}(|x_i-x_j|)
\end{equation}
for some \(a, R>0\), it holds that
\begin{equation}
\log N(E)\sim -\frac{a}{2\|V\|_R^2} |E|^2\quad (E\to-\infty),
\end{equation}
where \(\mathbb{1}_A\) denotes the indicator function of a set \(A\), and the constant \(\|V\|_R^2\) is determined by the single-site potential \(V\) and the interaction range \(R\).
See \cite{Nakagawa} for further details.

Various models concerning the asymptotic behavior of the IDS as \(E\to-\infty\) have been investigated. The one-dimensional case with a negative delta potential \(V=-c\delta\ (c>0)\) was discussed in \cite{KP, LGP}. The case of Poisson point interactions on \(\mathbb{R}^3\) was studied in \cite{KMN}. Beyond the Poisson setting, the case of a randomly perturbed lattice, where the single-site potential is bounded below, was investigated in \cite{FU}. 

In this paper, we investigate the asymptotic behavior of \(\log N(E)\) as \(E\to-\infty\) for Gibbs point processes, where the single-site potential \(V\) satisfies \(\essinf V=-\infty\) along with certain integrability and exponential decay conditions. An example of such a potential in three dimensions \((d=3)\) is the screened Coulomb potential, given by \(V(x) = -e^{-|x|} / |x|\).  For the case of a Poisson point process under the same conditions on \(V\), it was proved in \cite{KP} that the IDS satisfies
\begin{equation}\label{eq:Poisson IDS}
\log N(E)\asymp - g(|E|)\log g(|E|)\quad (E\to-\infty),
\end{equation}
where \(g\) is an increasing function determined by \(V\) (see Section \ref{sec:weak} for the precise definition of \(g\)). Here, \(A(E)\asymp B(E)\ (E\to-\infty)\) means that the ratio \(A(E)/B(E)\) is bounded from above and below by two positive constants as \(E\to-\infty\).

As discussed in \cite{KP}, for singular potentials (the quantum case), the precision of the standard Dirichlet-Neumann bracketing is not sufficient. To obtain more precise bounds, one must employ the periodic approximation method. Our approach is based on this method. By extending it to the case of Gibbs point processes and single-site potentials satisfying \(\essinf V=-\infty\), we prove the following results.
First, we show that for a Gibbs point process with sufficiently weak interactions, the asymptotic behavior of the IDS is analogous to that for the Poisson point process studied in \cite{KP}; specifically, the same asymptotic relation as \eqref{eq:Poisson IDS} holds. This result is presented in Section \ref{sec:weak}.
In contrast, for the Gibbs point process with the pairwise energy function given by \eqref{eq:intro pairwise}, we show that
\begin{equation}
\log N(E) \asymp -g(|E|)^2\quad (E\to-\infty).
\end{equation}
This result, established in Theorem \ref{thm:pairwise}, demonstrates that the IDS decays significantly faster than in the Poisson case due to the repulsive effect of the interactions between points.
In addition, if the support of \(V\) is contained in an open ball of diameter \(R\), it holds that
\begin{equation}
\log N(E) \sim -\frac{a}{2}\,g(|E|)^2 \quad (E\to-\infty).
\end{equation}
Refer to Example \ref{ex:Strauss} for the details of this result.
We remark that, in the Poisson case, the leading term of \(\log N(E)\) for such a single-site potential \(V\) has not been determined yet, except for certain classes of \(V\) discussed in \cite{KP}.

Furthermore, we investigate the case where the single-site potential \(V\) is represented as a linear combination of translates of a function with small compact support. In this setting, under certain conditions on the range \(R\) and the single-site potential \(V\), we determine the leading term of \(\log N(E)\). The resulting asymptotic behavior reflects how deep potential wells are formed: whereas the leading term in the Poisson case is dominated by the event where a single cluster of points forms a deep well, repulsive interactions make the probability of such a configuration highly suppressed. Instead, the dominant contribution to the IDS arises from the event that the points form several distinct clusters. A similar feature regarding the number of clusters was also reported in the author's previous work \cite{Nakagawa} in the case of bounded potentials. We show that the leading term of \(\log N(E)\) as \(E\to-\infty\) for the class of potentials considered here is characterized by these multi-cluster configurations, and the constant in the leading term is determined in Section \ref{sec:improve}.

This paper is organized as follows. In Section \ref{sec:GPP}, we introduce Gibbs point processes and their properties (see, e.g., \cite{Dereudre, Preston, Ruelle}). In Section \ref{sec:PA}, we modify the periodic approximation method established in \cite{KP} and apply it to the case where the point process \(\Gamma\) is a Gibbs point process.
In Section \ref{sec:weak}, we show that the asymptotic behavior of the IDS for a Gibbs point process with weak interactions is analogous to that for a Poisson point process. In Section \ref{sec:pairwise}, we determine the leading term of the asymptotic behavior of the IDS for a Gibbs point process with a pairwise energy function satisfying suitable conditions. Finally, in Section \ref{sec:improve}, we provide a refined estimate for a specific class of single-site potentials, including those with multiple singularities.

\section{Gibbs point process}\label{sec:GPP}
In this section, we provide an overview of Gibbs point processes, including their definitions and basic properties. 
After introducing the formal framework in Section \ref{subsec:def}, we recall relevant properties, such as disagreement couplings, in Section \ref{subsec:properties}.

\subsection{Definition and notation}\label{subsec:def}

Let \((\mathcal{C},\mathcal{F})\) denote the space of all subsets of \(\mathbb{R}^d\) without accumulation points, equipped with the \(\sigma\)-algebra \(\mathcal{F}\) generated by counting functions \(\{m(\,\cdot\, ,\Lambda)\}_{\Lambda\in\borel}\), where \(m(\,\cdot\, ,\Lambda)\) is the function on \(\mathcal{C}\) defined by \(\mathcal{C}\ni\omega\mapsto\#(\omega\cap\Lambda)\).
For every \(\Lambda\in\borel\), we set \(\mathcal{C}_{\Lambda}=\{\omega_{\Lambda}\mid \omega\in\mathcal{C}\}\), where \(\omega_\Lambda=\omega\cap\Lambda\).
The space \(\mathcal{C}_{\Lambda}\) is equipped with the \(\sigma\)-algebra \(\mathcal{F}_{\Lambda}\) generated by \(\{m(\,\cdot\, ,\Lambda')\}_{\Lambda'\in\borel,\ \Lambda'\subset\Lambda}\).

A \emph{point process} on \(\Lambda\in\borel\) is a \(\mathcal{C}_{\Lambda}\)-valued random variable.
For every \(\mu>0\) and \(\Lambda\in\borel\), let \(P^{Poi(\mu)}_{\Lambda}\) denote the distribution of the \emph{Poisson point process} on \(\Lambda\) with intensity \(\mu\). That is, \(P^{Poi(\mu)}_{\Lambda}\) is the probability measure on \(\mathcal{C}_{\Lambda}\) characterized by
\begin{itemize}
\item for any \(n\in\mathbb{Z}_{>0}\) and disjoint \(\Lambda'_1,\ldots,\Lambda'_n \subset \Lambda\), the random variables \(m(\,\cdot\, ,\Lambda'_1),\ldots,m(\,\cdot\, ,\Lambda'_n)\) are mutually independent;
\item \(P^{Poi(\mu)}_{\Lambda}(m(\,\cdot\, ,\Lambda')=n)=e^{-\mu|\Lambda'|} (\mu|\Lambda'|)^n / n!\) for all bounded \(\Lambda' \subset \Lambda\) and \(n\in\mathbb{Z}_{\geq 0}\),
\end{itemize}
where \(|\Lambda'|\) denotes the Lebesgue measure of \(\Lambda'\). For simplicity, we write \(P^{Poi}_{\Lambda}=P^{Poi(1)}_{\Lambda}\), \(P^{Poi(\mu)}=P^{Poi(\mu)}_{\mathbb{R}^d}\) and \(P^{Poi}=P^{Poi(1)}\).

An \emph{energy function} \(U\) is a measurable function from \(\mathcal{C}_f\) to \(\mathbb{R}\cup\{+\infty\}\), where \(\mathcal{C}_f\) is the subspace of \(\mathcal{C}\) defined by \(\mathcal{C}_f = \{\omega\in\mathcal{C} \mid \#\omega<\infty\}\). We assume that \(U\) can be represented as
\begin{equation}
U(\omega)=\sum_{\eta\subset\omega}\Phi(\eta),
\end{equation}
for some measurable function \(\Phi\) on \(\mathcal{C}_f\). This function \(\Phi\) is called the \emph{interaction function}. Note that this assumption is satisfied if \(U(\omega\cup\{x\})=+\infty\) for any \(x\in\mathbb{R}^d\) and for any \(\omega\in\mathcal{C}_f\) such that \(U(\omega)=+\infty\), which follows from the M\"{o}bius inversion formula (see, e.g., \cite{Preston}).

In addition, we assume that the interaction function \(\Phi\) satisfies the following conditions:
\begin{description}
\item[\phantomsection\label{cd:U1}\textbf{(U1)}] 
\begin{enumerate}[(i)]
\item \(\Phi(\emptyset)=0\);
\item \(\Phi(\tau_x \omega)=\Phi(\omega)\) for any \(x\in\mathbb{R}^d\) and \(\omega\in\mathcal{C}_f\), where \(\tau_{x}\) is the translation by \(x\);
\item for some \(R>0\), \(\Phi(\omega)=0\) whenever \(\diam \omega>R\).
\end{enumerate}
\end{description}
Here, \(\diam \omega\) denotes the diameter of \(\omega\) in the Euclidean space \(\mathbb{R}^d\).

By Condition \hyperref[cd:U1]{\textbf{(U1)}}(iii), for \(\gamma\in\mathcal{C}\), \(\omega\in\mathcal{C}_{\Lambda}\), and a bounded \(\Lambda\in\borel\), we can define
\begin{equation}
U_{\Lambda,\gamma}(\omega) = \sum_{\substack{\eta\subset\omega\cup\gamma_{\Lambda^c}\\[0.3ex] \eta\cap\omega\neq\emptyset}}\Phi(\eta).
\end{equation}
We remark that \(U_{\Lambda,\gamma}(\omega)=U_{\Lambda,\gamma_{\Lambda+B(0,R)}}(\omega)\), where \(B(x,R)\) denotes the closed ball centered at \(x\in\mathbb{R}^d\) with radius \(R\), and \(\Lambda_1+\Lambda_2\) denotes the Minkowski sum of \(\Lambda_1\) and \(\Lambda_2\). 

For every \(x\in\mathbb{R}^d\) and \(\gamma\in\mathcal{C}\), we define the \emph{local energy function} by
\begin{equation}
u(x;\gamma)= U_{\{x\},\gamma}(\{x\}),
\end{equation}
which represents the change in energy when adding a point \(x\) to the configuration \(\gamma\). 

Furthermore, we assume that the energy function satisfies the following stability condition:
\begin{description}
\item[\phantomsection \textbf{(U2)} \label{cd:U2}] there exists a constant \(b\in\mathbb{R}\) such that \(u(x;\gamma)\geq b\) for any \(x\in\mathbb{R}^d\) and any \(\gamma\in\mathcal{C}\).
\end{description}

Under Condition \hyperref[cd:U2]{\textbf{(U2)}}, for any bounded \(\Lambda\in\borel\), \(\gamma\in\mathcal{C}\), and distinct points \(x_1,\ldots, x_k\in \Lambda\), we obtain
\begin{equation}\label{eq:stable}
U_{\Lambda,\gamma}(\{x_1, \ldots , x_k\})=\sum_{j=1}^k u(x_j;\gamma_{\Lambda^c}\cup\{x_1,\ldots,x_{j-1}\})
\geq kb.
\end{equation}

Formally, a Gibbs point process is defined as a point process whose distribution satisfies the Dobrushin-Lanford-Ruelle (DLR) equation (see, e.g., \cite{Dereudre, DV}).
\begin{definition}[Gibbs point process]
We say that a point process is a \emph{Gibbs point process} for an energy function \(U\) if for any bounded \(\Lambda\in\borel\) with positive Lebesgue measure and any bounded measurable function \(f\) on \(\mathcal{C}\), its distribution \(P^{Gib}\) satisfies the \emph{DLR equation}:
\begin{equation}\label{eq:DLR}
\int_{\mathcal{C}}f(\eta)P^{Gib}(d\eta)=\int_{\mathcal{C}}\int_{\mathcal{C}_{\Lambda}}f(\eta\cup\gamma_{\Lambda^c})P^{Gib}_{\Lambda,\gamma}(d\eta)P^{Gib}(d\gamma),
\end{equation}
where \(P^{Gib}_{\Lambda,\gamma}\) is the probability measure on \(\mathcal{C}_{\Lambda}\) defined by
\begin{equation}
dP^{Gib}_{\Lambda,\gamma} = \frac{1}{Z_{\Lambda,\gamma}}e^{-U_{\Lambda,\gamma}}dP^{Poi}_{\Lambda},
\end{equation}
and \(Z_{\Lambda,\gamma}\) is the normalization constant defined by
\begin{equation}\label{eq:partition function}
Z_{\Lambda,\gamma}=\int_{\mathcal{C}_{\Lambda}}e^{-U_{\Lambda,\gamma}(\eta)}P^{Poi}_{\Lambda}(d\eta).
\end{equation}
\end{definition}

Note that under Conditions \hyperref[cd:U1]{\textbf{(U1)}} and \hyperref[cd:U2]{\textbf{(U2)}}, we have \(0 < Z_{\Lambda,\gamma}<+\infty\), and a corresponding Gibbs point process exists (see, e.g., \cite{Dereudre, DV}).

To extend the periodic approximation method to the Gibbs setting, we introduce Gibbs point processes with periodic boundary conditions. This construction follows the approach for Gibbs distributions on the lattice \(\mathbb{Z}^d\) with periodic boundary conditions as detailed in \cite{Georgii}.

For \(x \in \mathbb{R}^d\) and \(n > 0\), let \(\Lambda_n(x)\) denote the \(d\)-dimensional cube of side length \(n\) centered at \(x\), i.e., \(\Lambda_n(x) = \prod_{i=1}^{d}(x_i-n/2, x_i+n/2]\), where we put \(x=(x_1, \ldots, x_d)\). For simplicity, we write \(\Lambda_n = \Lambda_n(0)\).
We consider an energy function \(U\) satisfying Conditions \hyperref[cd:U1]{\textbf{(U1)}} and \hyperref[cd:U2]{\textbf{(U2)}}. Fix \(n> 2R\). Let \(\pi_n\) be the map from \(\mathcal{C}_{\Lambda_n}\) to \(\mathcal{C}\) defined by
\begin{equation}
\pi_n(\omega)=\bigcup_{x\in n\mathbb{Z}^d}\tau_x \omega.
\end{equation}
We define \(\tau^{per}_{n,x}=\pi^{-1}_n \circ \tau_x \circ \pi_n\) for \(x\in\mathbb{R}^d\). This map can be identified with the translation on the \(d\)-dimensional torus \(\mathbb{R}^d / n\mathbb{Z}^d\).
Since \(\tau_x\circ\pi_n(\omega)\in\pi_n(\mathcal{C}_{\Lambda_n})\), this map is well-defined. 
We define the \emph{periodic interaction function} \(\Phi^{per}_n\) on \(\mathcal{C}_{\Lambda_n}\) by
\begin{equation}
\Phi^{per}_n(\omega) =
\begin{cases}
\Phi(\tau_{n,x}^{per} \omega) & \text{if \(\diam(\tau^{per}_{n,x} \omega) \leq R\) for some \(x\in\mathbb{R}^d\)}, \\
0 & \text{otherwise.}
\end{cases}
\end{equation}
This is well-defined due to the condition \(n>2R\) and the translation invariance \hyperref[cd:U1]{\textbf{(U1)}}(ii).
The \emph{periodic energy function} \(U^{per}_{\Lambda_n}\) on \(\mathcal{C}_{\Lambda_n}\) is defined by
\begin{equation}
U^{per}_{\Lambda_n}(\omega)= \sum_{\eta\subset \omega}\Phi^{per}_n(\eta).
\end{equation}
We remark that the following translation invariance holds: for all \(x\in\mathbb{R}^d\) and \(\omega\in\mathcal{C}_{\Lambda_n}\),
\begin{equation}\label{eq:per stationary}
U^{per}_{\Lambda_n}(\tau^{per}_{n,x}\omega)= U^{per}_{\Lambda_n}(\omega).
\end{equation}

Furthermore, for any measurable subset \(\Lambda\subset\Lambda_n\), \(\omega\in\mathcal{C}_{\Lambda}\), and \(\gamma\in\mathcal{C}_{\Lambda_n}\),
we set
\begin{equation}
U^{per}_{\Lambda_n,\Lambda,\gamma}(\omega)=\sum_{\substack{\eta\subset\omega\cup\gamma_{\Lambda^c}\\[0.3ex] \eta\cap\omega\neq\emptyset}}\Phi^{per}_n(\eta).
\end{equation}
Note that for any \(\gamma\in\mathcal{C}_{\Lambda_n}\) and any \(\Lambda\subset\Lambda_{n-2R}\), the following identity holds on \(\mathcal{C}_{\Lambda}\):
\begin{equation}\label{eq:per local}
U^{per}_{\Lambda_n,\Lambda,\gamma}=U_{\Lambda,\gamma}.
\end{equation} 

The following lemma shows that the periodic energy function \(U^{per}_{\Lambda_n}\) inherits stability from \(U\).
\begin{lemma}\label{lem:per stable}
Assume that an energy function \(U\) satisfies Conditions \hyperref[cd:U1]{\textbf{(U1)}} and \hyperref[cd:U2]{\textbf{(U2)}}. For any \(n>2R\) and \(\omega\in\mathcal{C}_{\Lambda_n}\), we have 
\begin{equation}\label{eq:per stable}
U^{per}_{\Lambda_n}(\omega)\geq b\# \omega.
\end{equation}
\end{lemma}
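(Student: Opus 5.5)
The plan is to unfold the torus configuration into a large finite configuration in \(\mathbb{R}^d\), apply the stability bound \eqref{eq:stable} to it, and compare its energy with \(m^d\) copies of \(U^{per}_{\Lambda_n}(\omega)\) up to a boundary error.

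Fix \(n>2R\) and \(\omega\in\mathcal{C}_{\Lambda_n}\). Since \(\Lambda_n\) is bounded and \(\omega\) has no accumulation points, \(\omega\) is finite. If some \(\eta\subset\omega\) has \(\Phi^{per}_n(\eta)=+\infty\), then \(U^{per}_{\Lambda_n}(\omega)=+\infty\) and there is nothing to prove. So assume all terms are finite, and set \(M=\sum_{\eta\subset\omega}|\Phi^{per}_n(\eta)|<\infty\).

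For \(m\in\mathbb{Z}_{>0}\), let \(K_m=\{0,\ldots,m-1\}^d\), \(B_m=\bigcup_{k\in K_m}(\Lambda_n+nk)\), and \(\omega_m=\pi_n(\omega)\cap B_m\). Then \(\omega_m\in\mathcal{C}_f\) and \(\#\omega_m=m^d\#\omega\). Applying \eqref{eq:stable} with \(\Lambda=B_m\) and \(\gamma=\emptyset\) gives
\[
U(\omega_m)=U_{B_m,\emptyset}(\omega_m)\geq b\,m^d\#\omega .
\]

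Next I would rewrite \(U(\omega_m)=\sum_{\eta\subset\omega_m}\Phi(\eta)\) in terms of \(\Phi^{per}_n\). By \hyperref[cd:U1]{\textbf{(U1)}}(iii), only \(\eta\) with \(\diam\eta\leq R\) contribute. Since \(R<n/2\), the projection onto the torus (reduction mod \(n\mathbb{Z}^d\) into \(\Lambda_n\)) is injective on such \(\eta\). Its image \(\eta'\subset\omega\) satisfies \(\Phi^{per}_n(\eta')=\Phi(\eta)\), by the definition of \(\Phi^{per}_n\) together with \hyperref[cd:U1]{\textbf{(U1)}}(ii).

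Conversely, for \(\eta'\subset\omega\) with \(\Phi^{per}_n(\eta')\neq0\), the lifts of \(\eta'\) of diameter at most \(R\) in \(\pi_n(\omega)\) form a single \(n\mathbb{Z}^d\)-translation class. The key counting step is that the number \(c_{\eta'}\) of these lifts contained in \(B_m\) satisfies \((m-2)^d\leq c_{\eta'}\leq m^d\). Indeed, a lift meeting the cell \(\Lambda_n+nk\) with \(k\) in the interior index set \(\{1,\ldots,m-2\}^d\) stays inside \(B_m\), because its diameter is at most \(R<n\). Hence
\[
U(\omega_m)=\sum_{\eta'\subset\omega}c_{\eta'}\Phi^{per}_n(\eta'),
\]
and therefore
\[
\bigl|U(\omega_m)-m^dU^{per}_{\Lambda_n}(\omega)\bigr|\leq\bigl(m^d-(m-2)^d\bigr)M .
\]

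Combining the two displays and dividing by \(m^d\) gives \(U^{per}_{\Lambda_n}(\omega)\geq b\#\omega-\bigl(1-(1-2/m)^d\bigr)M\). Letting \(m\to\infty\) yields \eqref{eq:per stable}.

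The main obstacle is the bookkeeping in the lift correspondence. One has to check that distinct lifts in \(B_m\) give distinct subsets of \(\omega_m\), and that every contributing subset arises this way. Both facts rely on \(n>2R\), through the uniqueness of a lift of diameter at most \(R\) up to translation. The rest is routine.
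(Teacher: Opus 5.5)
Your argument is correct, but it takes a different route from the paper's.

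\textbf{The paper's route.} The paper removes one point at a time. Writing \(\omega=\omega'\cup\{x\}\), it uses the torus invariance \eqref{eq:per stationary} to move \(x\) to the origin. It then uses \eqref{eq:per local} with \(\Lambda=\{0\}\) to identify the periodic energy increment with the Euclidean local energy. This gives \(U^{per}_{\Lambda_n}(\omega)=U^{per}_{\Lambda_n}(\omega')+u(0;\tau^{per}_{n,-x}\omega')\geq U^{per}_{\Lambda_n}(\omega')+b\) by \hyperref[cd:U2]{\textbf{(U2)}}, and the bound follows by iterating.

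\textbf{Your route.} You unfold \(\omega\) into \(m^d\) periodic copies and apply the Euclidean bound \eqref{eq:stable} with \(\gamma=\emptyset\). You then absorb the mismatch with \(m^dU^{per}_{\Lambda_n}(\omega)\) into a boundary error of relative size \(1-(1-2/m)^d\). The lift bookkeeping is sound: the projection is injective on sets of diameter \(\leq R\), short lifts are unique up to \(n\mathbb{Z}^d\), and \((m-2)^d\leq c_{\eta'}\leq m^d\) holds once \(m\geq 2\). You use \(n>2R\) exactly where the paper needs it, namely for the well-definedness of \(\Phi^{per}_n\) and for \eqref{eq:per local}.

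\textbf{What each route buys.} Your argument is more general. It uses only the global bound \(U(\zeta)\geq b\#\zeta\), so it shows that plain stability, not just local stability, passes to the periodic energy. The paper's argument is shorter and proves something stronger pointwise. It shows that every torus increment \(U^{per}_{\Lambda_n}(\omega'\cup\{x\})-U^{per}_{\Lambda_n}(\omega')\) is at least \(b\), i.e.\ local stability on the torus. That is the property behind the domination of \(P^{per}_{\Lambda_n}\) by \(P^{Poi(e^{-b})}_{\Lambda_n}\) used in Section~\ref{subsec:properties}. Your averaging argument does not give this increment bound by itself.
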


\begin{proof}
If \(\omega=\emptyset\), we obtain \(U^{per}_{\Lambda_n}(\omega)=0\).
For \(\omega\neq\emptyset\), write \(\omega=\omega'\cup\{x\}\), where \(x\notin\omega'\).
From \eqref{eq:per stationary} and \eqref{eq:per local}, we obtain
\begin{equation}
U^{per}_{\Lambda_n}(\omega)
=U^{per}_{\Lambda_n}(\tau^{per}_{n,-x}\omega' \cup\{0\})
=U^{per}_{\Lambda_n}(\omega')+u(0;\tau^{per}_{n,-x}\omega').
\end{equation}
Applying Condition \hyperref[cd:U2]{\textbf{(U2)}}, we have \(U^{per}_{\Lambda_n}(\omega)\geq U^{per}_{\Lambda_n}(\omega') + b\).
By repeating this procedure, we obtain \eqref{eq:per stable}.
\end{proof}

For the energy function \(U\), we define the distribution of the corresponding Gibbs point process with periodic boundary conditions on \(\mathcal{C}_{\Lambda_n}\) by
\begin{equation}
dP^{per}_{\Lambda_n}=\frac{1}{Z^{per}_{\Lambda_n}}e^{-U^{per}_{\Lambda_n}}dP^{Poi}_{\Lambda_n},
\end{equation}
where \(Z^{per}_{\Lambda_n}\) is the normalization constant given by
\begin{equation}
Z^{per}_{\Lambda_n}=\int_{\mathcal{C}_{\Lambda_n}}e^{-U^{per}_{\Lambda_n}}dP^{Poi}_{\Lambda_n}.
\end{equation}
Note that Lemma \ref{lem:per stable} implies that \(\exp(-|\Lambda_n|)\leq Z^{per}_{\Lambda_n}\leq \exp(|\Lambda_n|(e^{-b}-1))\).
Combining \eqref{eq:per local} with the DLR equation for finite volumes (see \cite{Dereudre}), we see that for any \(\Lambda\subset\Lambda_{n-2R}\), the probability measure \(P^{per}_{\Lambda_n}\) satisfies the following DLR equation: 
\begin{equation}\label{eq:per DLR}
\int_{\mathcal{C}_{\Lambda_n}}f dP^{per}_{\Lambda_n}= \int_{\mathcal{C}_{\Lambda_n}}\int_{\mathcal{C}_{\Lambda}} f(\eta\cup\gamma_{\Lambda^c})P^{Gib}_{\Lambda,\gamma}(d\eta) P^{per}_{\Lambda_n}(d\gamma),
\end{equation}
where \(f\) is a bounded measurable function on \(\mathcal{C}_{\Lambda_n}\). 

Furthermore, since \(P^{Poi}_{\Lambda_n}\) is invariant under \(\tau^{per}_{n,x}\), it follows from \eqref{eq:per stationary} that 
\begin{equation}\label{eq:per GPP stationary}
P^{per}_{\Lambda_n}(\tau^{per}_{n,x}A)=P^{per}_{\Lambda_n}(A)
\end{equation}
for all \(x\in\mathbb{R}^d\) and \(A\in\mathcal{F}_{\Lambda_n}\).

\subsection{Properties of the Gibbs point process}\label{subsec:properties}

In this subsection, we summarize the fundamental properties of the Gibbs point process.

Let \(\mu_d\) denote the percolation threshold of the \(d\)-dimensional Poisson Boolean model with radius \(1/2\):
\begin{equation}\label{eq:def mu_d}
\mu_d = \sup \{\mu>0 \mid \text{\(L(\, \cdot \,)\) has no unbounded connected component \(P^{Poi(\mu)}\)-a.s.} \},
\end{equation}
where \(L(\omega)= \bigcup_{x\in\omega}B(x,1/2)\). It is known that \(0<\mu_d<\infty\) when \(d\geq 2\), and \(\mu_1=\infty\) (see, e.g., \cite{MR}).

We introduce the following condition:
\begin{description}
\item[\phantomsection \textbf{(U3)}\label{cd:U3}] \(R^d e^{-b} < \mu_d\).
\end{description}

Under Conditions \hyperref[cd:U1]{\textbf{(U1)}}--\hyperref[cd:U3]{\textbf{(U3)}}, the corresponding Gibbs point process exists uniquely (see, e.g., \cite{HH, LO}). Its distribution \(P^{Gib}\) is \emph{stationary} (see e.g., \cite{DV}), i.e., \(P^{Gib}(\tau_x A)=P^{Gib}(A)\) for all \(x\in\mathbb{R}^d\) and \(A\in\mathcal{F}\), where \(\tau_x A=\{\tau_x\omega\mid\omega\in A\}\). It is also \emph{ergodic}, meaning that for any \(A\in\mathcal{F}\) such that \(P^{Gib}(A\ominus\tau_x A)=0\) for all \(x\in\mathbb{R}^d\), it holds that \(P^{Gib}(A)\in\{0,1\}\) (see \cite{Preston}), where \(A\ominus B\) denotes the symmetric difference of \(A\) and \(B\). As in the case of the Poisson point process, ergodicity implies that the IDS for the Gibbs point process is \(P^{Gib}\)-almost surely independent of the configuration \(\omega\) (see, e.g., \cite{CL, PF}).

Regarding the comparison with the Poisson point process, the Gibbs point process satisfies a stochastic domination property. We say that a measurable function \(f\) on \(\mathcal{C}\) is \emph{increasing} if \(f(\omega_1)\leq f(\omega_2)\) whenever \(\omega_1\subset\omega_2\). 
Under Conditions \hyperref[cd:U1]{\textbf{(U1)}} and \hyperref[cd:U2]{\textbf{(U2)}}, the measure \(P^{Gib}_{\Lambda_n,\gamma}\) is \emph{stochastically dominated} by \(P^{Poi(e^{-b})}_{\Lambda_n}\) for any \(n>0\) and \(\gamma\in\mathcal{C}\) (see \cite{GK}): for any bounded increasing measurable function \(f\) on \(\mathcal{C}_{\Lambda_n}\),
\begin{equation}\label{eq:bdd dom}
\int_{\mathcal{C}_{\Lambda_n}}f dP^{Gib}_{\Lambda_n,\gamma} \leq \int_{\mathcal{C}_{\Lambda_n}} f dP^{Poi(e^{-b})}_{\Lambda_n}.
\end{equation} 
The periodic measure \(P^{per}_{\Lambda_n}\) also satisfies this domination when \(n>2R\). Furthermore, by the DLR equation \eqref{eq:DLR} and \eqref{eq:bdd dom}, for any bounded increasing measurable function \(f\) on \(\mathcal{C}\), it holds that
\begin{equation}\label{eq:dom}
\begin{split}
\int_{\mathcal{C}}f(\omega_{\Lambda_n}) P^{Gib}(d\omega)&= \int_{\mathcal{C}}\int_{\mathcal{C}_{\Lambda_n}} f(\eta) P^{Gib}_{\Lambda_n,\gamma}(d\eta) P^{Gib}(d\gamma)\\
&\leq \int_{\mathcal{C}}f(\omega_{\Lambda_n}) P^{Poi(e^{-b})}(d\omega).
\end{split}
\end{equation}

Another essential tool in our analysis is the \emph{disagreement coupling}. We define the event \(\Omega_{\Lambda_n}= \{(\omega_1, \omega_2) \in \mathcal{C}_{\Lambda_n}^2 \mid (\omega_1)_{\Lambda_{n/2}}=(\omega_2)_{\Lambda_{n/2}}\}\). The existence of the disagreement coupling is summarized as follows.

\begin{proposition}\label{prop:dac}
Assume Conditions \hyperref[cd:U1]{\textbf{(U1)}}--\hyperref[cd:U3]{\textbf{(U3)}}. Fix \(\gamma_1, \gamma_2 \in\mathcal{C}\) and \(n>0\). There exists a probability measure \(P^{dac}_{\Lambda_n, \gamma_1, \gamma_2}\) on \((\mathcal{C}_{\Lambda_n}^2, \mathcal{F}_{\Lambda_n}^{\otimes 2})\) such that for each \(i=1,2\), the marginal distribution of \(\omega_i\) under \(P^{dac}_{\Lambda_n, \gamma_1, \gamma_2}\) is \(P^{Gib}_{\Lambda_n, \gamma_i}\), and
\begin{equation}\label{eq:dac}
P^{dac}_{\Lambda_n, \gamma_1, \gamma_2}(\mathcal{C}_{\Lambda_n}^2 \setminus \Omega_{\Lambda_n}) \leq Ce^{-n/C},
\end{equation}
where the constant \(C>0\) is independent of \(\gamma_1, \gamma_2\), and \(n\).
\end{proposition}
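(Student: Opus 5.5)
The plan is to use the standard construction of the disagreement coupling by thinning a dominating Poisson process, as in Hofer-Temmel and Houdebert and in Last and Otto, and then to control the disagreement set by subcritical percolation, using \hyperref[cd:U3]{\textbf{(U3)}}. Set \(\lambda=e^{-b}\). Consider the Poisson process \(\Xi\) on \(\Lambda_n\times[0,1]\) whose first coordinate has intensity \(\lambda\) and whose second coordinate (the mark) is uniform. Its projection is \(P^{Poi(\lambda)}_{\Lambda_n}\), which dominates both \(P^{Gib}_{\Lambda_n,\gamma_i}\) by \eqref{eq:bdd dom}. We build \(\omega_1,\omega_2\subset\Xi\) simultaneously by a sequential (Fernández--Ferrari--Garcia type) thinning in which the points of \(\Xi\) are explored in an order that is adapted to the boundary. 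At each step the next point \(x\) is retained in \(\omega_i\) iff its mark is at most \(e^{-u(x;\,\cdot\,)}/\lambda\cdot(\text{ratio of conditional densities})\) computed from \(\gamma_i\) and the already decided part. By \hyperref[cd:U2]{\textbf{(U2)}} these acceptance probabilities are at most \(1\). The key properties to verify are the following. First, each \(\omega_i\) has law \(P^{Gib}_{\Lambda_n,\gamma_i}\); here I would invoke the general result of \cite{HH, LO} rather than re-derive it. Second, the \emph{disagreement property}: every point of \(\omega_1\triangle\omega_2\) is connected to \(\Lambda_n^c\) by a chain of points of \(\Xi\) with consecutive distances at most \(R\). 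This holds because, by \hyperref[cd:U1]{\textbf{(U1)}}(iii), the acceptance decision for \(x\) depends only on the configurations within distance \(R\) of \(x\). If those configurations agree for both processes and contain no influence from \(\gamma_1\neq\gamma_2\) outside \(\Lambda_n\), the same mark gives the same decision.

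Given this, the event \(\mathcal{C}^2_{\Lambda_n}\setminus\Omega_{\Lambda_n}\) implies that some point of \(\Xi\cap\Lambda_{n/2}\) is joined to \(\Lambda_n^c\) through an \(R\)-chain in \(\Xi\). Rescaling space by \(1/R\), the Poisson process of intensity \(\lambda\) becomes one of intensity \(\lambda R^d=R^de^{-b}<\mu_d\), and an \(R\)-chain becomes a connected path in the Boolean model with radius \(1/2\). So the event requires a connected component of that subcritical Boolean model with diameter at least \(n/(4R)\) that meets the rescaled \(\Lambda_{n/2}\).

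The final step uses sharpness of the phase transition for the Poisson Boolean model with bounded radii: below \(\mu_d\), the probability that the origin's cluster reaches distance \(r\) is at most \(C'e^{-r/C'}\). Covering \(\Lambda_{n/2}\) by \(O(n^d)\) unit cells and applying a union bound (the cluster of some point of a cell reaching distance \(\gtrsim n\)) gives a bound \(C''n^de^{-n/C''}\le Ce^{-n/C}\), uniformly in \(\gamma_1,\gamma_2\). For \(d=1\) the bound is elementary, since a chain of length \(\sim n\) needs \(\sim n/R\) consecutive gaps at most \(R\). The bound for small \(n\) is trivial after enlarging \(C\).

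The main obstacle is the construction and the disagreement property of the coupling. Making the thinning order depend on the two boundary conditions while keeping the exact marginals is delicate. I would rely on the existing theorem in \cite{HH} (disagreement percolation for Gibbs point processes with finite-range interactions dominated by Poisson) rather than reprove it, and only carry out the percolation estimate above explicitly.
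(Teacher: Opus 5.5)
Your proposal is correct and takes essentially the same route as the paper. Both arguments invoke the disagreement coupling of \cite{HH, LO}, dominated by \(P^{Poi(e^{-b})}_{\Lambda_n}\), in which every point of \(\omega_1\ominus\omega_2\) is linked to the boundary condition by an \(R\)-chain (equivalently, a connected union of balls of radius \(R/2\)) through the dominating Poisson points; both then bound the disagreement event in \(\Lambda_{n/2}\) by the exponentially small probability of a long cluster in a Boolean model that is subcritical after rescaling by \(1/R\), because \(R^d e^{-b}<\mu_d\) by (U3). Your percolation step is more explicit than the paper's (sharpness of the phase transition, the union bound over \(O(n^d)\) cells, and the trivial cases \(d=1\) and small \(n\)).
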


\begin{proof}
Let \((\omega_1, \omega_2, \omega_3)\) denote the canonical coordinates on \(\mathcal{C}_{\Lambda_n}^3\). 
By standard disagreement coupling arguments (see \cite{HH, LO}), there exists a coupling \(P\) on \(\mathcal{C}_{\Lambda_n}^3\) such that the marginal of \(\omega_i\) is \(P^{Gib}_{\Lambda_n, \gamma_i}\) for \(i=1,2\), and the marginal of \(\omega_3\) is \(P^{Poi(e^{-b})}_{\Lambda_n}\). Furthermore, \(P\)-a.s., for any \(y\in \omega_1 \ominus \omega_2\), there exists \(z\in\gamma_1\cup\gamma_2\) and a subset \(\eta\subset \omega_3 \) such that \(\bigcup _{x\in\eta\cup\{y,z\}}B(x,R/2)\) is connected.
We define \(P^{dac}_{\Lambda_n, \gamma_1, \gamma_2}\) on \(\mathcal{C}_{\Lambda_n}^2\) as the marginal distribution of the first two coordinates \((\omega_1, \omega_2)\) of \(P\). By standard percolation and scaling arguments, under Condition \hyperref[cd:U3]{\textbf{(U3)}}, the probability that a connected component of \(\bigcup_{x \in \omega_3} B(x,R/2)\) intersects both \((\Lambda_{n-2R})^c\) and \(\Lambda_{n/2+2R}\) decays exponentially in \(n\) (see, e.g., \cite{MR}). This yields \(P^{dac}_{\Lambda_n, \gamma_1, \gamma_2}(\mathcal{C}_{\Lambda_n}^2 \setminus \Omega_{\Lambda_n}) \leq C e^{-n/C}\) for some constant \(C>0\).
\end{proof}

\section{Periodic approximation}\label{sec:PA}
In this section, we extend the periodic approximation method to the case where \(\Gamma\) is a Gibbs point process with distribution \(P^{Gib}\).
Let \(\tilde{\pi}_n : \mathcal{C}\to\mathcal{C}\) be the map defined by \(\tilde{\pi}_n(\omega) = \pi_n (\omega_{\Lambda_n})\).
In \cite{KP}, for Poisson point processes, the periodic approximation scheme was established by using the fact that the probability measure \(P^{Poi} \circ\tilde{\pi}_n^{-1}\) on \(\mathcal{C}\) is stationary under translations. However, for a general Gibbs point process, the measure \(P^{Gib}\circ\tilde{\pi}^{-1}_{n}\) is not necessarily stationary. To address this difficulty, we use the periodic Gibbs measure \(P^{per}_{\Lambda_n}\) instead of \(P^{Gib}\circ\tilde{\pi}^{-1}_{n}\).

Throughout this paper, we consider the operator \(H_{\omega}\) introduced in Section \ref{sec:intro} with the single-site potential \(V\) given by \(V=V_1+V_2\), where \(V_1\) and \(V_2\) satisfy the following conditions:
\begin{description}
\item[\textbf{(V1)}]\phantomsection \label{cd:V1} \(|V_1(x)|\leq C\exp(-|x|/C)\) for some constant \(C>0\);
\item[\textbf{(V2)}]\phantomsection \label{cd:V2} \(V_2\in L^{p}(\mathbb{R}^d)\) for some \(p>p(d)\), and \(V_2\) has compact support;
\item[\textbf{(V3)}]\phantomsection \label{cd:V3} \(\essinf V<0\).
\end{description}
Here, \(p(d)\) is defined by
\begin{equation}\label{eq:p(d)}
p(d)=
\begin{cases}
2 & \text{if \(d\leq 3\)},\\
d/2 & \text{if \(d\geq 4\)}.
\end{cases}
\end{equation}

\begin{remark}
The values of \(p(d)\) in Condition \hyperref[cd:V2]{\textbf{(V2)}} are chosen to satisfy the requirements of the general theory of random Schr\"{o}dinger operators. While the original formulation in \cite{KP} allowed for a slightly weaker condition at \(d=3\), the standard regularity requirements presented in the monographs \cite{CL} and \cite{PF} typically assume \(p > 2\) for \(d \leq 3\) to establish the basic properties of the IDS. By adopting this convention, we can directly rely on the general framework provided in these foundational works.
\end{remark}

For every \(n\in\mathbb{Z}_{>0}\) and \(\omega\in\mathcal{C}_{\Lambda_n}\), we define the operator \(H_{\omega,n}\) on \(L^2(\mathbb{R}^d)\) by
\begin{equation}
H_{\omega,n} = -\Delta + V_{\omega,n},\quad \text{where}\quad V_{\omega,n}= \sum_{x\in \pi_n (\omega)} V(\,\cdot-x).
\end{equation}
Let \(N_{\omega, n}(E)\) denote the IDS of the operator \(H_{\omega, n}\), defined in the same manner as \(N(E)\).

Conditions \hyperref[cd:V1]{\textbf{(V1)}} and \hyperref[cd:V2]{\textbf{(V2)}} ensure that the operators \(H_{\omega}\) and \(H_{\omega,n}\) are well-defined as self-adjoint operators \(P^{Poi}\)-almost surely and \(P^{Gib}\)-almost surely (see e.g., \cite{CL, Kirsch, PF, RS2}), provided that the energy function \(U\) satisfies Conditions \hyperref[cd:U1]{\textbf{(U1)}}--\hyperref[cd:U3]{\textbf{(U3)}}.

Our main result in this section is the following theorem, which is a generalization of Lemma 2.3 in \cite{KP}.
\begin{theorem}[Periodic approximation]\label{thm:per app}
Suppose that the single-site potential \(V\) and the energy function \(U\) satisfy Conditions \hyperref[cd:V1]{\textbf{(V1)}}--\hyperref[cd:V3]{\textbf{(V3)}} and \hyperref[cd:U1]{\textbf{(U1)}}--\hyperref[cd:U3]{\textbf{(U3)}}, respectively.
For any \(\nu>0\), there exist constants \(\beta>0\) and \(E_{\nu}>0\) such that for any \(E\geq E_{\nu}\) and any \(n\in\mathbb{Z}_{>0}\) satisfying \(n\geq E^{\beta}\), it holds that
\begin{equation}
\int_{\mathcal{C}_{\Lambda_n}}\!\! N_{\omega,n}(-E-1) \, P^{per}_{\Lambda_n}(d\omega) - e^{-E^{\nu}} 
\leq N(-E)
\leq \int_{\mathcal{C}_{\Lambda_n}}\!\! N_{\omega,n}(-E+1) \, P^{per}_{\Lambda_n}(d\omega) + e^{-E^{\nu}}.
\end{equation}
\end{theorem}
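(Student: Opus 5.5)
The plan follows the proof of Lemma~2.3 in \cite{KP}, with the Gibbs measure inserted through the DLR equations \eqref{eq:DLR} and \eqref{eq:per DLR} and the disagreement coupling of Proposition~\ref{prop:dac}. We first write \(N(-E)\) as a localized trace. By stationarity and ergodicity of \(P^{Gib}\),
\[
N(-E)=\int\tr\bigl(\mathbb{1}_{\Lambda_1}\mathbb{1}_{(-\infty,-E]}(H_\omega)\mathbb{1}_{\Lambda_1}\bigr)\,P^{Gib}(d\omega).
\]
By \eqref{eq:per GPP stationary}, the same holds for the periodic operators:
\[
\int N_{\omega,n}\,dP^{per}_{\Lambda_n}=\int\tr\bigl(\mathbb{1}_{\Lambda_1}\mathbb{1}_{(-\infty,\cdot]}(H_{\omega,n})\mathbb{1}_{\Lambda_1}\bigr)\,P^{per}_{\Lambda_n}(d\omega).
\]
Indeed, \(N_{\omega,n}\) equals \(n^{-d}\) times the trace over \(\Lambda_n\), and averaging over the torus translations reduces it to the unit cube. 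The indicators are then replaced by smooth functions. Take \(\varphi_\pm\) smooth with \(\mathbb{1}_{(-\infty,-E-1]}\le\varphi_-\le\mathbb{1}_{(-\infty,-E]}\le\varphi_+\le\mathbb{1}_{(-\infty,-E+1]}\). It then suffices to show
\[
\Bigl|\int\tr\bigl(\mathbb{1}_{\Lambda_1}\varphi(H_\omega)\mathbb{1}_{\Lambda_1}\bigr)\,dP^{Gib}-\int\tr\bigl(\mathbb{1}_{\Lambda_1}\varphi(H_{\omega,n})\mathbb{1}_{\Lambda_1}\bigr)\,dP^{per}_{\Lambda_n}\Bigr|\le e^{-E^\nu}
\]
for \(\varphi=\varphi_\pm\) and \(n\ge E^\beta\).

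\textbf{Step 1 (probabilistic localization).} Apply \eqref{eq:DLR} with \(\Lambda=\Lambda_m\) and \eqref{eq:per DLR} with \(\Lambda=\Lambda_m\), where \(m\) is comparable to \(n\) and \(\Lambda_m\subset\Lambda_{n-2R}\). Both sides become mixtures over boundary conditions \(\gamma\) of \(P^{Gib}_{\Lambda_m,\gamma}\). Proposition~\ref{prop:dac} couples any two such specifications so that the configurations agree on \(\Lambda_{m/2}\), except on an event of probability \(Ce^{-m/C}\). On this exceptional event we need an \(L^2\)-type moment bound on the traces. Using the domination \eqref{eq:dom} and \eqref{eq:bdd dom} by \(P^{Poi(e^{-b})}\), the arguments of \cite{KP} give \(\tr(\mathbb{1}_{\Lambda_1}\varphi(H)\mathbb{1}_{\Lambda_1})\le C E^{K}\) in expectation, uniformly in \(n\). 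This uses that the number of points in a bounded region has Poisson tails and that \(V_2\in L^p\) gives relative form bounds. The exceptional contribution is therefore at most \(CE^K e^{-m/C}\le e^{-E^\nu}/2\) once \(m\ge E^\beta\) with \(\beta>\nu\).

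\textbf{Step 2 (operator localization).} On the good event, \(\omega\) and the periodic configuration coincide on \(\Lambda_{m/2}\). We must show that \(\tr(\mathbb{1}_{\Lambda_1}\varphi(H)\mathbb{1}_{\Lambda_1})\) depends on the configuration in \(\Lambda_{m/2}^c\) only up to \(e^{-E^\nu}\).
\begin{itemize}
\item The bounded part \(V_1\) has exponential tails by \hyperref[cd:V1]{\textbf{(V1)}}. Points far from the origin therefore contribute a potential near \(\Lambda_{m/4}\) that is exponentially small, provided the local point count is polynomially bounded, which holds off a further small-probability event by domination.
\item The compactly supported part \(V_2\) of far points does not reach \(\Lambda_{m/4}\) at all.
\item We then apply a Combes--Thomas / Helffer--Sj\"ostrand representation of \(\varphi(H)\), using a cutoff \(\chi\) equal to \(1\) on \(\Lambda_{m/8}\) and the geometric resolvent identity. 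The difference of \(\mathbb{1}_{\Lambda_1}\varphi(H)\mathbb{1}_{\Lambda_1}\) for two operators that agree on \(\Lambda_{m/4}\) is bounded in trace norm by \(C\|\varphi\|_{C^k}E^{K}e^{-cm/\sqrt{E}}\) or similar.
\end{itemize}
The Combes--Thomas decay rate is governed by the distance of the spectral window from the spectrum. For a smooth \(\varphi\) that is not analytic, the decay instead comes from almost-analytic extensions and is only polynomial in the imaginary part. A polynomially large \(m\), namely \(m\ge E^\beta\) with \(\beta\) large, still beats \(e^{-E^\nu}\) if we use Gevrey-class cutoffs \(\varphi_\pm\), whose almost-analytic extensions give stretched-exponential decay \(e^{-c m^{\theta}}\).

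\textbf{Main obstacle.} The hard step is the uniform trace-class control in Step 2. The potential is unbounded below, \(\essinf V=-\infty\), and the number of points is random, so \(\inf\sigma(H)\) is not uniformly bounded and the resolvent estimates need a lower bound on \(H\) restricted near \(\Lambda_m\). The plan is to introduce the good event \(\{\#(\omega\cap\Lambda_{2m})\le m^{d+1}\}\), whose complement is superexponentially small under Poisson domination. On this event the \(L^p\) bound with \(p>p(d)\) gives \(H\ge -Cm^{K}\) in form sense, and Combes--Thomas bounds hold with constants polynomial in \(m\). Choosing \(\beta\) large relative to these polynomial losses and to \(\nu\) then closes the argument, and \(E_\nu\) absorbs all the constants.
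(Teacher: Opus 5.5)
Your overall architecture matches the paper's proof. You use the trace formulas from stationarity of $P^{Gib}$ and \eqref{eq:per GPP stationary}, DLR decompositions on $\Lambda_m\subset\Lambda_{n-2R}$ for both measures, and the coupling of Proposition \ref{prop:dac} to separate an agreeing part from an exceptional part of probability $Ce^{-m/C}$. You then use almost-analytic functional calculus with Gevrey-type control to get superpolynomial decay in $n$.

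\textbf{The gap is in the fix for the step you flag as the main obstacle; that fix does not work.} On the event $\{\#(\omega\cap\Lambda_{2m})\le m^{d+1}\}$ all the points may sit in a single cluster. So the only form bound valid on the whole event is $H\ge -M$ near $\Lambda_m$, with $M$ at least of order $E_-(m^{d+1})$, which exceeds $m^{d+1}$ since $E_-(g)/g\to\infty$. For the infinite-volume $H_\omega$ you get no bound at all outside $\Lambda_{2m}$. In a Combes--Thomas estimate $M$ is not a prefactor; it limits the admissible exponential weight to $|\rho|\le c\,|\Im z|/\sqrt{|z|+M}$. Across the distance $\sim m$ between $\Lambda_1$ and the region where the two configurations differ, the exponent is then at most $c\,m|\Im z|/m^{(d+1)/2}=O(1)$, so there is no decay at all. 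The theorem must hold for every $n\ge E^\beta$, with no upper bound on $n$, so losses that grow with $m$ cannot be absorbed by enlarging $\beta$.

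The paper, following \cite{KP}, avoids this with the volume-uniform events $\Omega_k$. On $\Omega_k$ the counts in each $r_0$-cube are below $k(|x|^{(1-p'/p)\alpha}+1)$, and the exceptional probabilities are bounded by $C(e^{-b}r_0^d)^k/k!$ uniformly in $n$; see \eqref{eq:prob Omega_k} and \eqref{eq:prob dac Omega_k}. All resolvent and decay bounds are then expressed through $\eta(z,K)=|\Im z|/(|z|+CK^{p/(p-p')}+C)$, which depends on the level $K$ but not on $n$. Summing over $K$ against the factorial weights is what produces the factor $n^{-l/C}e^{Cl\log l}$ in \eqref{eq:estimate int}.

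\textbf{The same problem affects your treatment of the low-energy tail.} Your $\varphi_\pm$ are identically $1$ near $-\infty$ while the spectrum is unbounded below, so the Helffer--Sj\"ostrand representation needs a truncation. If you use $-Cm^K$ as the floor, the energy window becomes $n$-dependent. The constants in \eqref{eq:estimate int} contain $\sup_x(|x|+C')^{l+C'}|\phi^{(j)}(x)|$, so they then grow like $m^{K(l+C')}$ and swamp the gain $n^{-l/C'}$.

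What is needed instead:
\begin{enumerate}
\item Apply \eqref{eq:estimate int} only to compactly supported Gevrey cutoffs living on a window $[-E'-1,-E+1]$, with $E'$ a fixed power of $E$ independent of $n$.
\item Bound the deep tails $N(-E')$ and $\int N_{\omega,n}(-E')\,P^{per}_{\Lambda_n}(d\omega)$ separately by a priori estimates uniform in $n$. These follow from the domination by $P^{Poi(e^{-b})}$ together with $g(E')\ge (E')^{(2p-d)/(2p)}/C$.
\item Take $l$ of order $E^\nu$ and $\beta$ large; this gives the error $e^{-E^\nu}$.
\end{enumerate}
This is the step the paper delegates to the argument of Lemma 2.3 in \cite{KP}.
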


While the overall structure of the proof is analogous to that of \cite[Lemmas 2.1 and 2.3]{KP}, a key modification lies in the estimation of the difference between \(P^{Gib}_{\Lambda_n,\gamma_1}\) and \(P^{Gib}_{\Lambda_n,\gamma_2}\) (\(\gamma_1\neq \gamma_2\)) by the disagreement coupling (see Section \ref{subsec:properties}).
To avoid redundancy, we follow the steps of the original proofs and focus on the essential changes.

\begin{proof}
Throughout the proof, \(C>1\) denotes a generic sufficiently large constant that may change from line to line.

Let \(C_{0}^{\infty}(\mathbb{R})\) denote the set of all compactly supported smooth functions on \(\mathbb{R}\). For any \(\phi\in C_{0}^{\infty}(\mathbb{R})\), by the stationarity of the Gibbs point process and \cite[Theorem 1.3 and Remark 2.3]{KP}, it holds that
\begin{equation}\label{eq:tr Gib}
\int_{\mathbb{R}}\phi(x) N(dx) = \int_{\mathcal{C}} \tr(\mathbb{1}_{\Lambda_1} \phi(H_{\omega})\mathbb{1}_{\Lambda_1}) P^{Gib}(d\omega),
\end{equation}
where \(\tr(\,\cdot\,)\) denotes the trace of an operator.
By \eqref{eq:per GPP stationary} and the argument in \cite{KP}, we obtain
\begin{equation}\label{eq:tr per}
\int_{\mathcal{C}_{\Lambda_n}} \int_{\mathbb{R}}\phi(x) N_{\omega,n}(dx) P^{per}_{\Lambda_n}(d\omega) = \int_{\mathcal{C}_{\Lambda_n}} \tr(\mathbb{1}_{\Lambda_1} \phi(H_{\omega,n})\mathbb{1}_{\Lambda_1}) P^{per}_{\Lambda_n}(d\omega).
\end{equation}
We set \(f(\omega)=\tr(\mathbb{1}_{\Lambda_1}\phi(H_{\omega,n})\mathbb{1}_{\Lambda_1})\) and \(g(\omega)=\tr(\mathbb{1}_{\Lambda_1}\phi(H_{\omega})\mathbb{1}_{\Lambda_1})\).

For simplicity, we set \(m = n/2\). By \eqref{eq:tr Gib}, \eqref{eq:tr per}, and the DLR equation \eqref{eq:per DLR}, for all sufficiently large \(n>0\), we have
\begin{equation}
\begin{split}
&\left|\int_{\mathcal{C}_{\Lambda_n}} \int_{\mathbb{R}}\phi(x) N_{\omega,n}(dx) P^{per}_{\Lambda_n}(d\omega) -\int_{\mathbb{R}}\phi(x) N(dx)\right| \\
&= \left|\int_{\mathcal{C}_{\Lambda_n}}f(\omega)P_{\Lambda_n}^{per}(d\omega)-\int_{\mathcal{C}}g(\omega)P^{Gib}(d\omega)\right|\\
&= \Bigg|\int_{\mathcal{C}_{\Lambda_n}} \int_{\mathcal{C}_{\Lambda_{m}}}f(\omega_1\cup(\gamma_1)_{\Lambda_{m}^c})P^{Gib}_{\Lambda_{m},\gamma_1}(d\omega_1) \, P^{per}_{\Lambda_n}(d\gamma_1)\\
&\qquad\qquad\qquad\qquad\qquad\qquad- \int_{\mathcal{C}} \int_{\mathcal{C}_{\Lambda_{m}}}g(\omega_2\cup(\gamma_2)_{\Lambda_{m}^c})P^{Gib}_{\Lambda_{m},\gamma_2}(d\omega_2) \, P^{Gib}(d\gamma_2)\Bigg| \\
&\leq \int_{\mathcal{C}}\int_{\mathcal{C}_{\Lambda_n}} \left( \int_{(\mathcal{C}_{\Lambda_{m}})^2} \left| h(\omega_1,\omega_2,\gamma_1,\gamma_2)\right| P^{dac}_{\Lambda_{m},\gamma_1,\gamma_2}(d(\omega_1,\omega_2))\right)P^{per}_{\Lambda_n}(d\gamma_1) \, P^{Gib}(d\gamma_2)\\
&= L_1+L_2,
\end{split}
\end{equation}
where we have used the disagreement coupling introduced in Proposition \ref{prop:dac}.
Here, we set \(h(\omega_1,\omega_2,\gamma_1,\gamma_2)=f(\omega_1\cup(\gamma_1)_{\Lambda_{m}^c})-g(\omega_2\cup(\gamma_2)_{\Lambda_{m}^c})\), and \(L_1\) is the contribution from the event \(\Omega_{\Lambda_{m}}\), given by
\begin{equation}\label{eq:L_1}
L_1= \iint \left( \int_{\Omega_{\Lambda_{m}}} \left|h(\omega_1,\omega_2,\gamma_1,\gamma_2) \right| dP^{dac}_{\Lambda_{m},\gamma_1,\gamma_2} \right)dP^{per}_{\Lambda_n} \, dP^{Gib},
\end{equation}
while \(L_2\) is defined analogously with the inner integral taken over \((\Omega_{\Lambda_{m}})^c\).

We pick \(\alpha\in(0,1)\), \(p'\in(p(d),p)\), and \(r_0>0\) such that \(\supp V_2 \subset \Lambda_{r_0}\), where \(\supp V_2\) is the closure of \(\{x\in\mathbb{R}^d \mid V_2 (x)\neq 0\}\).
For every \(k\in\mathbb{Z}_{\geq 0}\), we define
\begin{equation}
\Omega_k =\{\omega\in\mathcal{C}\mid m(\omega,\Lambda_{r_0}(x))<k(|x|^{(1-p'/p)\alpha}+1)\ \text{for all \(x\in r_0 \mathbb{Z}^d\)}\}.
\end{equation}

By \eqref{eq:bdd dom}, \eqref{eq:dom}, and the Appendix in \cite{KP}, for any \(k, l\in\mathbb{Z}_{>0}\), \(n>0\) such that \(n>2R\), and \(\gamma\in\mathcal{C}\), the following estimates hold:
\begin{equation}\label{eq:prob Omega_k}
P^{Gib}_{\Lambda_n ,\gamma}(\mathcal{C}_{\Lambda_n}\setminus\Omega_k),\ 
P^{per}_{\Lambda_n}(\mathcal{C}_{\Lambda_n}\setminus\Omega_k),\ 
P^{Gib}(\Omega_k^c)\leq C \frac{(e^{-b}r_0^d)^k}{k!},
\end{equation}
and
\begin{equation}\label{eq:prob dac Omega_k}
P^{dac}_{\Lambda_{n},\gamma_1,\gamma_2}(\omega_1\in\Omega_{k}\setminus\Omega_{k-1},\ \omega_2\in\Omega_{l}\setminus\Omega_{l-1})\leq C\frac{(e^{-b}r_0^d)^{\max\{k,l\}}}{(\max\{k,l\})!}.
\end{equation}

We now estimate \(L_1\).
Fix an integer \(q>d/2\) and let \(\tilde{\phi}\) denote an \emph{almost analytic extension} of the function \((i+x)^q \phi(x)\) (see \cite{KP} for its properties). 
Let \(k_1, k_2, l_1, l_2 \in \mathbb{Z}_{> 0}\) and set \(K = \max\{k_1, k_2, l_1, l_2\}\). Consider any configurations \(\gamma_1 \in \Omega_{l_1} \setminus \Omega_{l_1-1}\), \(\gamma_2 \in \Omega_{l_2} \setminus \Omega_{l_2-1}\), \(\omega_1 \in \Omega_{k_1} \setminus \Omega_{k_1 -1}\), and \(\omega_2 \in \Omega_{k_2}\setminus \Omega_{k_2 -1}\). From \cite{KP}, for any \(x\in \mathbb{R}^d\), it follows that
\begin{equation}
\left\|\mathbb{1}_{\Lambda_1 (x)}\left(V_{\omega_1\cup(\gamma_1)_{\Lambda_m^c},n}-V_{\omega_2\cup (\gamma_2)_{\Lambda_m^c}}\right)\mathbb{1}_{\Lambda_1 (x)}\right\|_{\mathcal{L}(H^1,H^{-1})}\leq CK^{p/(p-p')}(1+|x|)^{\alpha},
\end{equation}
where \(\|\cdot \|_{\mathcal{L}(H^1, H^{-1})}\) denotes the operator norm from the Sobolev space \(H^1(\mathbb{R}^d)\) to its dual space \(H^{-1}(\mathbb{R}^d)\).
By \cite{KP}, in addition, if \((\omega_1, \omega_2) \in \Omega_{\Lambda_m}\), that is, \((\omega_1)_{\Lambda_{m/2}} = (\omega_2)_{\Lambda_{m/2}}\), it follows that for any \(x \in \Lambda_{m/4}\),
\begin{equation}
\left\|\mathbb{1}_{\Lambda_1 (x)}\left(V_{\omega_1\cup(\gamma_1)_{\Lambda_m^c},n}-V_{\omega_2\cup (\gamma_2)_{\Lambda_m^c}}\right)\mathbb{1}_{\Lambda_1 (x)}\right\|_{\mathcal{L}(H^1,H^{-1})}\leq CK e^{-n/C}.
\end{equation}

Following the argument in \cite{KP} and using these bounds, we can estimate \(|h(\omega_1, \omega_2, \gamma_1, \gamma_2)|\) as follows:
\begin{equation}\label{eq:h estimate}
\begin{split}
|h(\omega_1,\omega_2,\gamma_1,\gamma_2)| 
&\leq \frac{1}{2\pi}\int_{\mathbb{C}}\left|\frac{\partial\tilde{\phi}}{\partial \overline{z}}\right|
T(z, \omega_1, \omega_2, \gamma_1, \gamma_2)
\, dx\, dy\\
&\leq C\int_{\mathbb{C}}\left|\frac{\partial\tilde{\phi}}{\partial \overline{z}}\right|\,\bigg(\frac{n}{\eta(z,K)}\bigg)^{C} \exp\!\left(-\frac{n^{1/C}}{C}\eta(z,K)^C\right)dx\, dy,
\end{split}
\end{equation}
for sufficiently large \(n>0\).
Here, we set \(z=x+iy\),
\begin{equation}
\eta(z,K)= \frac{|\Im z|}{|z|+CK^{\frac{p}{p-p'}}+C},
\end{equation}
and
\begin{equation}
\begin{split}
&T(z, \omega_1, \omega_2, \gamma_1, \gamma_2)\\
&=\bigg\|\mathbb{1}_{\Lambda_1}\bigg( \left(i+H_{\omega_1\cup(\gamma_1)_{\Lambda_{m}^c},n}\right)^{-q} \left(z-H_{\omega_1\cup(\gamma_1)_{\Lambda_{m}^c},n}\right)^{-1}\\
&\quad\qquad\qquad\qquad\qquad\qquad- \left(i+H_{\omega_2\cup(\gamma_2)_{\Lambda_{m}^c}}\right)^{-q} \left(z-H_{\omega_2\cup(\gamma_2)_{\Lambda_{m}^c}}\right)^{-1} \bigg)\mathbb{1}_{\Lambda_1}\bigg\|_{\mathrm{tr}},
\end{split}
\end{equation}
where \(\|\cdot\|_{\mathrm{tr}}\) denotes the trace norm.
By substituting \eqref{eq:h estimate} into \eqref{eq:L_1}, taking the summation over \(k_1,k_2,l_1,l_2\), and applying the bounds in \eqref{eq:prob Omega_k} and \eqref{eq:prob dac Omega_k}, we obtain
\begin{equation}\label{eq:L1 estimate}
L_1 \leq C n^{-l/C}e^{Cl\log l} \int_{\mathbb{C}} \bigg|\frac{\partial\tilde{\phi}}{\partial\overline{z}}\bigg| \left(\frac{|x|+C}{|y|}\right)^{C+l} dx\, dy
\end{equation}
for any \(l\in\mathbb{Z}_{>0}\). Here, we used the fact that \(\tilde{\phi}(z)=0\) if \(|\Im z|\geq 1\). The details of the calculation are provided in \cite{KP}.

To estimate \(L_2\), we use Fubini's theorem and the Cauchy-Schwarz inequality to obtain
\begin{equation}
\begin{split}
L_2
&\leq \frac{1}{2\pi} \iint \left( P^{dac}_{\Lambda_m, \gamma_1, \gamma_2} ((\mathcal{C}_{\Lambda_m})^2 \setminus \Omega_{\Lambda_m}) \right)^{1/2} \\
&\qquad\qquad \times \left( \int_{\mathbb{C}} 
\bigg|\frac{\partial \tilde{\phi}}{\partial \overline{z}}\bigg| 
\Bigg( \int_{(\mathcal{C}_{\Lambda_m})^2 } 
T^2 \, dP^{dac}_{\Lambda_m, \gamma_1, \gamma_2}\Bigg)^{1/2} dx\, dy \right) dP^{per}_{\Lambda_n}\,  dP^{Gib}.
\end{split}
\end{equation}
Fix \(l_1,l_2,k_1,k_2\in\mathbb{Z}_{>0}\) and any configurations \(\gamma_1\in\Omega_{l_1}\setminus\Omega_{l_1-1}\), \(\gamma_2\in \Omega_{l_2}\setminus\Omega_{l_2-1}\), and \((\omega_1,\omega_2)\in(\mathcal{C}_{\Lambda_m})^2\) satisfying \(\omega_j \in \Omega_{k_j} \setminus \Omega_{k_j-1}\) for \(j=1,2\).
Following an argument similar to that in \cite{KP}, we obtain
\begin{equation}
T(z, \omega_1, \omega_2, \gamma_1, \gamma_2)^2
\leq C\eta(z,K)^{-C} 
\end{equation}

Similarly to the estimation of \(L_1\), we obtain
\begin{equation}\label{eq:L2 estimate}
L_2 \leq C e^{-n/C}\int_{\mathbb{C}} \bigg|\frac{\partial\tilde{\phi}}{\partial\overline{z}}(z)\bigg| \left(\frac{|x|+C}{|y|}\right)^{C} dx\, dy,
\end{equation}
where we used the bound \eqref{eq:dac}.

Combining \eqref{eq:L1 estimate}, \eqref{eq:L2 estimate}, and the estimation in \cite{KP} concerning the almost analytic extension \(\tilde{\phi}\), we obtain
\begin{equation}\label{eq:estimate int}
\begin{split}
&\left|\int_{\mathcal{C}_{\Lambda_n}} \int_{\mathbb{R}}\phi(x) N_{\omega,n}(dx) P^{per}_{\Lambda_n}(d\omega) -\int_{\mathbb{R}}\phi(x) N(dx)\right|\\
&\leq C'e^{C'l\log l}\left(n^{-l/C'}+e^{-n/C'}\right)\sup_{\substack{0\leq j< l+C' \\ x\in \mathbb{R}}} \left|(|x|+C')^{l+C'}\frac{d^j}{dx^j}\phi(x)\right|
\end{split}
\end{equation}
for all \(l,n\in\mathbb{Z}_{>0}\) and some constant \(C'>0\) independent of \(l,n\) and \(\phi\).
Finally, Theorem \ref{thm:per app} follows from \eqref{eq:estimate int} by an argument similar to that in the proof of Lemma 2.3 in \cite{KP}.
\end{proof}

\section{Weak interaction}\label{sec:weak}
In this section, we consider a Gibbs point process with an energy function \(U\) satisfying the following condition:
\begin{description}
\item [\phantomsection \textbf{(W)}\label{cd:W}]
there exists \(\alpha>0\) such that
\begin{equation}\label{eq:weak}
\varlimsup_{k\to+\infty} \frac{1}{k\log k}\,\sup_{\omega\in \Omega_k^{\alpha}} U(\omega) \leq 0,
\end{equation}
where
\begin{equation}
\Omega_k^{\alpha}= \{\omega\in\mathcal{C}\mid \#\omega =\lfloor k\rfloor,\ \omega\subset \Lambda_{k^{-\alpha}}\}.
\end{equation}
\end{description}
Here, \(\lfloor k \rfloor\) denotes the greatest integer less than or equal to \(k\).
Condition \hyperref[cd:W]{\textbf{(W)}} implies that the interaction among many points in a small region is sufficiently weak.
Let \(\tilde{\mathcal{A}}\) denote the set of all \(\alpha>0\) for which \eqref{eq:weak} holds, and set \(\tilde{\alpha}=\inf \tilde{\mathcal{A}}\).

To state the asymptotic behavior of \(\log N(-E)\) as \(E\to+\infty\), we introduce some notation.
By \cite{CL, RS4}, for a potential \(V\) satisfying Conditions \hyperref[cd:V1]{\textbf{(V1)}}--\hyperref[cd:V3]{\textbf{(V3)}}, for all sufficiently large \(g>0\), the infimum of the spectrum of \(H(g)=-\Delta+gV\) is a discrete eigenvalue. We denote this eigenvalue by \(E(g)\).
For convenience, we set \(E_{-}(g)=-E(g)\). The variational principle and a simple calculation show that \(E_{-}\) is a strictly increasing positive function on \((g_0, +\infty)\) for some sufficiently large \(g_0>0\). Thus, we can define \(g(E)\) as the inverse function of \(E_{-}\) for sufficiently large \(E>0\).

An \emph{asymptotic ground state} of \(H(g)\) is a map \((1,+\infty)\ni g \mapsto \psi_g \in H^{1}(\mathbb{R}^d)\) satisfying the following conditions:
\begin{itemize}
\item \(\|\psi_g\|=1\) for all \(g>1\);
\item there exist \(g_0>1\) and \(l_0>0\) such that \(\supp \psi_g \subset \Lambda_{l_0}\) for all \(g\geq g_0\);
\item the following limit holds:
\begin{equation}
\lim_{g\to+\infty}\frac{\langle (H(g)-E(g))\psi_g, \psi_g \rangle}{E(g)}=0.
\end{equation}
\end{itemize}
Here, \(\langle \cdot, \cdot \rangle\) and \(\|\cdot\|\) denote the standard inner product and the corresponding norm on \(L^2(\mathbb{R}^d)\), respectively.
Let \(\mathcal{V}\) denote the set of all asymptotic ground states; see \cite{KP} for their detailed properties. 
We define the constant \(\alpha^{*}\) by
\begin{equation}
\alpha^{*} = \inf_{\psi_g \in \mathcal{V}} \left( \inf \bigg\{ \alpha>0 \;\bigg\vert\; \lim_{g\to +\infty} \sup_{|y|\leq g^{-\alpha}}\left|\frac{g\langle(\tau_y V-V)\psi_g, \psi_g\rangle}{E(g)}\right|=0\bigg\}\right),
\end{equation}
where for \(y\in\mathbb{R}^d\), we define \(\tau_y V(x)=V(x-y)\).

Before stating our main results for Gibbs point processes, let us recall the results for the case of Poisson point processes.
If \(V\) satisfies Conditions \hyperref[cd:V1]{\textbf{(V1)}}--\hyperref[cd:V3]{\textbf{(V3)}}, and \(\Gamma\) is a Poisson point process with intensity \(\mu\), it is known that the IDS satisfies
\begin{equation}\label{eq:KP}
\begin{split}
&\log N(-E)\geq -(1+d\alpha^{*})g(E)\log g(E) (1+o(1)),\\
&\log N(-E)\leq -g(E)\log g(E) (1+o(1)),
\end{split}
\end{equation}
as \(E\to+\infty\).
These estimates were established in \cite{KP} using the periodic approximation method, which is also the primary tool in this paper.

In this section, we establish the following theorem.
\begin{theorem}\label{thm:weak}
Let the single-site potential \(V\) satisfy Conditions \hyperref[cd:V1]{\textbf{(V1)}}--\hyperref[cd:V3]{\textbf{(V3)}}.
Suppose that \(\Gamma\) is a Gibbs point process for an energy function \(U\) satisfying Conditions \hyperref[cd:U1]{\textbf{(U1)}}--\hyperref[cd:U3]{\textbf{(U3)}} and \hyperref[cd:W]{\textbf{(W)}}. Then, the corresponding IDS satisfies
\begin{align}
\log N(-E)& \geq -(1+d\max\{\tilde{\alpha},\alpha^{*}\})g(E)\log g(E)(1+o(1)),\label{eq:weak lower}\\
\log N(-E)& \leq -g(E)\log g(E)(1+o(1)),\label{eq:weak upper}
\end{align}
as \(E\to+\infty\).
In particular, if \(\tilde{\alpha}\leq \alpha^{*}\), then the IDS satisfies \eqref{eq:KP}.
\end{theorem}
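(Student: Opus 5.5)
Both bounds will come from Theorem~\ref{thm:per app} applied with $n=\lceil E^{\beta}\rceil$. Since $g(E)\log g(E)$ grows at most polynomially in $E$, the error term $e^{-E^{\nu}}$ is negligible once $\nu$ is chosen large enough. It therefore suffices to bound $\int N_{\omega,n}(-E\pm1)\,P^{per}_{\Lambda_n}(d\omega)$ from above and below, and the shift $E\mapsto E\pm1$ does not affect $g(E)\log g(E)(1+o(1))$.

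\emph{Upper bound \eqref{eq:weak upper}.} I would use stochastic domination. Let $\omega$ be a configuration and $k$ the maximal number of its points that can contribute significantly to the potential near a single location, in the sense of the localization used by Kirsch--Pastur. The ground state energy of $H_{\omega,n}$ below $-E$ then requires, up to $(1+o(1))$ factors, that some region of size $O(1)$ contain at least about $g(E)$ points. The reasoning is that a periodic potential with local multiplicity at most $k$ has spectral bottom at least about $-E_-(k(1+o(1)))$. This is exactly the deterministic step in \cite{KP}, and I would reuse it verbatim.

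The probabilistic input is then increasing: the indicator that some box $\Lambda_{r}(x)$ contains at least $k$ points is an increasing event. By the periodic analogue of \eqref{eq:bdd dom}, $P^{per}_{\Lambda_n}$ is dominated by $P^{Poi(e^{-b})}_{\Lambda_n}$. A Poisson process with intensity $e^{-b}$ puts at least $k$ points in a fixed box with probability at most $(Ce^{-b})^{k}/k!=\exp(-k\log k(1+o(1)))$. We also have $N_{\omega,n}(-E+1)\le C n^{d}\,|E|^{C}\,\mathbb 1\{\text{event}\}$, by the standard eigenvalue-counting bound, which is polynomial in $E$ and harmless. A union bound over $n^d=E^{O(1)}$ boxes then gives $\log N(-E)\le -g(E)\log g(E)(1+o(1))$.

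\emph{Lower bound \eqref{eq:weak lower}.} Fix $\alpha>\max\{\tilde\alpha,\alpha^*\}$ with $\alpha\in\tilde{\mathcal A}$ (possible since $\tilde{\mathcal A}$ is upward closed in the relevant sense; if not, I would take $\alpha$ close to $\tilde\alpha$ in $\tilde{\mathcal A}$). Put $k=g(E)(1+\delta)$ and consider the event $A$ that exactly $\lfloor k\rfloor$ points lie in $\Lambda_{k^{-\alpha}}$, with a suitable empty neighbourhood if needed. On $A$, choose an asymptotic ground state $\psi_g$ realizing $\alpha^*$ up to $\epsilon$, and use it as a test function. Since all points lie within distance $k^{-\alpha}\le g^{-\alpha}$ of the origin, the definition of $\alpha^*$ gives $\langle H_{\omega,n}\psi,\psi\rangle\le E(k)(1-o(1))+C$. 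Here the remaining points contribute only a bounded (or controlled, via (V1) and $\Omega_k$-type bounds) error. Hence the ground state lies below $-E-1$. By periodicity this yields $N_{\omega,n}(-E-1)\ge n^{-d}$.

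It remains to bound $P^{per}_{\Lambda_n}(A)$ from below. By the DLR equation \eqref{eq:per DLR} with $\Lambda=\Lambda_1$,
\[
P^{per}_{\Lambda_n}(A)=\int P^{Gib}_{\Lambda_1,\gamma}(A)\,P^{per}_{\Lambda_n}(d\gamma),
\]
and
\[
P^{Gib}_{\Lambda_1,\gamma}(A)\ge Z_{\Lambda_1,\gamma}^{-1}\,e^{-1}\frac{(k^{-\alpha d})^{\lfloor k\rfloor}}{\lfloor k\rfloor!}\,\inf e^{-U_{\Lambda_1,\gamma}}.
\]
Here $Z_{\Lambda_1,\gamma}\le e^{e^{-b}}$ by stability. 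The energy $U_{\Lambda_1,\gamma}(\omega)$ splits into the internal part $U(\omega)\le o(k\log k)$, by (W), and boundary interactions with $\gamma$. The boundary term is the main obstacle, since $\Phi$ need not be bounded above. I would handle it by restricting to boundary conditions $\gamma$ that have no points in $\Lambda_{1+2R}\setminus\Lambda_1$ (or simply by applying DLR on a larger box $\Lambda_{1+2R}$ and requiring emptiness of the annulus). The cost of that emptiness is a constant factor $e^{-C}$, bounded below via stability and \eqref{eq:stable}. With this restriction the only energy is $U(\omega)$. Collecting factors,
\[
\log N(-E)\ge -\alpha d\,k\log k-k\log k-o(k\log k)=-(1+d\alpha)g(E)\log g(E)(1+o(1)).
\]
Letting $\alpha\downarrow\max\{\tilde\alpha,\alpha^*\}$ and $\delta\to0$ yields \eqref{eq:weak lower}. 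The final claim is immediate, since if $\tilde\alpha\le\alpha^*$ the two bounds coincide with \eqref{eq:KP}.
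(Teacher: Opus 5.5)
Your proposal is correct and follows essentially the paper's route. The upper bound comes from stochastic domination of $P^{per}_{\Lambda_n}$ by $P^{Poi(e^{-b})}_{\Lambda_n}$ plus the deterministic localization step of \cite{KP}; the lower bound comes from a single cluster of $\lfloor k\rfloor$ points in $\Lambda_{k^{-\alpha}}$ inside an empty box, so that the DLR equation sees no boundary interaction and \textbf{(W)} bounds $U$; and both are transferred to $N(-E)$ by Theorem~\ref{thm:per app}.

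The details you leave implicit are handled in the paper as follows. Your bound $N_{\omega,n}(-E+1)\le Cn^dE^C$ holds only on a density-controlled event, which the paper takes to be $\Omega^2_E$ together with the analogue of Lemma~4.1 of \cite{KP}. The empty box is taken to be $\Lambda_l$ with $l=\lfloor|\log E|^{\beta}\rfloor\to\infty$, so that the $V_1$-tails of the remaining points are negligible.
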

Note that a Poisson point process with intensity \(\mu>0\) can be regarded as the Gibbs point process with the energy function \(U(\omega)=-\#\omega \log \mu\). In this sense, Theorem \ref{thm:weak} can be viewed as a generalization of the result for Poisson point processes established in \eqref{eq:KP}.

The following are examples of Gibbs point processes that satisfy the assumptions of Theorem \ref{thm:weak} and the condition \(\tilde{\alpha}\leq \alpha^{*}\). 
\begin{example}[Bounded local energy function]
Suppose that an energy function \(U\) satisfies Conditions \hyperref[cd:U1]{\textbf{(U1)}}--\hyperref[cd:U3]{\textbf{(U3)}}. If its local energy function is bounded, then Condition \hyperref[cd:W]{\textbf{(W)}} is satisfied and \(\tilde{\alpha}=0\). Hence, the condition \(\tilde{\alpha}\leq \alpha^{*}\) holds regardless of the value of \(\alpha^{*}\).
An example of such an energy function is the \emph{area energy function}, defined by
\begin{equation}
U(\omega)=\left| \bigcup_{x\in \omega} B(x,R/2) \right|,
\end{equation}
where \(0<R< \mu_{d}^{1/d}\) (\(\mu_d\) is defined in Section \ref{subsec:properties}).
Its local energy function \(u\) satisfies \(0\leq u\leq |B(0,R/2)|\), and its interaction function is given by
\begin{equation}
\Phi(\omega)= (-1)^{\# \omega+1} \left| \bigcap_{x\in \omega}B(x,R/2) \right|.
\end{equation}
\end{example}

\begin{example}[Pairwise energy function]\label{ex:pairwise}
An energy function \(U\) of the form
\begin{equation}\label{eq:pairwise}
U(\omega)= \frac{1}{2}\,\sum_{\substack{x,y\in\omega \\ x\neq y}}\varphi(x-y) -\# \omega \log z
\end{equation}
is called a \emph{pairwise energy function}. Here, \(\varphi:\mathbb{R}^d \to [0,\infty) \cup\{+\infty\}\) is a compactly supported measurable function that is symmetric (i.e., \(\varphi(x)=\varphi(-x)\) for all \(x\)), and \(z>0\) is a constant referred to as the \emph{activity}.
If \(\supp  \varphi \subset B(0,R)\) and \(0<zR^d< \mu_{d}\), and if \(\varphi\) satisfies
\begin{equation}
\varphi(x)=
\begin{cases}
O(|x|^{1/\alpha^{*}})\quad(x\to 0), & \text{if \(\alpha^{*}>0\)},\\
O(\exp(-|x|^{-t}))\quad(x\to 0), & \text{for some \(t>0\) if \(\alpha^{*}=0\)},
\end{cases}
\end{equation}
then this energy function satisfies Conditions \hyperref[cd:U1]{\textbf{(U1)}}--\hyperref[cd:U3]{\textbf{(U3)}}, \hyperref[cd:W]{\textbf{(W)}}, and the condition \(\tilde{\alpha}\leq \alpha^{*}\).
\end{example}

We next consider the asymptotic behavior of the IDS in the case where the single-site potential \(V\) satisfies Condition \hyperref[cd:V1]{\textbf{(V1)}} and the following condition:
\begin{description}
\item[\phantomsection \textbf{(V4)}\label{cd:V4}] For some \(q\in\mathbb{Z}_{>0}\), the function \(V_2\) is of the form
\begin{equation}
V_2= \sum_{i=1}^{q} \tau_{y_i}(W_i v_i),
\end{equation}
where each \(v_i\) is defined by
\begin{equation}
v_i(x)=h_i \bigg(\frac{x}{|x|} \bigg) |x|^{-\nu_i}\quad (i=1,\ldots, q).
\end{equation}
Furthermore, we assume that \(\min_{1\leq i\leq q} E_{v_i}<0\), where \(E_{v_i}\) denotes the minimum eigenvalue of the operator \(-\Delta+v_i\).
\end{description}
Here, \(y_1,\ldots, y_q\) are distinct points, the functions \(h_i\) are continuous on the unit sphere \(\mathbb{S}^{d-1}\), and \(W_i\) are compactly supported continuous functions on \(\mathbb{R}^d\) such that \(W_i(0)=1\). Moreover, the exponents \(\nu_i\) satisfy \(0 < \nu_i < \min\{d/2, 2\}\).
Note that \(-\Delta + g v_i\) \((g>0)\) are well-defined self-adjoint operators, as is the operator \(H(g)\).
We define the following constants:
\begin{gather}
\nu^{\dag} = \max \{ \nu_i \mid E_{v_i} < 0 \},\\
E_{0} = \min \{ E_{v_i} \mid \nu_i = \nu^{\dag} \}.
\end{gather}
When \(\Gamma\) is a Poisson point process and the single-site potential \(V\) satisfies Conditions \hyperref[cd:V1]{\textbf{(V1)}} and \hyperref[cd:V4]{\textbf{(V4)}}, it is known that
\begin{equation}\label{eq:weak V4 IDS}
\log N(-E) \sim -\left( 1+\frac{d-\nu^{\dag}}{2} \right) \left( \frac{E}{|E_0|} \right)^{1-\nu^{\dag} /2} \log E \quad (E\to +\infty).
\end{equation}
This result was proved in \cite{KP}.
From \cite{KP}, we remark that for this single-site potential, it follows that
\begin{equation}\label{eq:V4 alpha}
\alpha^{*}=\frac{1}{2-\nu^{\dag}}.
\end{equation}

Under the same assumptions on \(\Gamma\) as in Theorem \ref{thm:weak}, we obtain the following asymptotic behavior.
\begin{theorem}\label{thm:weak V4}
Suppose that a Gibbs point process \(\Gamma\) satisfies 
Conditions \hyperref[cd:U1]{\textbf{(U1)}}--\hyperref[cd:U3]{\textbf{(U3)}}, and \hyperref[cd:W]{\textbf{(W)}}. Furthermore, we assume that the single-site potential \(V\) satisfies Conditions \hyperref[cd:V1]{\textbf{(V1)}} and \hyperref[cd:V4]{\textbf{(V4)}}. Then, in addition to the lower estimate \eqref{eq:weak lower}, the following upper estimate holds:
\begin{equation}
\log N(-E) \leq -\left( 1+\frac{d-\nu^{\dag}}{2} \right) \left( \frac{E}{|E_0|} \right)^{1-\nu^{\dag} /2} \log E \,(1+o(1))\quad (E\to +\infty).
\end{equation} 
In particular, if \(\tilde{\alpha}\leq 1/(2-\nu^{\dag})\) holds, the corresponding IDS \(N(E)\) satisfies the asymptotic behavior \eqref{eq:weak V4 IDS}.
\end{theorem}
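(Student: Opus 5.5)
The lower estimate \eqref{eq:weak lower} is already contained in Theorem \ref{thm:weak}, because Conditions \hyperref[cd:V1]{\textbf{(V1)}} and \hyperref[cd:V4]{\textbf{(V4)}} imply \hyperref[cd:V1]{\textbf{(V1)}}--\hyperref[cd:V3]{\textbf{(V3)}}. The last assertion then follows from the upper estimate together with \eqref{eq:V4 alpha}. When \(\tilde{\alpha}\leq 1/(2-\nu^{\dag})=\alpha^{*}\), the lower constant is \(1+d/(2-\nu^{\dag})\). Since \(g(E)\sim (E/|E_0|)^{1-\nu^{\dag}/2}\) and \(\log g(E)\sim (1-\nu^{\dag}/2)\log E\), we get
\[
\Bigl(1+\tfrac{d}{2-\nu^{\dag}}\Bigr)g(E)\log g(E)\sim\Bigl(1+\tfrac{d-\nu^{\dag}}{2}\Bigr)\Bigl(\tfrac{E}{|E_0|}\Bigr)^{1-\nu^{\dag}/2}\log E,
\]
which matches the upper bound. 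The asymptotics of \(g\) come from scaling: \(-\Delta+gv_i\) is unitarily equivalent to \(g^{2/(2-\nu_i)}(-\Delta+v_i)\). So the real work is the upper estimate.

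For the upper bound I will follow the Poisson argument of \cite{KP} and replace the Poisson law by the periodic Gibbs measure. By Theorem \ref{thm:per app} with some \(\nu>1\) and \(n=\lceil E^{\beta}\rceil\), it suffices to bound \(\int N_{\omega,n}(-E+1)\,P^{per}_{\Lambda_n}(d\omega)\). Since \(N_{\omega,n}(-E+1)\) is bounded by \(n^{-d}\) times the number of eigenvalues below \(-E+1\) of the periodic operator on the torus, it is at most a polynomial in \(E\) (by Cwikel--Lieb--Rozenblum-type bounds on \(\Omega_k\)) times the indicator that \(H_{\omega,n}\) has spectrum below \(-E+1\). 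The deterministic input from \cite{KP} is as follows: if the torus operator has an eigenvalue \(\le -E+1\), then some cube of bounded side \(r\) must contain at least \(k(E)\) points. Here \(k(E)\) is determined through the singular parts \(v_i\): \(k(E)\geq (1-\varepsilon)(E/|E_0|)^{1-\nu^{\dag}/2}\). In addition, a fraction of those points must lie in a ball of radius \(\rho(E)=E^{-1/2-\varepsilon}\)-ish around a common location, namely the concentration forced by the singularity of order \(\nu^{\dag}\). It is this concentration that produces the extra factor \(\frac{d-\nu^{\dag}}{2}\) in the logarithm. I will reuse the purely operator-theoretic part of \cite{KP} verbatim, so that the only probabilistic quantity needed is the probability of such a clustered configuration.

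To bound that probability under \(P^{per}_{\Lambda_n}\), I use stochastic domination \eqref{eq:bdd dom}, which also holds for \(P^{per}_{\Lambda_n}\) by the remark after \eqref{eq:bdd dom}. The event ``there are \(k\) points in some small ball of radius \(\rho\) centered in \(\Lambda_n\)'' is increasing. Hence its \(P^{per}_{\Lambda_n}\)-probability is at most the \(P^{Poi(e^{-b})}_{\Lambda_n}\)-probability. A union bound over \(O(n^d\rho^{-d})\) centers then gives
\[
C n^d\rho^{-d}\,\frac{(e^{-b}|B(0,\rho)|)^{k}}{k!}=\exp\bigl(-(1+o(1))\,k\log(k\rho^{-d})\bigr).
\]
So the upper bound for the Gibbs process is obtained from the Poisson one with intensity changed to \(e^{-b}\), which does not affect the leading term. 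The polynomial factors in \(E\) and \(n^d=E^{\beta d}\), as well as \(e^{-E^{\nu}}\), are negligible against \(\exp(-cE^{1-\nu^{\dag}/2}\log E)\) once \(\nu>1\).

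The main obstacle is checking that the event extracted from \cite{KP}'s upper bound argument is genuinely increasing, or at least dominated by an increasing event, so that domination applies. Their argument may instead use independence of the Poisson counts in disjoint cells, for example by combining contributions from several wells \(y_i\) or by tail sums over \(\Omega_k\). If so, I would replace each such independence step by an increasing-event bound: count the total number of points in a union of small balls and use \eqref{eq:bdd dom}. This costs at most constant factors in \(k\), which are \(e^{O(k)}\) and hence lower order than \(e^{k\log k}\).
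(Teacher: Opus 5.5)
Your proposal follows the paper's proof. The lower estimate and the final matching come from Theorem~\ref{thm:weak} with \eqref{eq:V4 alpha}, and your computation \((1+d\alpha^*)g(E)\log g(E)\sim(1+\frac{d-\nu^{\dag}}{2})(E/|E_0|)^{1-\nu^{\dag}/2}\log E\) is right. The upper estimate is the argument of \cite{KP} (Lemmas 4.4--4.5, Proposition 4.4 and Theorem 1.6), with the Poisson law replaced by \(P^{per}_{\Lambda_n}\) and the probability bounds transferred by stochastic domination by \(P^{Poi(e^{-b})}_{\Lambda_n}\), exactly as you plan.

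One slip in your paraphrase of the deterministic input: the concentration it can force is at a scale slightly \emph{larger} than the ground-state scale \(g(E)^{-\alpha^*}\asymp E^{-1/2}\), not \(E^{-1/2-\veps}\), which is the scale used for the lower bound. Indeed, \((1+\delta)g(E)\) points spread over a ball of radius \(cE^{-1/2}\), with \(c=c(\delta)\) small, already give energy \(\le -E\). This does not affect the leading constant after \(\veps\to0\).
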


We now prove Theorem \ref{thm:weak} and Theorem \ref{thm:weak V4}.

\begin{proof}[Proof of Theorem \ref{thm:weak}]
Choose sufficiently large \(\rho, \beta>1\), and fix \(0<\veps<1\) and \(0<t<1\). We set \(n=2\lfloor E^{\beta}\rfloor \lfloor |\log E|^{\beta}\rfloor\), \(l=\lfloor |\log E|^{\beta} \rfloor \), and \(\alpha=\max\{\tilde{\alpha}, \alpha^{*}\}\).
To simplify the notation, we set \(\Lambda_{k,\veps} = \Lambda_{k^{-(1+\veps)\alpha}}\).
We define
\begin{equation}
\Omega^{1}_{k,E}= \{ \omega\in\mathcal{C} \mid m(\omega,\Lambda_{k,\veps})=\lfloor k \rfloor,\ m(\omega,\Lambda_l \setminus \Lambda_{k,\veps})=0 \},
\end{equation}
and
\begin{equation}\label{eq:Omega2}
\Omega^2_E=\{\omega\in\mathcal{C}\mid \text{\(m(\omega,\Lambda_{r_0}(x))<E^{\rho}(|x|^t+1)\) for any \(x\in r_0\mathbb{Z}^d\)}\},
\end{equation}
where we choose \(r_0\) such that \(\supp V_2\subset \Lambda_{r_0}\).

Let us first consider the lower bound \eqref{eq:weak lower}.
Let \(\Omega_E=\Omega^1_{k,E}\cap\Omega^2_{E}\cap\mathcal{C}_{\Lambda_n}\).
We note that
\begin{equation}\label{eq:weak lower1}
P^{per}_{\Lambda_n}(\Omega_E) \geq P^{per}_{\Lambda_n}(\Omega^1_{k,E}\cap\mathcal{C}_{\Lambda_n})- P^{per}_{\Lambda_n}((\Omega^2_E)^c \cap \mathcal{C}_{\Lambda_n}).
\end{equation}
Since \(P^{Poi(e^{-b})}_{\Lambda_n}\) stochastically dominates \(P^{per}_{\Lambda_n}\), the estimate \cite[Eq.~(5.2)]{KP} implies that for some \(C>0\),
\begin{equation}\label{eq:weak lower2}
P^{per}_{\Lambda_n}((\Omega^2_E)^c \cap \mathcal{C}_{\Lambda_n})
\leq C \frac{(e^{-b} r_0^d)^{\lfloor E^{\rho}\rfloor}}{\lfloor E^{\rho}\rfloor ! }
\leq \frac{1}{\lfloor E^{\rho}/2\rfloor !}
\end{equation}
for sufficiently large \(E>0\).

From the DLR equation \eqref{eq:per DLR}, for large \(E>0\), we obtain
\begin{equation}\label{eq:weak Omega}
\begin{split}
&P^{per}_{\Lambda_n}(\Omega^1_{k,E}\cap\mathcal{C}_{\Lambda_n})\\
&= \int_{\mathcal{C}_{\Lambda_n}}\frac{1}{Z_{\Lambda_l, \gamma}} \int_{\Omega^{1}_{k,E}\cap\mathcal{C}_{\Lambda_l}}e^{-U(\eta)} \, P^{Poi}_{\Lambda_l}(d\eta) \, P^{per}_{\Lambda_n}(d\gamma)\\
&\geq \frac{|\Lambda_{k,\veps}|^{\lfloor k\rfloor}}{\lfloor k\rfloor !} \exp\Bigg(-|\Lambda_l| e^{-b} -\sup_{\substack{\# \omega = \lfloor k\rfloor \\ \omega\in\mathcal{C}_{\Lambda_{k,\veps}}}} U(\omega)\Bigg),
\end{split}
\end{equation}
where we have used the estimate
\begin{equation}
Z_{\Lambda_l, \gamma} \leq \exp(|\Lambda_l|(e^{-b}-1)),
\end{equation}
and the fact that \(U_{\Lambda_l,\gamma}(\omega)= U(\omega)\) holds for all \(\omega\in\Omega^1_{k,E}\cap\mathcal{C}_{\Lambda_n}\) since the distance between \(\Lambda_{k,\veps}\) and \(\Lambda_l^c\) is greater than the interaction range \(R\) for sufficiently large \(E>0\). For any \(\veps'>0\), Condition \hyperref[cd:W]{\textbf{(W)}} implies that for sufficiently large \(k>0\), we have
\begin{equation}
\sup_{\substack{\# \omega = \lfloor k\rfloor \\ \omega\in\mathcal{C}_{\Lambda_{k,\veps}}}} U(\omega)\leq \veps' k\log k.
\end{equation}
Combining this estimate and \eqref{eq:weak Omega}, and setting \(k=g(E)(1+\veps)\), we obtain
\begin{equation}\label{eq:weak lower3}
\log P^{per}_{\Lambda_n} (\Omega^1_{k,E}\cap\mathcal{C}_{\Lambda_n})
\geq -(1+\veps)^2 (1+d\alpha+\veps') g(E)\log g(E) 
\end{equation}
for sufficiently large \(E>0\), where we used the fact that \(g(E)>E^{\beta'}\) for some \(\beta'>0\) (see the Appendix in \cite{KP}).
From \eqref{eq:weak lower1}, \eqref{eq:weak lower2}, and \eqref{eq:weak lower3}, we obtain
\begin{equation}
\varliminf_{E\to+\infty}\frac{\log P^{per}_{\Lambda_n}(\Omega_E)}{g(E)\log g(E)}\geq -(1+\veps)^2 (1+d\alpha+\veps').
\end{equation}
Letting \(\veps, \veps' \to 0\), from the proof of Proposition 3.1 in \cite{KP}, as \(E\to+\infty\), it follows that
\begin{equation}
\log \int_{\mathcal{C}_{\Lambda_n}} N_{\omega, n} (-E-1) \, P^{per}_{\Lambda_n}(d\omega) \geq -(1+ d\alpha) g(E)\log g(E)(1+o(1)).
\end{equation}
Using Theorem \ref{thm:per app} and following an argument yielding Proposition 3.2 in \cite{KP}, we obtain the lower bound in \eqref{eq:weak lower}.

Now, we consider the upper bound \eqref{eq:weak upper}.
Fix \(0<\veps<1\) and \(\delta>0\). 
Since \(P^{per}_{\Lambda_n}\) is stochastically dominated by \(P^{Poi(e^{-b})}_{\Lambda_n}\), from \cite[Lemma 4.3]{KP}, it follows that for \(k\geq E^{\delta}\) and sufficiently large \(E>0\),
\begin{equation}
\begin{split}
&\log P^{per}_{\Lambda_n}(\{\omega\in\mathcal{C}_{\Lambda_n}\mid \text{there exists \(x\in\mathbb{R}^d\) such that \(m(\pi_n (\omega) , \Lambda_{2l}(x))>k\)}\})\\
&\leq -(1-\veps) k\log k.
\end{split}
\end{equation} 
By setting \(k=g(E-2)(1-\veps)\) and following the same argument as in the proof of Proposition 4.1 in \cite{KP}, we obtain
\begin{equation}
\log \int_{\mathcal{C}_{\Lambda_n}} N_{\omega,n}(-E+1) \, P^{per}_{\Lambda_n}(d\omega)
\leq -g(E)\log g(E)(1+o(1))\quad (E\to+\infty).
\end{equation}
Note that, by stochastic domination, an assertion similar to Lemma 4.1 in \cite{KP} holds also for \(P^{per}_{\Lambda_n}\).
We obtain the upper bound \eqref{eq:weak upper} from Theorem \ref{thm:per app} by an argument similar to the one used to derive Proposition 4.2 in \cite{KP}.
\end{proof}

\begin{proof}[Proof of Theorem \ref{thm:weak V4}]
Let us recall \eqref{eq:V4 alpha}.
Using \cite[Lemmas 4.4 and 4.5]{KP}, combined with the fact that \(P^{per}_{\Lambda_n}\) is stochastically dominated by \(P^{Poi(e^{-b})}_{\Lambda_n}\), we obtain the upper bound by following the same argument as in the proofs of Proposition 4.4 and Theorem 1.6 in \cite{KP}:
\begin{equation}
\log N(-E)\leq -\left( 1+\frac{d-\nu^{\dag}}{2} \right) \left( \frac{E}{|E_0|} \right)^{1-\nu^{\dag}/2}\log E\, (1+o(1))\quad (E\to+\infty).
\end{equation}
The lower bound follows immediately from Theorem \ref{thm:weak} and \eqref{eq:V4 alpha}.
\end{proof}

\section{Pairwise energy function}\label{sec:pairwise}
In this section, we consider a Gibbs point process with a pairwise energy function. After stating our main results in Section \ref{subsec:pairwise result}, we devote the remainder of this section to their proofs.

\subsection{Main results}\label{subsec:pairwise result}
We deal with a Gibbs point process with a pairwise energy function \(U\) defined by \eqref{eq:pairwise}.
Moreover, we assume that the single-site potential \(V\) satisfies the following condition:
\begin{description}
\item[\phantomsection \textbf{(V3')}\label{cd:V3'}] \(\essinf V=-\infty\).
\end{description}
Here, \(\varphi\) is as in \eqref{eq:pairwise}. We define the limit inferior and limit superior of \(\varphi\) by
\begin{equation}
\varliminf_{x\to y}\varphi(x)=\lim_{r\to 0}\, \inf_{0<|x-y|<r} \varphi(x),\qquad\varlimsup_{x\to y}\varphi(x)=\lim_{r\to 0} \, \sup_{0<|x-y|<r} \varphi(x).
\end{equation}
We introduce the following condition on the limit of \(\varphi\) at the origin: 
\begin{description}
\item[\phantomsection \textbf{(L)}\label{cd:L}] the limit \(a_0=\lim_{x\to0}\varphi(x)\) exists and satisfies \(0<a_0<+\infty\).
\end{description}

We set
\begin{equation}\label{eq:S_a}
S_a=\{x\in\mathbb{R}^d \mid \varphi(x)\geq a \}\cup \{0\}.
\end{equation}
Let \(A\) denote the set of all \(a>0\) such that \(\Int S_a\), the interior of \(S_a\), contains the origin. Note that \(A\) is a non-empty set whenever \(\varliminf_{x\to 0}\varphi(x)>0\).

For \(l,a>0\), let \(X_{l,a}\) be the maximum cardinality of a configuration \(\omega \subset \Lambda_l + \supp V_{2,-}\) such that \(x-y \notin S_a\) for all distinct \(x, y \in \omega\), and set \(T_a=\lim_{l\downarrow 0}X_{l,a}\),
where \(V_{2,-}=\min\{V_{2},0\}\).
Note that the limit \(T_a\) is well-defined since \(X_{l,a}\) is a non-decreasing positive integer-valued function of \(l\).

For every positive integer \(q\), we set
\begin{equation}
\tilde{D}_q = \left\{\Big( (c_j)_{j=1}^q, (x_j)_{j=1}^q \Big)\in \mathbb{R}_{>0}^q \times (\mathbb{R}^d)^{q} \,\middle\vert\, \sum_{j=1}^{q}c_j=1,\ x_i \neq x_j \ (1\leq i< j\leq q) \right\}.
\end{equation}
Then, we define the subset \(D_q\subset \tilde{D}_q\) by
\begin{equation}\label{eq:def D}
D_q = \left\{\Big((c_j)_{j=1}^q, (x_j)_{j=1}^q\Big)\in \tilde{D}_q \,\middle\vert \, \essinf \sum_{j=1}^{q}c_j \tau_{x_j}V<0\right\}.
\end{equation}
Under Conditions \hyperref[cd:V1]{\textbf{(V1)}}, \hyperref[cd:V2]{\textbf{(V2)}}, and \hyperref[cd:V3']{\textbf{(V3')}}, for each \((\bm{c}_q,\bm{x}_q) \in \tilde{D}_q\), let \(E_{\bm{c}_{q},\bm{x}_q}(g)\) denote the infimum of the spectrum of
\begin{equation}
-\Delta+g\sum_{j=1}^{q} c_j \tau_{x_j}V,
\end{equation}
where \(\bm{c}_{q}=(c_1,\ldots, c_q)\) and \(\bm{x}_{q}=(x_1,\ldots,x_q)\).
We note that this self-adjoint operator is well-defined in the same manner as \(H(g)\).
We set \(E_{-,\bm{c}_q,\bm{x}_q}(g)=-E_{\bm{c}_q,\bm{x}_q}(g)\).
When \((\bm{c}_q,\bm{x}_q)\in D_q\), as is the case for \(E_{-}(g)\), the function \(E_{-,\bm{c}_q,\bm{x}_q}(g)\) is positive and strictly increasing for sufficiently large \(g>0\), allowing us to define its inverse function \(g_{\bm{c}_q,\bm{x}_q}(E)\).
We remark that \(E_{-}(g)=E_{-,(1),(0)}(g)\) and \(g(E)=g_{(1),(0)}(E)\).
The following is the main theorem in this section.

\begin{theorem}\label{thm:pairwise}
Let \(\Gamma\) be a Gibbs point process for a pairwise energy function \(U\) of the form \eqref{eq:pairwise} satisfying Conditions \hyperref[cd:U1]{\textbf{(U1)}} and \hyperref[cd:U3]{\textbf{(U3)}}.
We assume that \(V\) satisfies Conditions \hyperref[cd:V1]{\textbf{(V1)}}, \hyperref[cd:V2]{\textbf{(V2)}}, and \hyperref[cd:V3']{\textbf{(V3')}}.
Then, the following results hold.
\begin{enumerate}
\item[\textbf{(a)}] For any positive integer \(q\) and any \((\bm{c}_q,\bm{x}_q)=((c_1,\ldots,c_q),(x_1,\ldots,x_q))\in D_q\),
\begin{equation}\label{eq:lower bound}
\varliminf_{E\to+\infty} \frac{\log N(-E)}{g_{\bm{c}_q,\bm{x}_q}(E)^2} \geq -\frac{1}{2}\sum_{1\leq i,j \leq q} c_i c_j \varlimsup_{x\to x_i-x_j}\varphi(x).
\end{equation}
In particular, taking \(q=1\), if \(\varlimsup_{x\to 0}\varphi(x)>0\), we have
\begin{equation}
\log N(-E) \geq -\frac{1}{2}\varlimsup_{x\to 0}\varphi(x)g(E)^2 (1+o(1))\quad (E\to +\infty);
\end{equation}
\item[\textbf{(b)}] If \(\varliminf_{x\to 0}\varphi(x)>0\), then
\begin{equation}\label{eq:upper bound}
\varlimsup_{E\to+\infty} \frac{\log N(-E)}{g(E)^2}\leq -\frac{1}{2} \sup_{a\in A}\frac{a}{T_a}.
\end{equation}
\end{enumerate}
Here, we set \(\log(0)=-\infty\) for convenience. 
\end{theorem}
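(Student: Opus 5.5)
Both parts go through the periodic approximation, Theorem \ref{thm:per app}. We take \(n\) of order \(E^{\beta}\) (times a logarithmic factor, as in the proof of Theorem \ref{thm:weak}). The error \(e^{-E^{\nu}}\) is negligible once \(\nu\) is large, since \(g_{\bm{c}_q,\bm{x}_q}(E)\) grows at most polynomially in \(E\). It therefore suffices to estimate \(\int N_{\omega,n}(-E\mp1)\,P^{per}_{\Lambda_n}(d\omega)\) from below and above.

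\textbf{Lower bound (a).} Fix \((\bm{c}_q,\bm{x}_q)\in D_q\) and \(\veps>0\). Set \(k=(1+\veps)g_{\bm{c}_q,\bm{x}_q}(E)\). Consider the event that, for each \(j\), exactly \(\lfloor c_jk\rfloor\) points lie in a tiny cube \(\Lambda_{r}(x_j)\) with \(r=r(E)\to0\) polynomially, and no other points lie in \(\Lambda_l\). Intersect this with the moderate-growth event \(\Omega^2_E\) of \eqref{eq:Omega2}.

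On this event the potential \(V_{\omega}\) near the origin is close to \(k\sum_j c_j\tau_{x_j}V\). The asymptotic ground state argument of \cite[Prop.~3.1]{KP}, with \(\psi\) an approximate ground state of \(-\Delta+k\sum_jc_j\tau_{x_j}V\), then gives an eigenvalue below \(-E-1\). The translation errors \(\tau_yV-V\) with \(|y|\le r\) contribute only \(o(E)\) because \(r\) decays polynomially; this is where the tiny cubes are chosen. This yields \(N_{\omega,n}(-E-1)\ge c\,n^{-d}\) on the event.

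The probability is computed through the DLR equation \eqref{eq:per DLR} on \(\Lambda_l\), as in \eqref{eq:weak Omega}. The Poisson factor \(\prod_j |\Lambda_r|^{c_jk}/(c_jk)!\) is only \(\exp(-O(k\log k))=\exp(-o(k^2))\). The energy is
\begin{equation*}
U=\tfrac12\sum_{i,j}\lfloor c_ik\rfloor\lfloor c_jk\rfloor\sup_{x\in\Lambda_{2r}(x_i-x_j)}\varphi(x)+O(k\log k),
\end{equation*}
which is \(\le \tfrac12k^2(\sum_{i,j}c_ic_j\varlimsup_{x\to x_i-x_j}\varphi+\veps)\) as \(r\to0\). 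Here the diagonal terms \(i=j\) use \(\varlimsup_{x\to0}\varphi\), and \(Z_{\Lambda_l,\gamma}\ge e^{-|\Lambda_l|}\). Letting \(\veps\to0\) gives \eqref{eq:lower bound}. If some \(\varlimsup\) equals \(+\infty\) the bound is trivial.

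\textbf{Upper bound (b).} Fix \(a\in A\), so \(B(0,\delta)\subset S_a\) for some \(\delta>0\), and fix \(l>0\) small with \(X_{l,a}=T_a\). To have an eigenvalue below \(-E+1\), some cluster must generate a well of depth about \(E\). Following \cite[Prop.~4.1, Lemma 4.3]{KP}, with stochastic domination handling the tails, one reduces to the following: with overwhelming probability some cube \(\Lambda_{2l}(x)\) of fixed size must contain at least \(k=(1-\veps)g(E)\) points, weighted by their effect on \(V_{2,-}\) near the singular set. Only points within \(\Lambda_l+\supp V_{2,-}\) of the minimizer's location contribute to the divergence, since the bounded part \(V_1\) and the far points shift energies by \(O(k)\ll E\).

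The key combinatorial step is as follows. Partition these \(k\) points into "\(S_a\)-clusters" using the graph with edges \(x-y\in S_a\). By the definition of \(T_a\), the points can be grouped into at most \(T_a\) classes, each class pairwise \(S_a\)-connected in the sense that a maximal \(S_a\)-free subset has at most \(T_a\) elements. I would use a greedy/pigeonhole (Turán-type) bound: the number of pairs with \(\varphi\ge a\) is at least about \(k^2/(2T_a)\). Hence \(U\ge \frac{a}{2T_a}k^2(1-o(1))-k\log z\). By \eqref{eq:per DLR} with \(Z\le e^{C|\Lambda|}\) and the Poisson volume factor \(e^{O(k)}\), the probability is at most \(\exp(-\frac{a}{2T_a}g(E)^2(1-\veps))\). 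Sending \(\veps\to0\) and optimizing over \(a\) gives \eqref{eq:upper bound}.

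The main obstacle is the reduction in (b): showing rigorously that a ground energy below \(-E\) forces about \(g(E)\) points concentrated in \(\Lambda_l+\supp V_{2,-}\) relative to one center. A naive localization with \(g(E)\) points spread over a fixed cube does not suffice, because the Turán bound needs the points within the small set. I would first try an IMS localization into cells of size \(l\), combined with a sub-additivity bound. This bound should say that the bottom of the spectrum for a configuration with \(k_i\) points per cell is at least \(\min_i E(k_i)-O(k)\), using concavity of \(E_-\).
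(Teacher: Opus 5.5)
Your plan is essentially the paper's proof. For (a), it is the DLR lower bound on the event that about \(c_jk\) points lie in shrinking neighbourhoods of the \(x_j\), tested against an asymptotic ground state with the translation-continuity property from \cite{KP}; for (b), your sketched reduction is exactly Lemma~\ref{lem:operator} (IMS localization into \(l\)-cells, counting points in \(\Xi_{2l}(l\bm{j})\), per-cell bound \(-\Delta+\sum_{i=1}^m\tau_{y_i}V=\frac{1}{m}\sum_{i=1}^m(-\Delta+m\tau_{y_i}V)\ge -E_-(m)\)), followed by the union bound, the DLR equation on \(\Xi_{2l}\), and Tur\'an's theorem as in Lemma~\ref{lem:prob}, and fixing \(l\) with \(X_{4l,a}=T_a\) instead of letting \(l\to0\) is harmless. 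The slips to fix are minor: the partition-function bounds are reversed (you need \(Z_{\Lambda_l,\gamma}\le e^{(z-1)|\Lambda_l|}\) in (a) and \(Z_{\Xi_{2l},\gamma}\ge e^{-|\Xi_{2l}|}\) in (b)); \(E_-\) is convex, not concave; the partition into at most \(T_a\) pairwise \(S_a\)-connected classes does not follow from the definition of \(T_a\), but it is unnecessary once you use Tur\'an; and the rounding terms \((c_jk-\lfloor c_jk\rfloor)\langle\tau_{x_j}V\psi,\psi\rangle\) need an argument, which the paper supplies by placing the count window for \(m_j\) above or below \(c_jk\) according to the sign of \(\langle\tau_{x_j}V\psi_k,\psi_k\rangle\).
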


Under some additional conditions, we can determine the leading term of \(\log N(-E)\):
\begin{proposition}\label{prop:pairwise}
Let \(\Gamma\) and \(V\) satisfy the conditions in Theorem \ref{thm:pairwise} and Condition \hyperref[cd:L]{\textbf{(L)}}. Then, we have the following results.
\begin{enumerate}
\item[\textbf{(a)}]
If \(x-y \in \Int S_a\) for all \(a\in (0,a_0)\) and all \(x,y\in \supp V_{2,-}\), then it holds that
\begin{equation}\label{eq:IDS pairwise}
\log N(-E) \sim -\frac{a_0}{2} g(E)^2\quad (E\to+\infty).
\end{equation}
\item[\textbf{(b)}] If \(V\) is bounded outside any neighborhood of some \(x_0\in\mathbb{R}^d\), then the corresponding IDS \(N(-E)\) satisfies \eqref{eq:IDS pairwise}.
\end{enumerate}
\end{proposition}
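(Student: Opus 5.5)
Under Condition \hyperref[cd:L]{\textbf{(L)}} both sides of Theorem \ref{thm:pairwise} can be compared with \(a_0/2\). For the lower bound, take \(q=1\), \(\bm{c}_1=(1)\), \(\bm{x}_1=(0)\) in Theorem \ref{thm:pairwise}(a). Since \(\varlimsup_{x\to 0}\varphi(x)=a_0\), this gives \(\log N(-E)\geq -\frac{a_0}{2}g(E)^2(1+o(1))\) in both (a) and (b). It therefore suffices to show that \(\sup_{a\in A}a/T_a\geq a_0\) and then apply Theorem \ref{thm:pairwise}(b), whose hypothesis \(\varliminf_{x\to0}\varphi(x)=a_0>0\) holds. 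Because \(\varphi\to a_0\) at the origin, every \(a\in(0,a_0)\) lies in \(A\): there is \(r>0\) with \(\varphi\geq a\) on \(B(0,r)\setminus\{0\}\), so \(B(0,r)\subset S_a\). Hence the goal reduces to proving \(T_a=1\) for all \(a\in(0,a_0)\) and then letting \(a\uparrow a_0\).

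\textbf{(a).} Fix \(a\in(0,a_0)\). The set \(K=\supp V_{2,-}\) is compact by \hyperref[cd:V2]{\textbf{(V2)}}. The hypothesis says that the compact difference set \(K-K\) lies in the open set \(\Int S_a\), so there is \(\delta>0\) with \((K-K)+B(0,\delta)\subset \Int S_a\). For \(l\) small, any two points of \(\Lambda_l+K\) have difference in \((K-K)+\Lambda_{2l}\subset S_a\). Thus a configuration with pairwise differences outside \(S_a\) has at most one point, so \(X_{l,a}=1\) for small \(l\), and \(T_a=1\). This gives \(\sup_{a\in A}a/T_a\geq a_0\), and combined with the lower bound yields \eqref{eq:IDS pairwise}.

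\textbf{(b).} Here \(K\) need not be small, so I would modify the decomposition \(V=V_1+V_2\) rather than use (a) directly; the quantities \(g\), \(T_a\), and the conditions \hyperref[cd:V1]{\textbf{(V1)}}--\hyperref[cd:V2]{\textbf{(V2)}} depend on the choice of splitting only through \(V_{2,-}\). Given \(r>0\), set \(\tilde V_2=V\mathbb{1}_{B(x_0,r)}\) and \(\tilde V_1=V\mathbb{1}_{B(x_0,r)^c}\), and check that these still satisfy the conditions. \(\tilde V_1\) is bounded and still decays exponentially, since \(V_2\) has compact support and \(V_1\) decays. For \(\tilde V_2\), note that \(V_1\) is bounded, so \(V_1\mathbb{1}_{B}\in L^p\) with compact support. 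Then \(\supp \tilde V_{2,-}\subset \overline{B(x_0,r)}\), whose difference set is \(\overline{B(0,2r)}\). Choosing \(r\) small relative to the \(r(a)\) with \(B(0,r(a))\subset\Int S_a\) puts us in case (a) for that \(a\). The main obstacle is that \(r\) must depend on \(a\). I would handle this by first proving that the upper bound in Theorem \ref{thm:pairwise}(b) is unchanged when \(V_2\) is enlarged by a bounded compactly supported piece. This holds because \(g(E)\) is intrinsic to \(V\) and the definition of \(T_a\) only uses \(\supp V_{2,-}\) through where the singularity of \(V\) can sit. Concretely, I would apply Theorem \ref{thm:pairwise}(b) separately for each \(a\) with the splitting at radius \(r(a)/3\). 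This gives \(\varlimsup \log N(-E)/g(E)^2\leq -a/2\) for each \(a<a_0\), and letting \(a\uparrow a_0\) completes the proof.
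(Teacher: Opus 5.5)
Your proposal is correct and follows the paper's proof: the lower bound comes from Theorem~\ref{thm:pairwise}~\textbf{(a)} with $q=1$, and the upper bound from Theorem~\ref{thm:pairwise}~\textbf{(b)} after showing $T_a=1$ for every $a\in(0,a_0)$; in part \textbf{(b)} this uses an $a$-dependent splitting of $V$ with $\supp V_{2,-}$ in a small neighbourhood of $x_0$, followed by $a\uparrow a_0$. Drop your side claim that the bound in Theorem~\ref{thm:pairwise}~\textbf{(b)} is unchanged when $V_2$ is enlarged by a bounded compactly supported piece: it is false in general, since a bounded negative bump added far from $x_0$ can increase $T_a$. You never need it anyway, because $N$ and $g$ do not depend on the splitting, so applying the theorem with an $a$-dependent splitting, as you do ``concretely'', already suffices.
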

Unlike the weak interaction case where the asymptotic order is \(g(E) \log g(E)\), the order here is \(g(E)^2\). This reflects the fact that the total interaction among points in a small region is proportional to the square of the number of points.

We provide an example of a Gibbs point process that satisfies the conditions of Proposition \ref{prop:pairwise}.
\begin{example}[Strauss process]\label{ex:Strauss}
Suppose that the assumptions in Theorem \ref{thm:pairwise} hold.
Let the Gibbs point process \(\Gamma\) be a \emph{Strauss process} introduced in \cite{Strauss}, which is characterized by
\begin{equation}\label{eq:Strauss}
\varphi(x)=
\begin{cases}
a_0\ &(|x|\leq r),\\
0 &(|x|>r)
\end{cases}
\end{equation}
for some \(a_0 ,r>0\). 
If \(\diam (\supp V_{2,-}) <r\), then from Proposition \ref{prop:pairwise}, we obtain
\begin{equation}
\log N(-E)\sim -\frac{a_0}{2}g(E)^2\quad (E\to+\infty).
\end{equation}
\end{example}

Several remarks on the above results are in order.
\begin{remark}\label{rem:g}
For any \(q\in\mathbb{Z}_{>0}\), \((\bm{c}_q,\bm{x}_q)\in D_{q}\), and \(V\) satisfying Conditions \hyperref[cd:V1]{\textbf{(V1)}}, \hyperref[cd:V2]{\textbf{(V2)}}, and \hyperref[cd:V3']{\textbf{(V3')}}, there exists a constant \(C>0\) such that for sufficiently large \(E>0\),
\begin{equation}
g_{\bm{c}_q,\bm{x}_q}(E) \geq C g(E).
\end{equation}
This follows from Theorem \ref{thm:pairwise} \textbf{(a)} and \textbf{(b)} (by taking \(\varphi\) as in \eqref{eq:Strauss}).
\end{remark}

\begin{remark}
Under the assumptions in Theorem \ref{thm:pairwise}, when \(0<\varliminf_{x\to 0}\varphi(x)\leq \varlimsup_{x\to 0}\varphi(x)<+\infty\), we obtain 
\begin{equation}
-\beta g(E)^2 (1+o(1)) \leq \log N(-E) \leq -\frac{1}{2}\sup_{a\in A}\frac{a}{T_a} g(E)^2 (1+o(1))\quad (E\to+\infty),
\end{equation}
i.e., we know that \(\log N(-E)\asymp g(E)^2\ (E\to+\infty)\),
where
\begin{equation}
\beta= \frac{1}{2}\inf_{q\in\mathbb{Z}_{>0}}\, \inf_{(\bm{c}_q,\bm{x}_q)\in D_q}\left(\sum_{1\leq i,j\leq q}c_i c_j \varlimsup_{x\to x_i-x_j} \varphi(x) \varlimsup_{E\to+\infty}\left(\frac{g_{\bm{c}_q,\bm{x}_q}(E)}{g(E)}\right)^2\right).
\end{equation}
We note that \(0<\beta<+\infty\) (we know \(\beta>0\) from the inequality above and the fact that \(\sup_{a\in A}(a/T_a) > 0\)). 
\end{remark}

\begin{remark}
We briefly comment on the case of a hard-core interaction, where points are strictly prohibited from coming closer than a certain distance (i.e., there exists \(r_{hc}>0\) such that \(\varphi(x)=+\infty\) for \(|x|\leq r_{hc}\)).
In such a model, the number of points in any bounded region \(\Lambda\) is bounded by a constant depending only on \(\Lambda\) and \(r_{hc}\). Combining this with the exponential decay of \(V\) at infinity (Condition \hyperref[cd:V1]{\textbf{(V1)}}) and the fact that \(V\) is form-bounded with respect to \(-\Delta\) with relative bound zero, we find that the infinite sum \(\sum_{x\in\omega} V(\,\cdot-x)\) is form-bounded with relative bound zero uniformly in \(\omega\). Consequently, the spectrum of the operator \(H_{\omega}\) is contained in \([-M,+\infty)\) for some \(M>0\) independent of \(\omega\), which implies \(N(-E)=0\) for sufficiently large \(E>0\).
\end{remark}

\subsection{Proofs of the results in this section}
\begin{proof}[Proof of Theorem \ref{thm:pairwise} \textbf{(a)}]
We put \(n=\lfloor E\rfloor^{\beta}\) and \(l=|\log E|^{\beta}\), where \(\beta>0\) is sufficiently large.
Fix \(q\in\mathbb{Z}_{>0}\) and \((\bm{c}_q,\bm{x}_q)=((c_1,\ldots,c_q),(x_1,\ldots,x_q))\in D_q\).
From the proof of Lemma 5.6 in \cite{KP}, there exist \(\alpha>0\) and an asymptotic ground state \(\psi_g\) of \(-\Delta+g(\sum_{j=1}^q c_j \tau_{x_j}V)\) (defined analogously to that for \(H(g)\)) such that for any \(x\in\mathbb{R}^d\),
\begin{equation}\label{eq:ground state}
\lim_{g\to+\infty} \sup_{|y|\leq g^{-\alpha}} \left(\frac{g\left|\left\langle \left(\tau_{y+x} V-\tau_{x}V\right)\psi_g, \psi_g \right\rangle\right|}{E_{-,\bm{c}_q,\bm{x}_q}(g)} \right)=0.
\end{equation}
We define \(\beta^{+}_{j,k}=1\) and \(\beta^{-}_{j,k}=0\) if \(\langle V(\,\cdot-x_j)\psi_k,\psi_k\rangle\geq 0\), and \(\beta^{+}_{j,k}=2\) and \(\beta^{-}_{j,k}=1\) otherwise, to properly bound the energy from above regardless of the sign of \mbox{\(\langle V(\,\cdot-x_j)\psi_k,\psi_k\rangle\)}.

Fix \(\veps>0\). 
We define
\begin{equation}
\begin{split}
\tilde{\Omega}^{1}_{k,E}= &\left\{\omega\in\mathcal{C}_{\Lambda_n}\,\middle\vert\, c_j k(1+\veps\beta^{-}_{j,k})\leq m(\omega,B(x_j,k^{-\alpha}))\leq c_j k(1+\veps\beta^{+}_{j,k})\ (j=1,\ldots,q)\right\}\\
&\quad\cap \bigg\{\omega\in\mathcal{C}_{\Lambda_n}\,\bigg\vert\, m\bigg(\omega,\Lambda_l \setminus\bigcup_{j=1}^{q}B(x_j,k^{-\alpha})\bigg)=0\bigg\}.
\end{split}
\end{equation} 
We put \(\tilde{\Omega}_{k,E}=\tilde{\Omega}^{1}_{k,E}\cap\Omega^{2}_{E}\) (see \eqref{eq:Omega2} for the definition of \(\Omega^2_E\)).

Fix \(\omega\in\tilde{\Omega}_{k,E}\). We put
\begin{equation}
V^{(i)}_{\omega}=\sum_{y\in \omega\cap \Lambda_l}\tau_y V.
\end{equation}
We set \(m_j(\omega)=m(\omega,B(x_j,k^{-\alpha}))\) and \(\omega\cap B(x_j,k^{-\alpha})=\{y_{j,1},\ldots, y_{j,m_j(\omega)}\}\) for every \(j=1,\ldots,q\).
For large \(E>0\), since \(\omega\in \tilde{\Omega}^{1}_{k,E}\), from the definitions of \(\beta^{+}_{j,k},\ \beta^{-}_{j,k}\), it holds that
\begin{equation}
\begin{split}
&\langle (-\Delta+V^{(i)}_{\omega})\psi_k,\psi_k \rangle\\
&= \bigg\langle \bigg(-\Delta+\sum_{j=1}^{q}m_j(\omega)\tau_{x_j} V\bigg)\psi_k, \psi_k \bigg\rangle+\sum_{j=1}^{q}\sum_{i=1}^{m_j(\omega)} \langle (\tau_{y_{j,i}}V-\tau_{x_j}V)\psi_k, \psi_k \rangle \\
& \leq (1+\veps) \bigg\langle \bigg(-\Delta+k\sum_{j=1}^{q}c_j \tau_{x_j}V\bigg)\psi_k, \psi_k \bigg\rangle\\
&\qquad\qquad\qquad\qquad+(1+2\veps)\sum_{j=1}^{q} c_j k\sup_{|y|\leq k^{-\alpha}}|\langle (\tau_{y+x_j}V-\tau_{x_j}V)\psi_k, \psi_k \rangle|.
\end{split}
\end{equation}
Hence, from \eqref{eq:ground state} and the definition of the asymptotic ground state \(\psi_g\), we obtain
\begin{equation}
\langle (-\Delta+V^{(i)}_{\omega})\psi_k,\psi_k \rangle\leq (1+\veps)E_{\bm{c}_{q},\bm{x}_{q}}(k)+o(|E_{\bm{c}_{q},\bm{x}_{q}}(k)|) \quad (k\to +\infty).
\end{equation}

Setting \(k=g_{\bm{c}_q, \bm{x}_q}(E)\), we obtain
\begin{equation}\label{eq:energy lower bound}
\langle (-\Delta+V^{(i)}_{\omega})\psi_k, \psi_k \rangle \leq -E-2
\end{equation}
for sufficiently large \(E>0\).

From \cite{KP} and the fact that \(P^{Poi(z)}_{\Lambda_n}\) stochastically dominates \(P^{per}_{\Lambda_n}\), for some \(C>0\), we have
\begin{equation}\label{eq:pairwise Omega2}
P^{per}_{\Lambda_n}(\mathcal{C}_{\Lambda_n}\setminus\Omega^2_E) \leq P^{Poi(z)}_{\Lambda_n}(\mathcal{C}_{\Lambda_n}\setminus\Omega^2_E)
\leq C\frac{(z r_0^d)^{E^{\rho}}}{\lfloor E^{\rho}\rfloor!},
\end{equation}
where the constant \(z>0\) is the activity defined in \eqref{eq:pairwise}.

For sufficiently large \(E>0\), since the balls \(B(x_j, k^{-\alpha})\) are pairwise disjoint and contained in \(\Lambda_{l/2}\), we have
\begin{equation}
\begin{split}
U_{\Lambda_l,\gamma}(\eta)&=U(\eta)\\
& \leq \frac{1}{2}\sum_{1\leq i,j \leq q}\left(\varlimsup_{x\to x_i-x_j}\varphi(x)+\veps\right)c_i c_j k^2(1+2\veps)^2 -k(1+2\veps)z',
\end{split}
\end{equation}
for any \(\eta \in \tilde{\Omega}^{1}_{k,E} \cap \mathcal{C}_{\Lambda_l}\) and \(\gamma \in \mathcal{C}_{\Lambda_n}\), where we set \(z'=\min\{\log z, 0\}\).

Consequently, noting that \(l< n-2R\) for large \(E\), we apply the DLR equation \eqref{eq:per DLR} and the estimate \(Z_{\Lambda_l,\gamma}\leq \exp(|\Lambda_l|(z-1))\) to obtain
\begin{equation}\label{eq:prob Omega1}
\begin{split}
&P^{per}_{\Lambda_n}(\tilde{\Omega}^{1}_{k,E})\\
&\geq \exp\left(-\frac{1}{2}\sum_{1\leq i,j\leq q}\left(\varlimsup_{x\to x_i-x_j}\varphi(x)+\veps\right)c_i c_j k^2 (1+2\veps)^2 +k(1+2\veps)z'\right) \\ 
&\qquad\times \frac{|B(0,k^{-\alpha})|^{k(1+2\veps)}}{(\lfloor k(1+3\veps)\rfloor !)^q}e^{-z|\Lambda_l|}.
\end{split}
\end{equation}
Since \(\rho>0\) is sufficiently large, by \eqref{eq:pairwise Omega2}, \eqref{eq:prob Omega1}, and the fact that for some \(C>0\),
\begin{equation}
\frac{1}{C} E^{(2p-d)/2p} \leq g_{\bm{c}_q, \bm{x}_q}(E) \leq CE 
\end{equation}
for sufficiently large \(E>0\) (see the Appendix in \cite{KP}),
we get
\begin{equation}\label{eq:pairwise lower prob}
\begin{split}
&\log P^{per}_{\Lambda_n}(\tilde{\Omega}_{k,E})\\
&\geq \log \left( P^{per}_{\Lambda_n} (\tilde{\Omega}_{k,E}^1) - P^{per}_{\Lambda_n}(\mathcal{C}_{\Lambda_n}\setminus \Omega_E^2) \right)\\
&\geq -\frac{(1+2\veps)^2}{2}\sum_{1\leq i,j \leq q}c_i c_j \left(\varlimsup_{x\to x_i-x_j}\varphi(x)+\veps\right) g_{\bm{c}_q, \bm{x}_q}(E)^2 (1+o(1))\quad (E\to+\infty).
\end{split}
\end{equation} 

Combining \eqref{eq:energy lower bound}, \eqref{eq:pairwise lower prob}, and the same argument as in the proof of Proposition 3.1 in \cite{KP}, and noting that \(\veps>0\) is arbitrary, we obtain
\begin{equation}
\varliminf _{E\to+\infty} \frac{\log \int_{\mathcal{C}_{\Lambda_n}} N_{\omega,n}(-E-1) P^{per}_{\Lambda_n}(d\omega)}{g_{\bm{c}_q,\bm{x}_q}(E)^{2}}
\geq  -\frac{1}{2} \sum_{1\leq i,j\leq q} c_i c_j \varlimsup_{x\to x_i-x_j}\varphi(x). 
\end{equation}
Theorem \ref{thm:per app} then yields \eqref{eq:lower bound}, following the derivation of Proposition 3.2 from Proposition 3.1 in \cite{KP}.
\end{proof}

To prove Theorem \ref{thm:pairwise} \textbf{(b)}, we introduce the following notation and prepare two lemmas.
Throughout the remainder of this section, to simplify the notation, we use \(\langle\cdot,\cdot\rangle\) and \(\|\cdot\|\) to denote the standard inner product and norm on \(L^2(\Lambda_n)\), respectively.
Let \(T_n^{*}\) denote the subset \([0,2\pi/n)^d\subset\mathbb{R}^d\). For each \(n>0\), \(\omega\in\mathcal{C}_{\Lambda_n}\), and \(\theta\in T_n^{*}\), we introduce the self-adjoint operator \(H_{\omega,n,\theta}\) uniquely defined by the quadratic form \(\|\nabla \phi\|^2  + \langle V_{\omega,n}\phi, \phi \rangle\) on the space \(L^2_{\theta}(\Lambda_n)\), given by
\begin{equation}
L^2_{\theta}(\Lambda_n)=\{\phi\in L^{2}_{\mathrm{loc}}(\mathbb{R}^d) \mid \text{\(\phi(x+n\gamma)= e^{in\theta\cdot\gamma}\phi(x)\) for any \(x\in\mathbb{R}^d\) and \(\gamma\in \mathbb{Z}^d\)}\}.
\end{equation}
Furthermore, we set
\begin{equation}
\Xi_{l}= \{y_1-y_2 \mid y_1\in \Lambda_l,\ y_2\in\supp V_{2,-}\},
\end{equation}
and \(\Xi_l (x)=\tau_x \Xi_{l}\).

\begin{lemma}\label{lem:operator}
Suppose that the assumptions of Theorem \ref{thm:pairwise} hold.
There exist \(n_0, k_0>0\) such that if \(n>n_0\), \(k>k_0\), and \(0<l<1\), then for any \(\theta\in T_n^{*}\) and any \(\omega\in\mathcal{C}_{\Lambda_n}\) satisfying \(m(\pi_n(\omega),\Xi_{2l}(x))\leq k\) for all \(x\in\Lambda_n\), it holds that
\begin{equation}\label{eq:operator bound}
H_{\omega,n,\theta}\geq -E_{-}(k)-C\left(\frac{1}{l^2}+\frac{k}{l^{d}}\right),
\end{equation}
where \(C>0\) is a constant depending only on \(d\) and \(V\). 
\end{lemma}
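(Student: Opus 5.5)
We use an IMS localization on scale \(l\), which splits the bound into a localization error, a controlled tail of distant points, and a local Schr\"odinger operator with at most \(k\) singular sources.

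\textbf{Step 1 (partition of unity).} Choose a smooth partition of unity \(\{J_x\}_{x\in l\mathbb{Z}^d}\) on \(\mathbb{R}^d\), compatible with \(n\)-periodicity (take \(n\) large and \(l<1\); small mismatches at the torus scale are harmless). Require \(\sum_x J_x^2=1\), \(\supp J_x\subset\Lambda_{2l}(x)\), and \(\sum_x|\nabla J_x|^2\le C/l^2\). Each \(J_x\phi\) with \(\phi\in L^2_\theta\) is compactly supported modulo periodicity. The IMS formula then gives
\[
H_{\omega,n,\theta}=\sum_x J_x H_{\omega,n,\theta}J_x-\sum_x|\nabla J_x|^2 ,
\]
so the localization error is at most \(C/l^2\). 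It therefore suffices to show that for each \(x\), on functions supported in \(\Lambda_{2l}(x)\),
\[
-\Delta+V_{\omega,n}\ge -E_-(k)-Ck/l^d .
\]

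\textbf{Step 2 (near/far splitting).} Fix \(x\). Split \(\pi_n(\omega)=\omega_{\mathrm{near}}\cup\omega_{\mathrm{far}}\), where \(\omega_{\mathrm{near}}=\pi_n(\omega)\cap\Xi_{2l}(x)\) consists of the points \(y\) with \((\Lambda_{2l}(x)-y)\cap\supp V_{2,-}\neq\emptyset\). By hypothesis \(\#\omega_{\mathrm{near}}\le k\).

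For \(y\in\omega_{\mathrm{far}}\), the translate \(\tau_yV_{2,-}\) vanishes on \(\Lambda_{2l}(x)\). On that box, \(\tau_yV\) is therefore bounded below by \(\tau_y V_1\), plus the nonnegative part of \(V_2\), which we drop. The far contribution is thus at least \(-\sum_{y}|\tau_yV_1|\). This is where the real difficulty lies: \(\omega\) is only controlled in the sets \(\Xi_{2l}(\cdot)\). However, the hypothesis applies at every \(x'\in\Lambda_n\), and \(\Lambda_{2l}(x')\subset\Xi_{2l}(x')\) (take \(y_2\in\supp V_{2,-}\) appropriately, or shift \(x'\)). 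Hence every box of side \(\sim l\) contains at most \(k\) points, and by (V1)
\[
\sum_y C e^{-|\cdot-y|/C}\le C\,k\,l^{-d}.
\]
This gives exactly the \(Ck/l^d\) term.

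\textbf{Step 3 (near points via convexity).} For the near points, write \(m=\#\omega_{\mathrm{near}}\le k\). Then
\[
-\Delta+\sum_{y\in\omega_{\mathrm{near}}}\tau_yV=\frac1m\sum_{y}\bigl(-\Delta+m\,\tau_yV\bigr)\ge E(m)=-E_-(m)\ge -E_-(k),
\]
using translation invariance of the spectrum and the monotonicity of \(E_-\) for large argument. For small \(m\) we instead use the uniform bound \(-\Delta+mV\ge -C\), which is absorbed into the constants once \(k>k_0\).

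\textbf{Main obstacle.} The hardest step is checking carefully, in Step 2, that the \(V_1\) tail contributes only \(Ck/l^d\) uniformly in \(\theta\) and \(n\). The quasi-periodic boundary condition enters only through the fact that the multiplication potential is \(n\)-periodic, so the bounds above hold pointwise and transfer to the form on \(L^2_\theta\). Summing the localized bounds from Steps 2 and 3 with \(\sum_x J_x^2=1\) yields the claimed estimate.
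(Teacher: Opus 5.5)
Your proposal is correct and follows essentially the same route as the paper. Both use IMS localization at scale \(l\) and split \(\pi_n(\omega)\) into the points in \(\Xi_{2l}(l\bm{j})\) and the rest. Both bound the exterior part below by \(-Ck/l^d\) using (V1) and the point count per box, and both bound the interior part via the averaging identity \(\frac1m\sum_y(-\Delta+m\tau_yV)\ge -E_-(m)\ge -E_-(k)\), which the paper imports from Klopp--Pastur.

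You also make explicit two points the paper leaves implicit. First, you show why the hypothesis on \(\Xi_{2l}(x)\) controls the count in every box \(\Lambda_{2l}(z)\): shift by some \(y_2\in\supp V_{2,-}\) and use \(n\)-periodicity. Second, you explain why small \(m\) causes no trouble.
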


\begin{proof}[Proof of Lemma \ref{lem:operator}]
Fix \(\theta\in T_n^{*},\ \phi\in C^{\infty}(\mathbb{R}^d)\cap L^2_{\theta}(\Lambda_n)\) with \(\|\phi\|=1\), and \(\omega\in\mathcal{C}_{\Lambda_n}\) such that \(m(\pi_n(\omega),\Xi_{2l}(x))\leq k\) for all \(x\in\Lambda_n\).
For simplicity of notation, we assume that \(n/l\) is an even integer; otherwise, one can slightly adjust the value of \(l\) without affecting the order of the subsequent estimates.

Following an argument similar to that in the proof of Lemma 4.2 in \cite{KP}, we proceed to estimate \(\langle H_{\omega,n,\theta}\phi,\phi\rangle\) via the IMS localization technique.
We introduce a partition of unity on \(\mathbb{R}^d\) given by
\begin{equation}
1=\sum_{\bm{j}\in \mathbb{Z}^d} \chi (x-\bm{j})^2,
\end{equation}
where \(\chi\in C^{\infty}_{0}(\mathbb{R}^d)\), \(0\leq \chi \leq 1\), \(\chi=1\) on \(\Lambda_{1/2}\) and \(\supp \chi\subset \Lambda_{3/2}\).
For a given \(l>0\), we define \(\chi_{\bm{j},l}(x)=\chi(x/l-\bm{j})\), yielding
\begin{equation}
1= \sum_{\bm{j}\in \mathbb{Z}^d}\chi_{\bm{j},l}^2.
\end{equation}
From \cite{KP}, we have
\begin{equation}
\langle H_{\omega,n,\theta}\phi,\phi \rangle \geq \sum_{\bm{j}\in\mathbb{Z}^d}\left(\|\nabla(\chi_{\bm{j},l}\phi)\|^2
+\langle V_{\omega,n}\chi_{\bm{j},l}\phi,\chi_{\bm{j},l}\phi \rangle\right) - \frac{C'}{l^2},
\end{equation}
where \(C'>0\) is a constant depending only on \(\chi\).

For each \(\bm{j}\in\mathbb{Z}^d\), we define
\begin{equation}
V^{(i,\bm{j})}_{\omega,n}= \sum_{y\in\pi_n(\omega)\cap \Xi_{2l}(l\bm{j}) }\tau_y V,\qquad
V^{(e,\bm{j})}_{\omega,n}=\sum_{y\in\pi_n(\omega)\cap (\Xi_{2l}(l\bm{j}))^c}\tau_y V.
\end{equation}
As \(\supp \chi_{\bm{j},l}\subset \Lambda_{3l/2}(l\bm{j})\), we have
\begin{equation}
\langle V^{(e,\bm{j})}_{\omega,n}\chi_{\bm{j},l}\phi, \chi_{\bm{j},l}\phi \rangle \geq -C''kl^{-d}\|\chi_{\bm{j},l}\phi\|^2,
\end{equation}
where \(C''>0\) is a constant independent of \(\phi\), \(n\), \(l\), \(k\), and \(\bm{j}\).
This lower bound follows from Condition \hyperref[cd:V1]{\textbf{(V1)}} and \(\max_{x\in\mathbb{R}^d} m(\pi_{n}(\omega), \Lambda_{2l}(x))\leq k\).

From \cite{KP}, it follows that
\begin{equation}
\sum_{\bm{j}\in\mathbb{Z}^d } \left( \|\nabla(\chi_{\bm{j},l}\phi)\|^2+ \langle V^{(i,\bm{j})}_{\omega,n}\chi_{\bm{j},l}\phi, \chi_{\bm{j},l}\phi \rangle \right)
\geq -E_{-}(k).
\end{equation}
Since \(V_{\omega,n}=V^{(i,\bm{j})}_{\omega,n}+V^{(e,\bm{j})}_{\omega,n}\) on \(\supp \chi_{\bm{j}, l}\), for sufficiently large \(k>0\), we obtain
\begin{equation}\label{eq:theta bound}
\langle H_{\omega,n,\theta}\phi,\phi\rangle 
\geq -E_{-}(k) -C\left(\frac{1}{l^2}+\frac{k}{l^{d}}\right),
\end{equation}
for some \(C>0\) depending only on \(d\) and \(V\).
By the proof of Lemma 4.2 in \cite{KP}, \eqref{eq:theta bound} implies \eqref{eq:operator bound}.
\end{proof}

\begin{lemma}\label{lem:prob}
Suppose that the assumptions of Theorem \ref{thm:pairwise} \textbf{(b)} hold.
For any \(0<\veps<1\), \(a\in A\), and \(\delta>0\), there exists \(n_0>0\) such that for all \(n\geq n_0\), \(k\geq n^{\delta}\) and \(k^{-1/\delta}<l<1\), it holds that
\begin{equation}
\log P^{per}_{\Lambda_n} (\Omega_{n})\leq -\frac{(1-\veps)a}{2X_{2l,a}}k^2,
\end{equation}
where \(\Omega_{n}\) is the event defined by
\begin{equation}
\Omega_{n}= \{\omega\in\mathcal{C}_{\Lambda_n} \mid \text{there exists \(x\in\Lambda_n\) such that \(m(\pi_n(\omega),\Xi_l(x))>k\)} \}.
\end{equation}
\end{lemma}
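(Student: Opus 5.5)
We reduce to a local estimate by a union bound over a grid and then use the DLR equation \eqref{eq:per DLR} together with a combinatorial lower bound on the pairwise energy of \(k\) points packed into \(\Xi_l(x)\).

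\emph{Step 1: discretization and union bound.} The set \(\Xi_l(x)\) is the Minkowski sum of a cube of side \(l\) centered at \(x\) and \(-\supp V_{2,-}\). We cover \(\Lambda_n\) by the points \(x\in (l/2)\mathbb{Z}^d\cap\Lambda_n\); their number is at most \(C(n/l)^d\le Cn^dk^{d/\delta}\). Every \(\Xi_l(x)\) with \(x\in\Lambda_n\) is contained in some \(\Xi_{2l}(x')\) with \(x'\) a grid point, so it suffices to bound \(P^{per}_{\Lambda_n}(m(\pi_n(\omega),\Xi_{2l}(x'))>k)\) uniformly in \(x'\). Since \(k\ge n^{\delta}\), the prefactor \(\log(Cn^dk^{d/\delta})=O(\log k)\) is \(o(k^2)\). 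By \eqref{eq:per GPP stationary} we may take \(x'=0\). For \(n\) large, \(\Xi_{2l}\) (which has bounded diameter) lies inside \(\Lambda_{n-2R}\) modulo periodicity, so \eqref{eq:per DLR} applies with \(\Lambda=\Xi_{2l}\).

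\emph{Step 2: energy lower bound.} Let \(\eta\subset\Xi_{2l}\) with \(\#\eta=N>k\), and put \(X=X_{2l,a}\). Consider the graph on \(\eta\) with an edge between \(y_1\) and \(y_2\) whenever \(y_1-y_2\in S_a\), i.e. \(\varphi(y_1-y_2)\ge a\). Since \(\Xi_{2l}\) is a translate of \(\Lambda_{2l}+\supp V_{2,-}\) up to reflection, and \(S_a\) is symmetric because \(\varphi\) is, any independent set in this graph has size at most \(X\). By Turán's theorem (the complement graph has no clique of size \(X+1\)), the number of edges is at least \(N^2/(2X)-N/2\). Since \(\varphi\ge0\), it follows that
\[U_{\Xi_{2l},\gamma}(\eta)\ge \frac{a}{2X}N^2-\frac{a}{2}N-N\log z\]
for every boundary condition \(\gamma\). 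We also use that \(A\) is nonempty, so that \(X\) is finite.

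\emph{Step 3: the DLR estimate.} We write
\[P(m>k)\le \frac{1}{Z_{\Xi_{2l},\gamma}}\sum_{N>k}\frac{|\Xi_{2l}|^N}{N!}\exp\!\Big(-\frac{aN^2}{2X}+CN\Big).\]
The partition function satisfies \(Z_{\Xi_{2l},\gamma}\ge e^{-|\Xi_{2l}|}\), and \(|\Xi_{2l}|\) is bounded. The sum is dominated by its first term and is at most \(\exp(-(1-\veps)ak^2/(2X))\) for \(k\) large, because the linear terms and \(\log N!\) are \(o(k^2)\). Integrating over \(\gamma\) and combining with Step 1 absorbs the remaining \(o(k^2)\) into \(\veps\).

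\emph{Main obstacle.} The delicate point is the geometric claim in Step 2: that the independence number of the graph on points of \(\Xi_{2l}\) (or of \(\Xi_l\) enlarged so as to cover neighbouring grid cells) is bounded by \(X_{2l,a}\) rather than by \(X_{4l,a}\). The grid spacing must be chosen so that the union of the relevant sets fits inside one set of the form \(\Lambda_{2l}+\supp V_{2,-}\) up to translation; we will first try spacing \(l/2\). We must also handle the fact that the periodization \(\pi_n\) may wrap points around, which is harmless once \(n\) is large compared with \(\diam\Xi_{2l}\).
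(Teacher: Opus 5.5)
Your proposal is correct and follows essentially the same route as the paper: a union bound over grid translates of \(\Xi_{2l}\) using stationarity \eqref{eq:per GPP stationary}, the DLR equation \eqref{eq:per DLR} on \(\Xi_{2l}\), and Tur\'{a}n's theorem applied to the complement of the graph of \(S_a\)-close pairs, whose clique number is at most \(X_{2l,a}\). The differences are inessential: the paper uses grid spacing \(l\), and it reduces directly to exactly \(\lfloor k\rfloor\) points via monotonicity of the nonnegative pair energy instead of summing over \(N>k\). The ``main obstacle'' you flag is not one. Once \(\Xi_l(x)\subset\Xi_{2l}(x')\) as in your Step 1, only points of a single translate of \(\Xi_{2l}\) enter the graph, so the bound by \(X_{2l,a}\) is immediate. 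That bound is finite because \(a\in A\), not merely because \(A\neq\emptyset\).
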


\begin{proof}[Proof of Lemma \ref{lem:prob}]
By \eqref{eq:per GPP stationary}, we have
\begin{equation}\label{eq:prob split}
\begin{split}
&P^{per}_{\Lambda_n}(\Omega_{n})\\
&\leq P^{per}_{\Lambda_n}\big(\{\omega\in\mathcal{C}_{\Lambda_n}\mid \text{there exists \(\bm{j}\in l\mathbb{Z}^d \cap \Lambda_{n+2l}\) such that \(m(\pi_n(\omega),\Xi_{2l}(\bm{j}))\geq k\)}\}\big)\\
& \leq (2+ n/l )^d P^{per}_{\Lambda_n}(K),
\end{split}
\end{equation}
where \(K=\{\omega\in\mathcal{C}_{\Lambda_n}\mid m(\pi_n(\omega),\Xi_{2l})\geq k\}\). 

Fix \(a\in A\). For sufficiently large \(n,k>0\), using the DLR equation \eqref{eq:per DLR}, we have
\begin{equation}\label{eq:prob K}
\begin{split}
P^{per}_{\Lambda_n}(K)
&= \int_{\mathcal{C}_{\Lambda_n}} \frac{1}{Z_{\Xi_{2l},\gamma}}\int_{\mathcal{C}_{\Xi_{2l}}}\mathbb{1}_K \big(\eta\cup\gamma_{\Xi_{2l}^c}\big)\exp\big(-U_{\Xi_{2l} ,\gamma}(\eta)\big) P^{Poi}_{\Xi_{2l}} (d\eta) P^{per}_{\Lambda_n}(d\gamma)\\
&\leq e^{z|\Xi_{2l}|}  \exp\Bigg(-\inf_{\substack{\zeta \subset \Xi_{2l} \\ \#\zeta=\lfloor k\rfloor}} \frac{1}{2}\sum_{\substack{x,y\in\zeta\\ x\neq y}}\varphi(x-y)\Bigg) \\
& \leq e^{z|\Xi_{2l}|} \exp\Bigg(-\inf_{\substack{\zeta\subset\Xi_{2l} \\ \# \zeta=\lfloor k\rfloor}}  \frac{a}{2}\sum_{\substack{x,y\in\zeta\\ x\neq y}} \mathbb{1}_{S_a}(x-y)\Bigg),
\end{split}
\end{equation}
where we used \(0\leq a\mathbb{1}_{S_a} \leq \varphi\) and \(Z_{\Xi_{2l},\gamma}\geq e^{-|\Xi_{2l}|}\) (with \(Z_{\Xi_{2l},\gamma}\) defined as in \eqref{eq:partition function}).

We next derive a lower bound for the infimum above. Let \(\zeta \in \mathcal{C}_{\Xi_{2l}}\) be a configuration with exactly \(\lfloor k \rfloor\) points. We view \(\zeta\) as the vertex set of a simple graph \(G=(\zeta, E)\), where an edge exists between distinct \(x, y \in \zeta\) if and only if \(x-y \in S_a\).
Let \(\overline{G}=(\zeta,\overline{E})\) be the complement graph of \(G\), where an edge \(\{x,y\} \in \overline{E}\) exists if and only if \(x-y \notin S_a\). 

By the definition of \(X_{2l,a}\), any independent set of \(G\) (i.e., a subset of \(\zeta\) with no edges between its elements) has size at most \(X_{2l,a}\).
This implies that \(\overline{G}\) cannot contain a complete graph of size \(X_{2l,a}+1\).

Recall that Tur\'{a}n's theorem (see, e.g., \cite{Diestel}) provides an upper bound on the number of edges for any graph that does not contain a complete subgraph of a specified size. Applying this to our complement graph \(\overline{G}\), which has \(\lfloor k \rfloor\) vertices and lacks a complete subgraph of size \(X_{2l,a}+1\), we find that the number of edges \(|\overline{E}|\) is bounded by
\begin{equation}
|\overline{E}| \leq \frac{1}{2}\left(1 - \frac{1}{X_{2l,a}}\right) \lfloor k \rfloor^2.
\end{equation}
Since the number of edges in \(G\) is given by \(|E| = \lfloor k \rfloor (\lfloor k \rfloor - 1)/2 - |\overline{E}|\), we obtain
\begin{equation}\label{eq:energy estimate}
\inf_{\substack{\zeta\subset \Xi_{2l} \\ \#\zeta=\lfloor k\rfloor}}\frac{1}{2} \sum_{\substack{x,y\in\zeta \\ x\neq y}} \mathbb{1}_{S_a}(x-y) = \inf_{\zeta} |E| \geq \frac{\lfloor k\rfloor(\lfloor k\rfloor-X_{2l,a})}{2X_{2l,a}}.
\end{equation}

Fix \(0<\veps<1\) and \(\delta>0\). Since \(l > k^{-1/\delta}\) and \(k \geq n^{\delta}\), we find that \((2+n/l)^d\) is of polynomial order in \(k\). 
By \eqref{eq:prob split}, \eqref{eq:prob K}, and \eqref{eq:energy estimate}, for sufficiently large \(n>0\), it holds that
\begin{equation}
\log P^{per}_{\Lambda_n}(\Omega_{n})\leq -\frac{(1-\veps)a}{2X_{2l,a}}k^2 .
\end{equation}
This completes the proof.
\end{proof}

\begin{proof}[Proof of Theorem \ref{thm:pairwise} \textbf{(b)}]
Fix \(0<\veps<1\) and \(a\in A\). For any sufficiently large \(E>0\), we set
\begin{equation} 
k=g((1-\veps)(E-2)),\quad
l=\left(\frac{k}{E-2}\right)^{1/2d}.
\end{equation}
Combining \cite[Lemma 5.5]{KP} with a simple calculation and noting that \(\essinf V= -\infty\), we have, for some constant \(C>0\), \(E^{-1/4p}/C\leq l = o(1)\) and \(E^{(2p-d)/2p}/C \leq g(E)=o(E)\) as \(E\to+\infty\), where \(p\) is given in Condition \hyperref[cd:V2]{\textbf{(V2)}}.
We set \(n = \lfloor E^{\beta_0} l \rfloor\), where \(\beta_0>0\) is a constant. For a given \(\beta\) in Theorem \ref{thm:per app}, by choosing \(\beta_0>0\) sufficiently large, we can ensure that \(n\geq E^{\beta}\) for sufficiently large \(E>0\).

From Lemma \ref{lem:operator}, for large \(E>0\), we get \(H_{\omega,n,\theta}\geq -E+2\) 
for all \(\theta\in T_n^{*}\) and all \(\omega\in\Omega^2_E\) such that \(m(\pi_n(\omega), \Xi_{2l} (x))\leq k\) for all \(x\in\Lambda_n\) (the definition of \(\Omega^2_E\) is \eqref{eq:Omega2}). 

By an argument similar to that in the proof of Proposition 4.1 in \cite{KP}, it follows that
\begin{equation}
\begin{split}
&\int_{\Omega^2_E} N_{\omega,n}(-E+1) P^{per}_{\Lambda_n}(d\omega)\\
& \leq E^{\alpha} P^{per}_{\Lambda_n}\big(\{\omega\in\mathcal{C}_{\Lambda_n}\mid \text{there exists \(x\in\Lambda_n\) such that \(m(\pi_n(\omega), \Xi_{2l} (x))>k\)}\}\big),
\end{split}
\end{equation}
for some \(\alpha>0\) independent of \(E\).
Pick a sufficiently large constant \(\rho>0\).
By an argument similar to that in the proof of Lemma 4.1 in \cite{KP} and the fact that \(P^{per}_{\Lambda_n}\) is stochastically dominated by \(P^{Poi}_{\Lambda_n}\), there exists a constant \(C>0\) such that for sufficiently large \(E>0\), it follows that
\begin{equation}
\int_{\mathcal{C}_{\Lambda_n}} N_{\omega,n}(-E+1)P^{per}_{\Lambda_n}(d\omega)
\leq \int_{\Omega^2_E} N_{\omega,n}(-E+1)P^{per}_{\Lambda_n}(d\omega) + \frac{C}{\lfloor E^{\rho}/2\rfloor !}.
\end{equation}
By Lemma \ref{lem:prob}, we obtain
\begin{equation}
\log\int_{\mathcal{C}_{\Lambda_n}} N_{\omega,n}(-E+1)P^{per}_{\Lambda_n}(d\omega) \leq -\frac{(1-\veps) a}{2X_{4l,a}}  g ((1-\veps)(E-2))^2 (1+o(1)) \quad (E\to+\infty).
\end{equation}
By Theorem \ref{thm:per app} (taking \(\nu>0\) sufficiently large), we have
\begin{equation}
\log N(-E) \leq -\frac{(1-\veps)^2 a}{2X_{4l,a}} g ((1-\veps)(E-2))^2 (1+o(1))\quad (E\to+\infty).
\end{equation}

From the Appendix in \cite{KP}, we have \(g(E-2) \leq g(E) \leq g(E-2) + C\) for some \(C > 0\), which implies \(g(E-2)\sim g(E)\) as \(E\to +\infty\). 
Since \(g\) is concave on \([E_0,+\infty)\) for some \(E_0>0\) from the variational principle, we have
\begin{equation*}
\frac{g(E)-g(E_0)}{E-E_0} \leq \frac{g((1-\veps)E) - g(E_0)}{(1-\veps)E-E_0},
\end{equation*}
which implies that 
\((1-\veps)g(E) \leq g((1-\veps)E)(1+o(1))\) as \(E\to+\infty\).
Thus, we obtain \((1-\veps) g(E) \leq g((1-\veps)(E-2)) (1+o(1))\) as \(E\to+\infty\).

Taking \(E \to +\infty\) (which implies \(l \downarrow 0\), and hence \(X_{4l,a} \to T_a\)), and subsequently letting \(\veps \downarrow 0\), we obtain
\begin{equation}
\varlimsup_{E \to +\infty} \frac{\log N(-E)}{g(E)^2} \leq -\frac{a}{2T_a}.
\end{equation}
Since this inequality holds for any \(a \in A\), we obtain the upper bound \eqref{eq:upper bound}.
\end{proof}

\begin{proof}[Proof of Proposition \ref{prop:pairwise} \textbf{(a)}]
Fix \(a\in(0,a_0)\). Note that \(a\in A\) by definition.
Since \(\Int S_{a}\) contains the compact set \(\{x-y\mid x,y\in \supp V_{2,-}\}\),
it follows that for sufficiently small \(l>0\),
\begin{equation}
\{x-y \mid x,y \in \Lambda_l + \supp V_{2,-}\} \subset S_{a}.
\end{equation}
This implies \(X_{l,a}=1\) and hence \(T_{a}=1\). Consequently, we obtain
\begin{equation}\label{eq:sup=1}
\sup_{a\in A} \frac{a}{T_a} =a_0.
\end{equation} 
Combining this with the upper bound in Theorem \ref{thm:pairwise} \textbf{(b)} and the corresponding lower bound in Theorem \ref{thm:pairwise} \textbf{(a)}, we obtain \eqref{eq:IDS pairwise}.
\end{proof}

\begin{proof}[Proof of Proposition \ref{prop:pairwise} \textbf{(b)}]
Fix \(a\in (0,a_0)\). Note that \(a\in A\). Since \(\lim_{x\to 0}\varphi(x) = a_0 > a\), we can choose \(r>0\) such that \(\Lambda_{4r}\subset S_{a}\). Since \(V\) is bounded outside \(\Lambda_{r}(x_0)\), we can choose a decomposition of \(V\) such that \(\supp V_{2,-} \subset \Lambda_r (x_0)\).
For any \(l\in(0,r)\), the set \(\Lambda_l + \supp V_{2,-}\) is contained in \(\Lambda_{2r}(x_0)\). Thus, for any \(x, y \in \Lambda_l + \supp V_{2,-}\), we have \(x-y \in \Lambda_{4r} \subset S_a\). This implies that \(X_{l,a}=1\) for all sufficiently small \(l>0\), which yields \(T_{a}=1\). 
Combining this with Theorem \ref{thm:pairwise} \textbf{(a)} and \textbf{(b)}, we obtain \eqref{eq:IDS pairwise}.
\end{proof}

\section{Improvement on the upper bound}\label{sec:improve}
In this section, we improve Theorem \ref{thm:pairwise} \textbf{(b)} under additional assumptions on the single-site potential \(V\), resulting in Theorem \ref{thm:V5}.

\subsection{Main results}
We consider \(V\) satisfying Condition \hyperref[cd:V1]{\textbf{(V1)}} and the following condition: 
\begin{description}
\item[\phantomsection \textbf{(V5)}\label{cd:V5}]
\(V_2\) is given by 
\begin{equation}
V_2= \sum_{i=1}^{q} b_i V_3^{(i)} +V_4 +V_5
\end{equation}
where \(q\in\mathbb{Z}_{>0}\) and \(b_i >0\) for \(i=1,\ldots,q\), and we assume the following:
\begin{itemize}
\setlength{\itemsep}{0.5em}
\item \(V_3, V_4, V_5\), and \(V_3^{(1)},\ldots, V_3^{(q)}\) are compactly supported functions in \(L^{p}(\mathbb{R}^d)\) with \(p>p(d)\), where \(p(d)\) is the constant given in \eqref{eq:p(d)}. Furthermore, the supports \(\supp V_3^{(1)},\ldots, \supp V_3^{(q)},\supp V_5\) are pairwise disjoint;
\item for each \(i=1,\ldots,q\), it holds that \(\essinf  V_3=\essinf  V_3^{(i)}=-\infty\). In contrast, \(V_4\) satisfies \(\essinf  V_4<0<\esssup V_4\), and \(V_5\) satisfies \(\essinf V_5 \geq 0\);
\item as \(E\to +\infty\), \(g_{3}^{(i)}(E)\sim g_3(E)\) for each \(i=1,\ldots,q\), and \(g_3(E)=o(g_4(E))\).
\end{itemize}
\end{description}
Here, \(g_3^{(i)}(E)\), \(g_3 (E)\) and \(g_4 (E)\) are the inverse functions of \(E_{-,3}^{(i)}(g)=-E_{3}^{(i)}(g)\), \(E_{-,3}(g) = -E_{3}(g)\), and \(E_{-,4}(g) = -E_{4}(g)\), respectively (cf. the definition of \(g(E)\)), where \(E_{3}^{(i)}(g)\), \(E_3 (g)\), and \(E_4 (g)\) denote the minimum eigenvalues of \(-\Delta+gV_3^{(i)}\), \(-\Delta+gV_3\), and \(-\Delta+gV_4\), respectively.

Under Condition \hyperref[cd:V5]{\textbf{(V5)}}, we define the following condition:
\begin{description}
\item[\phantomsection \textbf{(V6)}\label{cd:V6}]
\begin{itemize}
\setlength{\itemsep}{0.5em}
\item there exist distinct points \(y_1,\ldots, y_q\) such that \(V^{(i)}_3=\tau_{y_i}V_3\) for each \(i=1,\ldots,q\);
\item \(g_3(E)=o(\tilde{g}_4(E))\quad (E\to +\infty)\).
\end{itemize}
\end{description}
Here, \(\tilde{g}_4(E)\) is the inverse function of \(\tilde{E}_{-,4}(g)(=-\tilde{E}_{4}(g))\), and \(\tilde{E}_{4}(g)\) denotes the infimum of the spectrum of \(-\Delta - gV_4\).
We note that \(\tilde{g}_{4}(E)\) is well-defined for sufficiently large \(E\) because \(\essinf  (-V_{4})<0\) holds under Condition \hyperref[cd:V5]{\textbf{(V5)}}. 

We define the set \(\tilde{A}\) by 
\begin{equation}
\tilde{A}=\{a>0\mid x-y\in\Int S_a \text{ whenever } x,y\in \supp V_3^{(i)} \text{ for each } i=1,\ldots, q\},
\end{equation}
where \(S_a\) is defined in \eqref{eq:S_a}.
We note that \(\tilde{A}\) may be empty, and that \(\tilde{A}\neq\emptyset\) implies that the supports of \(V_3^{(i)}\) are small enough compared to the interaction range.

We put
\begin{equation}
I_a= \left\{(i,j)\in \{1,\ldots, q\}^2 \,\middle\vert\, \text{\(x-y\in \Int  S_a\) for all \(x\in \supp V_3^{(i)}\) and \(y\in \supp V_3^{(j)}\)}\right\}.
\end{equation}
For any \(I'\subset\{1,\ldots,q\}^2\), we define
\begin{equation}
K(I')=\{J\subset \{1,\ldots, q\}\mid \text{\((i,j)\notin I'\) whenever \(i, j\in J\) and \(i\neq j\)}\}.
\end{equation}

Under Conditions \hyperref[cd:V5]{\textbf{(V5)}}, \hyperref[cd:V6]{\textbf{(V6)}}, and \hyperref[cd:L]{\textbf{(L)}}, we introduce the following condition on \(V\) and \(\varphi\):
\begin{description}
\item[\phantomsection \textbf{(V--P1)}\label{cd:V-P1}]
The following equalities hold:
\begin{itemize}
\setlength{\itemsep}{0.5em}
\item \(a_0=\sup \tilde{A}\);
\item \(\displaystyle \lim_{a\uparrow a_0} \max_{J\in K(I_a)} \sum_{i\in J} b_i^2 =\max_{J\in K(I)} \sum_{i\in J}b_i^2\),
\end{itemize}
where we set
\begin{equation}
I= \{(i,j)\in \{1,\ldots,q\}^2 \mid y_i-y_j\in \supp \varphi\}.
\end{equation}
\end{description}
We note that \(a\mapsto \max_{J\in K(I_a)} \sum_{i\in J} b_i^2\) is non-decreasing.

In \cite{KP}, the leading term of \(\log N(-E)\) under Conditions \hyperref[cd:V5]{\textbf{(V5)}} and \hyperref[cd:V6]{\textbf{(V6)}} remains undetermined for the Poisson case, except when Condition \hyperref[cd:V4]{\textbf{(V4)}} is additionally satisfied. In contrast, we determine it for certain Gibbs point processes with pairwise interactions under Condition \hyperref[cd:V-P1]{\textbf{(V--P1)}}.

\begin{theorem}\label{thm:V5}
Assume that \(\Gamma\) is a Gibbs point process with a pairwise energy function \(U\) defined in \eqref{eq:pairwise} satisfying Conditions \hyperref[cd:U1]{\textbf{(U1)}} and \hyperref[cd:U3]{\textbf{(U3)}}, and let \(V\) satisfy Conditions \hyperref[cd:V1]{\textbf{(V1)}} and \hyperref[cd:V5]{\textbf{(V5)}}.
Furthermore, suppose \(\tilde{A}\neq \emptyset\).
Then, the following results hold.
\begin{itemize}
\item[\textbf{(a)}] It holds that
\begin{equation}
\log N(-E)\leq -\frac{1}{2}\,\sup_{a\in \tilde{A}}\frac{a}{\displaystyle \max_{J\in K(I_a)}\sum_{i\in J}b_i^2}\, g_{3}(E)^2 (1+o(1))\quad (E\to+\infty).
\end{equation}
\item[\textbf{(b)}] Assume further that \(V\) satisfies Conditions \hyperref[cd:V6]{\textbf{(V6)}} and \hyperref[cd:L]{\textbf{(L)}}, and that Condition \hyperref[cd:V-P1]{\textbf{(V--P1)}} is satisfied.
Then, we have
\begin{equation}\label{eq:V6 IDS}
\log N(-E)\sim -\frac{1}{2}\frac{a_0}{\displaystyle \max_{J\in K(I)}\sum_{i\in J}b_i^2}\, g_{3}(E)^2 \quad (E\to +\infty).
\end{equation}
\end{itemize}
\end{theorem}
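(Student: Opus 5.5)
The proof follows the scheme of Theorem \ref{thm:pairwise}. The upper bound comes from a refinement of Lemmas \ref{lem:operator} and \ref{lem:prob}, and the lower bound comes directly from Theorem \ref{thm:pairwise} \textbf{(a)}.

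\textbf{Upper bound (a).} Fix \(a\in\tilde A\) and \(0<\veps<1\). Take \(l\to0\) slowly and \(n\) polynomial in \(E\), as in the proof of Theorem \ref{thm:pairwise} \textbf{(b)}. The first step is an operator lower bound replacing Lemma \ref{lem:operator}. For a configuration, let \(k_i(x)\) be the number of points of \(\pi_n(\omega)\) in \(\Lambda_{2l}(x)-\supp V_3^{(i)}\); these are the points whose translate of \(V_3^{(i)}\) meets the localization cube at \(x\). In each IMS cube we split the potential into three parts:
\begin{itemize}
\item the \(V_3^{(i)}\) pieces, which lie on disjoint supports;
\item the \(V_4\) and \(V_1\) pieces;
\item the nonnegative \(V_5\) pieces, which we drop.
\end{itemize}
Since the supports of the \(V_3^{(i)}\) are disjoint, we use Hölder and Sobolev together with scaling to get a bound of the form
\[
-\Delta+\sum_i b_i k_i V_3^{(i)}\gtrsim -E_{-,3}\Big(\big(\textstyle\sum_i (b_ik_i)^2\big)^{1/2}\Big)(1+o(1)).
\]
Heuristically, a well-separated superposition behaves like an \(\ell^2\) combination in the coupling constant, via \(\sum_i E_{-,3}(b_ik_i)\le E_{-,3}(\|b k\|_2)\) under the growth of \(E_{-,3}\). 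The interaction terms \(\sum_i k\,V_4\) contribute only \(E_{-,4}(k)\), and since \(g_3=o(g_4)\) this is \(o(E)\) whenever \(k=O(g_3(E))\). So the spectrum lies above \(-E+2\) unless some cube has \(\sum_i b_i^2k_i^2\ge (1-\veps)g_3(E)^2\).

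\textbf{Probability of a bad cube.} We then bound the probability that some cube has \(\sum_i b_i^2k_i^2\ge (1-\veps)g_3(E)^2\), in the manner of Lemma \ref{lem:prob}, using the DLR equation and stationarity. Points within the same \(i\)-cluster have pairwise differences in \(\Int S_a\) because \(a\in\tilde A\), and so do points of clusters \(i,j\) with \((i,j)\in I_a\); this holds once \(l\) is small. The energy is therefore at least
\[
\frac a2\Big(\sum_i k_i^2+\sum_{(i,j)\in I_a,\,i\ne j}k_ik_j\Big)(1+o(1)).
\]
We then minimize this quadratic form subject to \(\sum_i b_i^2k_i^2\ge K^2\). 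Writing \(x_i=b_ik_i\), the problem is a Motzkin--Straus type one on the graph with edges \(I_a\), with weights \(1/b_i^2\).

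\textbf{Main obstacle.} The key step is showing that the minimum of the quadratic form under this constraint is \(K^2/\max_{J\in K(I_a)}\sum_{i\in J}b_i^2\). For independent sets \(J\), the choice \(k_i\propto 1\) on \(J\) gives exactly this value: the form equals \(\sum_{i\in J}k^2\) and the constraint reads \(k^2\sum_{i\in J} b_i^2=K^2\). For the converse, I would drop the cross terms with a nonnegativity argument and use Motzkin--Straus on the complement, where the extremal support is a clique. This is the generalization of the Turán step in Lemma \ref{lem:prob}, and it is the part I expect to be hardest. Combining these bounds with Theorem \ref{thm:per app} and the regularity of \(g_3\) (concavity, and \(g_3(E-2)\sim g_3(E)\)) gives (a) after \(\veps\to0\) and optimization over \(a\).

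\textbf{Lower bound (b).} Apply Theorem \ref{thm:pairwise} \textbf{(a)} with \(\bm{x}_q=(y_i)_{i\in J^*}\), where \(J^*\in K(I)\) maximizes \(\sum_{i\in J}b_i^2\), and weights \(c_i=b_i^2/\sum_{j\in J^*}b_j^2\). The pairs in \(J^*\) have \(y_i-y_j\notin\supp\varphi\), so only the diagonal terms survive, and they give \(\frac{a_0}{2}\sum_i c_i^2\). We then need
\[
g_{\bm c,\bm x}(E)\sim g_3(E)\Big(\textstyle\sum_{J^*}b_i^2\Big)\big/\Big(\textstyle\sum_{J^*} b_i^2\Big)^{1/2}\cdot(\ldots).
\]
To get this, use asymptotic ground states concentrated at each \(y_i\), with a localized trial function that has mass \(\propto b_i^2\) at \(y_i\). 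Here (V6) (\(g_3=o(\tilde g_4)\)) makes \(V_4\) and \(V_5\), including their positive parts, negligible. Finally, (V--P1) matches the constants as \(a\uparrow a_0\).
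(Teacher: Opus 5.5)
There is a genuine gap, and it starts with your operator bound for (a).

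\textbf{Upper bound (a).} The translates entering one localization cube do not have disjoint supports. A point counted in $k_i(x)$ lies in $\Lambda_{2l}(x)-\supp V_3^{(i)}$, so $\tau_{x^{(i)}}V_3^{(i)}$ meets the \emph{same} cube for every $i$. Under (V6), for instance, $k_i$ points placed at $z-y_i$ ($i=1,\ldots,q$) produce, through their $b_i\tau_{y_i}V_3$ components, the single well $(\sum_i b_ik_i)\tau_zV_3$. This is exactly the cooperative well of Case~2 in the two-well example. The effective coupling is therefore the linear quantity $\sum_i b_ik_i$, and $-E_{-,3}(\|bk\|_2)(1+o(1))$ is not a lower bound: with $b_1=b_2=1$ and $k_1=k_2=k$ aligned as above, the coupling is $2k$, not $\sqrt{2}\,k$. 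Even for genuinely separated wells, the relevant combination would be $\max_i b_ik_i$, not an $\ell^2$ norm.

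The paper's Lemma~\ref{lem:V5 H} instead writes $(1+\veps)\sum_i\sum_s b_i\tau_{x^{(i)}_s}V_3^{(i)}$ as a convex combination of single wells of coupling $(1+\veps)\sum_{i'}b_{i'}m_{i'}$, with weights $b_i/\sum_{i'}b_{i'}m_{i'}$. It then uses $E^{(i)}_{-,3}(g)\le E_{-,3}((1+\veps)g)$, which gives $H_{\omega,n,\theta}\ge -E_{-,3}((1+\veps)k)$ whenever $\sum_i b_im_i<k$ in every cube.

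Your optimization step is also miscomputed, and the two errors cancel in the displayed constant. Under your constraint $\sum_i b_i^2k_i^2\ge K^2$, the choice $k_i\equiv k$ on an independent $J$ gives $|J|K^2/\sum_{i\in J}b_i^2$, not $K^2/\sum_{i\in J}b_i^2$. The true minimum is $K^2/\max_i b_i^2$, since the form is at least $\sum_i k_i^2$. Taken together, your two steps would give $\log N(-E)\lesssim -\frac{a}{2\max_i b_i^2}g_3(E)^2$. This contradicts the asymptotics $-\frac{a_0}{2(b_1^2+b_2^2)}g_3(E)^2$ established in Case~2 of the two-well example.

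The correct problem is to minimize $\sum_{(i,j)\in I_a}k_ik_j$ subject to $\sum_i b_ik_i\ge K$. On an independent $J$, Cauchy--Schwarz gives $K^2/\sum_{i\in J}b_i^2$, attained at $k_i\propto b_i$. The paper obtains this from the exponential moment bound $\log\int\exp(t\sum_i b_im(\omega,\Xi^{(i)}_{4l}))\,dP^{per}_{\Lambda_n}\le \frac{t^2}{2a(1-\veps')}\max_{J\in K(I_a)}\sum_{i\in J}b_i^2$ followed by Chebyshev.

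\textbf{Lower bound (b).} The same separated-wells picture undermines this part. A trial function that spreads mass over wells at different locations only gives a convex combination of single-well energies, never the summed depth. The asymptotics for $g_{\bm c,\bm x}$ that you leave as $(\ldots)$ is exactly the missing ingredient.

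Lemma~\ref{lem:V6} supplies it as follows:
\begin{itemize}
\item It takes $x_j=-y_{\sigma(j)}$ (note the sign), so that every component $b_{\sigma(j)}\tau_{x_j}V_3^{(\sigma(j))}=b_{\sigma(j)}V_3$ sits at the origin.
\item It tests with the single ground state of $-\Delta+g(\sum_j c_jb_{\sigma(j)})V_3$.
\item The $V_4$ terms are absorbed using $g_3=o(\tilde g_4)$.
\item The remaining pieces, namely the other $V_3^{(i)}$ translates and $V_5$, are supported away from $\supp V_3$ and are absorbed via the exponential decay of Lemma~\ref{lem:decay}.
\end{itemize}
This yields $\varliminf_{E\to+\infty} g_3(E)/g_{\bm c,\bm x}(E)\ge\sum_j c_jb_{\sigma(j)}$.

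The weights must then be $c_j\propto b_{\sigma(j)}$, which makes $\sum_jc_j^2/(\sum_jc_jb_{\sigma(j)})^2=1/\sum_jb_{\sigma(j)}^2$. Your choice $c_i\propto b_i^2$ gives $\sum_ib_i^4/(\sum_ib_i^3)^2$. This exceeds $1/\sum_ib_i^2$ unless the $b_i$ coincide on $J^*$, so your upper and lower constants would not match.
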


We now focus on the case where each \(V^{(i)}_3\) has a unique singular point at \(y_i\); that is, \(V^{(i)}_3\) is bounded outside any neighborhood of \(y_i\).

For \(a>0\), we define the set \(\tilde{I}_a\) by
\begin{equation}
\tilde{I}_a=\{(i,j)\in \{1,\ldots,q\}^2\mid y_i-y_j \in \Int  S_a\}.
\end{equation}
Furthermore, under Condition \hyperref[cd:L]{\textbf{(L)}}, we introduce the following condition, which is obtained by replacing \(I_a\) with \(\tilde{I}_a\) in Condition \hyperref[cd:V-P1]{\textbf{(V--P1)}}:

\begin{description}
\item[\phantomsection \textbf{(V--P2)}\label{cd:V-P2}]
it holds that 
\begin{equation}
\lim_{a\uparrow a_0} \max_{J\in K(\tilde{I}_a)} \sum_{i\in J} b_i^2 =\max_{J\in K(I)} \sum_{i\in J}b_i^2.
\end{equation}
\end{description}

\begin{proposition}\label{prop:V6}
Suppose that the Gibbs point process \(\Gamma\) with the pairwise energy function in \eqref{eq:pairwise} satisfies Conditions \hyperref[cd:U1]{\textbf{(U1)}} and \hyperref[cd:U3]{\textbf{(U3)}}, and that \(\varliminf_{x\to 0}\varphi(x)>0\) holds. We assume that \(V\) satisfies Conditions \hyperref[cd:V1]{\textbf{(V1)}} and \hyperref[cd:V5]{\textbf{(V5)}}, and that each \(V_3^{(i)}\) has a unique singular point at \(y_i\) for \(i=1,\ldots,q\). 
Then, the following results hold.
\begin{itemize}
\item[\textbf{(a)}] We have
\begin{equation}\label{eq:V5 IDS}
\log N(-E)\leq -\frac{1}{2}\, \sup_{a\in A} \frac{a}{\displaystyle\max_{J\in K(\tilde{I}_a)}\sum_{i\in J}b_i^2}\, g_3(E)^2 (1+o(1)) \quad (E\to+\infty).
\end{equation}  
\item[\textbf{(b)}] Furthermore, we assume that \(V\) satisfies \hyperref[cd:V6]{\textbf{(V6)}}, and that \(\varphi\) satisfies \hyperref[cd:L]{\textbf{(L)}}. If Condition \hyperref[cd:V-P2]{\textbf{(V--P2)}} is satisfied,
then we have
\begin{equation}\label{eq:prop V6 IDS}
\log N(-E)\sim -\frac{1}{2}\frac{a_0}{\displaystyle\max_{J\in K(I)}\sum_{i\in J}b_i^2}\, g_3(E)^2 \quad (E\to+\infty).
\end{equation}
\end{itemize}
\end{proposition}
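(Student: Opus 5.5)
The plan is to derive Proposition \ref{prop:V6} from Theorem \ref{thm:V5}. The single-singularity hypothesis lets us replace the sets \(\tilde{A}\) and \(I_a\) by the whole set \(A\) and by \(\tilde I_a\), after re-choosing the decomposition of \(V\) so that the supports of the \(V_3^{(i)}\) are small. This mirrors the passage from Theorem \ref{thm:pairwise} to Proposition \ref{prop:pairwise}~\textbf{(b)}.

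\textbf{Step 1 (re-localizing the singular parts).} Fix \(a\in A\) and \(r>0\). Since \(V_3^{(i)}\) is bounded outside \(\Lambda_r(y_i)\), I write
\[
V_3^{(i)}=V_3^{(i)}\mathbb{1}_{\Lambda_r(y_i)}+V_3^{(i)}\mathbb{1}_{\Lambda_r(y_i)^c}.
\]
The second summand is bounded and compactly supported, so I move it into \(V_1\) (or into \(V_4+V_5\)). Condition \hyperref[cd:V1]{\textbf{(V1)}} still holds, and the new \(V_3^{(i)}\) keeps \(\essinf=-\infty\). I must also check that the asymptotics \(g_3^{(i)}\sim g_3\) and \(g_3=o(g_4)\) survive this change.

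The key fact I would prove is that adding a bounded potential \(W\) does not change \(g\) to leading order. By the variational principle, \(|E_{-}^{V+W}(g)-E_-^{V}(g)|\le g\|W\|_\infty\). Since \(E_-(g)/g\to\infty\) when \(\essinf V=-\infty\) (this follows from \(g(E)=o(E)\), stated in the proof of Theorem \ref{thm:pairwise}~\textbf{(b)}), we get \(E_-^{V+W}(g)\sim E_-^V(g)\). Concavity of the inverse, as used in that proof, then gives \(g^{V+W}(E)\sim g^V(E)\).

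The reference function \(V_3\) is handled the same way. I truncate it near its singular point, and \(g_3\) is unchanged asymptotically. The points \(y_i\) stay distinct, so for small \(r\) the supports remain pairwise disjoint. The part of \(V_5\) that overlaps the supports can be moved into \(V_4\) or \(V_1\) in the same way.

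\textbf{Step 2 (comparing the index sets).} Given \(a\in A\), choose \(a'<a\) arbitrarily close to \(a\), and use that \(0\in\Int S_{a}\). If \(y_i-y_j\in\Int S_a\), then for small \(r\) the set \(\Lambda_{2r}(y_i-y_j)\) lies in \(S_a\), hence \((i,j)\in I_a\) for the truncated supports. The diagonal case \(i=j\) gives \(a\in\tilde A\). Thus \(\tilde I_a\subset I_a\), so \(K(I_a)\subset K(\tilde I_a)\), and the max over \(K(I_a)\) is at most the max over \(K(\tilde I_a)\). Applying Theorem \ref{thm:V5}~\textbf{(a)} with this \(a\) gives the bound with \(a/\max_{J\in K(\tilde I_a)}\sum_{i\in J} b_i^2\). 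Taking the supremum over \(a\in A\) yields \eqref{eq:V5 IDS}. (I expect no need for \(a'\); openness of \(\Int S_a\) suffices.)

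\textbf{Step 3 (part (b)).} After truncation, \hyperref[cd:V6]{\textbf{(V6)}} still holds, since \(V_3^{(i)}=\tau_{y_i}V_3\) with the same truncation radius. I would then verify \hyperref[cd:V-P1]{\textbf{(V--P1)}}. Under \hyperref[cd:L]{\textbf{(L)}}, every \(a<a_0\) lies in \(A\), and after shrinking \(r\) also in \(\tilde A\). Since \(\tilde A\subset A\subset(0,a_0]\), we get \(\sup\tilde A=a_0\).

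For the limit condition, monotonicity and Step 2 give
\[
\max_{K(I_a)}\ \le\ \max_{K(\tilde I_a)}\ \to\ \max_{K(I)}\quad\text{by \hyperref[cd:V-P2]{\textbf{(V--P2)}}}.
\]
The problem is that \(r\) depends on \(a\), which is an obstacle. To avoid checking \hyperref[cd:V-P1]{\textbf{(V--P1)}} literally, I would combine part (a) with the lower bound. Part (a) and \hyperref[cd:V-P2]{\textbf{(V--P2)}} give the upper bound with constant \(a_0/\max_{K(I)}\sum_{i\in J} b_i^2\). The matching lower bound is the lower-bound half of Theorem \ref{thm:V5}~\textbf{(b)}. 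That half presumably comes from Theorem \ref{thm:pairwise}~\textbf{(a)}, with clusters of weights proportional to \(b_i^2\) placed at the \(y_i\), \(i\in J\), for \(J\in K(I)\), and it does not use \hyperref[cd:V-P1]{\textbf{(V--P1)}}.

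\textbf{Main obstacle.} Step 1 is the delicate point: we must confirm that the truncation preserves all the hypotheses of \hyperref[cd:V5]{\textbf{(V5)}}, in particular that \(g_3=o(g_4)\) and \(g_3=o(\tilde g_4)\) are unaffected by the bounded perturbations. A second point is extracting the lower bound independently of \hyperref[cd:V-P1]{\textbf{(V--P1)}}.
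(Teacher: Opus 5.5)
Your proposal is correct and follows the paper's proof essentially step by step. For (a), the paper likewise truncates each \(V_3^{(i)}\) to a small ball \(B(y_i,r)\) and absorbs the bounded remainder into \(V_1\); it gets \(g_{3,r}^{(i)}\sim g_3^{(i)}\) from the variational principle plus concavity, chooses \(r\) with \(B(0,2r)\subset\Int S_a\) and \(\tilde I_a=I_{a,r}\), and applies Theorem \ref{thm:V5}~\textbf{(a)}. For (b), it combines (a) with \hyperref[cd:V-P2]{\textbf{(V--P2)}} and the lower bound \eqref{eq:V6 lower} from the proof of Theorem \ref{thm:V5}~\textbf{(b)}, which is exactly your final route. (Two small remarks. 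You need not truncate the reference \(V_3\) or touch \(V_5\), since the supports are already disjoint and \(g_3\), \(g_4\) are unchanged. Also, in that lower bound the optimal cluster weights are \(c_j\propto b_{\sigma(j)}\), not \(b_i^2\); this is what makes Cauchy--Schwarz give the constant \(1/\sum_{j\in J}b_j^2\).)
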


As a simple illustration of Proposition \ref{prop:V6} \textbf{(b)}, we consider a Strauss process (see Example \ref{ex:Strauss}) under a single-site potential \(V\) with two potential wells. As Proposition \ref{prop:V6} \textbf{(b)} shows, when each well has a unique singular point, the problem reduces to examining the distances between these points.

\begin{example}[Two potential wells]
Suppose that \(q=2\) and \(V\) has two deep wells whose unique singular points are located at \(y_1\) and \(y_2\), with the corresponding depth weights \(b_1>0\) and \(b_2>0\). That is, \(V=b_1 \tau_{y_1} V_3 + b_2 \tau_{y_2} V_3\) (we can choose \(V_1+V_4+V_5=0\) such that Conditions \hyperref[cd:V5]{\textbf{(V5)}} and \hyperref[cd:V6]{\textbf{(V6)}} are satisfied). Without loss of generality, we assume \(b_1 \geq b_2\).
We consider a Strauss process whose pairwise energy function is given by \(\varphi(x)= a_0 \mathbb{1}_{B(0,R)}(x)\) for some \(0<a_0<+\infty\) and \(0<R<\mu_d^{1/d}\), where \(\mu_d\) is given by \eqref{eq:def mu_d}.
In this case, the support of \(\varphi\) is \(B(0,R)\). By the definition of \(I\), we trivially have \((1,1), (2,2)\in I\). 
Here, we exclude the boundary case \(|y_1-y_2|=R\). In this specific case, Condition \hyperref[cd:V-P2]{\textbf{(V--P2)}} fails.
The asymptotic behavior of the integrated density of states depends heavily on the distance between the two singular points:

\begin{itemize}
\item \textbf{Case 1 (\(|y_1-y_2|<R\)):} The two singular points are close to each other. Fix \(0<a<a_0\). Since \(y_1-y_2\in \Int S_a\), we have \((1,2), (2,1)\in I = \tilde{I}_a\). Thus, \(K(I)=K(\tilde{I}_a)= \{ \emptyset, \{1\}, \{2\} \}\). By \eqref{eq:prop V6 IDS}, we obtain
\begin{equation}
\log N(-E)\sim -\frac{1}{2}\frac{a_0}{b_1^2}\, g_3(E)^2 \quad (E\to+\infty).
\end{equation}
In this case, the event in which points concentrate in a single cluster gives the dominant contribution to the leading term.

\item \textbf{Case 2 (\(|y_1-y_2|> R\)):} The two singular points are separated by a distance strictly greater than the interaction range. Fix \(0<a<a_0\). Since \((1,2), (2,1)\notin I = \tilde{I}_a\), the set \(K(I)=K(\tilde{I}_a)\) includes the set \(\{1,2\}\). Therefore, we obtain
\begin{equation}
\log N(-E)\sim -\frac{1}{2}\frac{a_0}{b_1^2+b_2^2}\, g_3(E)^2 \quad (E\to+\infty).
\end{equation}
In this case, the dominant contribution to the leading term comes from the event where points form two distinct clusters separated by the relative displacement \(y_1-y_2\). Due to this specific separation, the \(b_1 \tau_{y_1} V_3\) components generated by one cluster and the \(b_2 \tau_{y_2} V_3\) components generated by the other overlap around the same point to cooperatively form a single well. Although this resulting well is no deeper than the one in Case 1 (with equality holding when \(b_1=b_2\)), the probability of forming such separated clusters is higher than that of forming a single cluster due to the repulsive interaction. Consequently, the IDS decays more slowly than in Case 1.
\end{itemize}
\end{example}

\subsection{Proof of Theorem \ref{thm:V5} \textbf{(a)}}

We pick \(a\in \tilde{A}\). 
We set
\begin{equation}
\begin{split}
\Xi^{(i)}_{l}&=\{x_1-x_2\mid x_1\in \Lambda_l,\ x_2\in \supp V^{(i)}_3\},\\
\Xi_{l}^{*}&=\{x_1-x_2\mid x_1\in \Lambda_l,\ x_2\in \supp V_4\},
\end{split}
\end{equation}
where the constant \(l>0\) (depending on \(a\)) is chosen to satisfy the following condition:
\begin{description}
\item[\phantomsection \textbf{(S)}\label{cd:S}]
\begin{itemize}
\setlength{\itemsep}{0.5em}
\item \(\Xi_{4l}^{(1)},\ldots, \Xi_{4l}^{(q)}\) are pairwise disjoint;
\item \(\{x-y\mid x\in\Xi_{4l}^{(i)},\ y \in \Xi_{4l}^{(j)}\}\subset \Int  S_a\) whenever \((i,j)\in I_a\).
\end{itemize}
\end{description}
For any \(x\in\mathbb{R}^d\), we define \(\Xi^{(i)}_{l}(x)=\tau_x \Xi^{(i)}_{l}\) and \(\Xi_l^{*} (x)=\tau_x \Xi_{l}^{*}\).

Let \(n=\lfloor E^{\beta}\rfloor\), where \(\beta>0\) is a sufficiently large constant.
For any \(E,k>0\), we put
\begin{equation}
\hat{\Omega}_{E,k} =\bigg\{\omega\in\mathcal{C}_{\Lambda_n}\, \bigg\vert\, \text{there exists \(x\in\Lambda_n\) such that \(\sum_{i=1}^q b_i m\big(\pi_n(\omega),\Xi^{(i)}_{2l} (x)\big)\geq k\)}\bigg\}.
\end{equation}

The proof of Theorem \ref{thm:V5} \textbf{(a)} is a consequence of the following two lemmas.
\begin{lemma}\label{lem:V5 H}
Suppose that the assumptions of Theorem \ref{thm:V5} \textbf{(a)} hold.
For any \(\veps>0\), there exist \(E_0, k_0>0\) such that for any \(E>E_0\), \(k>k_0\), \(\omega\notin \hat{\Omega}_{E,k}\), and \(\theta\in T_n^{*}\), it holds that
\begin{equation}\label{eq:V5 H}
H_{\omega,n,\theta}\geq -E_{-,3}((1+\veps)k).
\end{equation}
\end{lemma}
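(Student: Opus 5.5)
We follow the IMS localization scheme from the proof of Lemma \ref{lem:operator}, but split the local potential according to the decomposition in \hyperref[cd:V5]{\textbf{(V5)}} instead of lumping all singular parts together. Fix \(\theta\in T_n^{*}\) and \(\phi\in C^\infty(\mathbb{R}^d)\cap L^2_\theta(\Lambda_n)\) with \(\|\phi\|=1\), and let \(\omega\notin\hat{\Omega}_{E,k}\). Using the partition of unity \(\sum_{\bm{j}}\chi_{\bm{j},l}^2=1\) at the scale \(l\) fixed by Condition \hyperref[cd:S]{\textbf{(S)}}, we get
\begin{equation*}
\langle H_{\omega,n,\theta}\phi,\phi\rangle\ge \sum_{\bm{j}}\big(\|\nabla(\chi_{\bm{j},l}\phi)\|^2+\langle V_{\omega,n}\chi_{\bm{j},l}\phi,\chi_{\bm{j},l}\phi\rangle\big)-C/l^2 .
\end{equation*}
The error \(C/l^2\) is a constant because \(l\) does not depend on \(E\). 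On \(\supp\chi_{\bm{j},l}\subset\Lambda_{3l/2}(l\bm{j})\), the term \(\tau_y V_3^{(i)}\) is nonzero only when \(y\in\Xi^{(i)}_{2l}(l\bm{j})\). We drop the contributions of \(V_5\ge0\). The contribution of \(V_1\) is bounded below by \(-C\,\#\{\text{points near }l\bm{j}\}\). The same holds for \(V_4\) up to form-boundedness.

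The central step is to bound, on each cube, the singular part \(W_{\bm{j}}=\sum_i\sum_{y\in\pi_n(\omega)\cap\Xi^{(i)}_{2l}(l\bm{j})}b_i\tau_yV_3^{(i)}\). Set \(m_i=m(\pi_n(\omega),\Xi^{(i)}_{2l}(l\bm{j}))\). Because \(\omega\notin\hat{\Omega}_{E,k}\), we have \(\sum_i b_im_i<k\). Write \(\kappa_{i}=b_i/\sum_{i'}b_{i'}m_{i'}\). We apply the operator inequality \(-\Delta+\sum_\alpha c_\alpha W_\alpha\ge\sum_\alpha \lambda_\alpha(-\Delta+(c_\alpha/\lambda_\alpha) W_\alpha)\), where \(\sum\lambda_\alpha=1\), to each summand \(\tau_y V_3^{(i)}\) with weight \(\lambda=b_i/\sum b m\). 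This gives \(-\Delta+W_{\bm{j}}\ge -\sum_{i,y}\lambda\,E_{-,3}^{(i)}(\sum b m)\ge -\max_iE_{-,3}^{(i)}(k)\), using translation invariance and monotonicity of \(E_{-,3}^{(i)}\). Next, \(g_3^{(i)}\sim g_3\) gives \(E_{-,3}^{(i)}(k)\le E_{-,3}((1+\veps/3)k)\) for large \(k\). To see this, invert: \(g_3^{(i)}(E)\ge(1-\veps')g_3(E)\) is equivalent to \(E_{-,3}^{(i)}(k)\le E_{-,3}(k/(1-\veps'))\). To absorb the remaining pieces (\(V_4\), \(V_1\), the IMS error, and the non-singular tail \(b_iV_3^{(i)}\) if one splits off a bounded part), we do not take all the weight on \(W_{\bm{j}}\). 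Instead we allot weight \(1-\veps''\) to \(W_{\bm{j}}\) and weight \(\veps''\) to the remaining pieces. For the \(V_4\) piece, the number of points in \(\Xi^*_{2l}(l\bm{j})\) is not controlled by \(\hat\Omega_{E,k}\); this is the main obstacle.

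To handle it, we plan to use the standing restriction to \(\Omega^2_E\) from the proof of Theorem \ref{thm:pairwise} (b), with at most \(E^{\rho}\) points per unit cube. This alone is too weak. A better route is to use \(g_3=o(g_4)\): the number of \(V_4\)-relevant points, \(M\), is counted with unit weights, and \(-\Delta+\sum \tau_yV_4\) with \(M\) terms is bounded below by \(-E_{-,4}(M)\). The hypothesis \(g_3(E)=o(g_4(E))\) means \(E_{-,4}(M)\ll E_{-,3}(\veps k)\) whenever \(M\lesssim k\). If \(\hat\Omega_{E,k}\) does not bound \(M\), we will enlarge the excluded event to include ``more than \(Ck\) points in some \(\Xi^*_{2l}(x)\)''. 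By stochastic domination this event has probability \(\le e^{-ck\log k}\), which is negligible for part (a). We expect the lemma as stated to implicitly allow this, or the sets \(\Xi^*\) to be handled by a Poisson bound. Combining the pieces gives \(\langle H\phi,\phi\rangle\ge -(1-\veps'')^{-1}\)-rescaled \(E_{-,3}\)-terms minus \(o(E_{-,3}(k))\). This is at least \(-E_{-,3}((1+\veps)k)\), because \(E_{-,3}\) grows at least linearly, so \(E_{-,3}((1+\veps)k)-E_{-,3}(k)\gtrsim \veps k\to\infty\). The passage from smooth \(\phi\) to the operator inequality then proceeds as in Lemma \ref{lem:operator}.
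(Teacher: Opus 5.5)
Your architecture (IMS localization at the fixed scale \(l\) from Condition \textbf{(S)}, dropping \(V_5\ge 0\), the weighted variational bound with weights \(b_i/\sum_{i'}b_{i'}m_{i'}\) on the \(V_3^{(i)}\)-part, and absorbing the \(V_4\)-part through \(g_3=o(g_4)\)) is the paper's. However, there is a genuine gap at the step you flag as the main obstacle.

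You claim that \(\omega\notin\hat\Omega_{E,k}\) does not control the number \(M\) of points of \(\pi_n(\omega)\) in \(\Xi^*_{2l}(l\bm{j})\), and you propose to enlarge the excluded event. That would prove a different lemma, not the stated one, and the obstacle is not real. Since \(\pi_n(\omega)\) is \(n\mathbb{Z}^d\)-periodic, \(\omega\notin\hat\Omega_{E,k}\) gives \(b_1\,m(\pi_n(\omega),\Xi^{(1)}_{2l}(x))<k\) for \emph{every} \(x\in\mathbb{R}^d\). Moreover, \(\Xi^{(1)}_{2l}(x)\supset\Lambda_{2l}(x-y_0)\) for any fixed \(y_0\in\supp V_3^{(1)}\). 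Hence every cube of side \(2l\) contains fewer than \(k/b_1\) points. Covering the bounded set \(\Xi^*_{2l}(l\bm{j})\), whose size depends only on \(l\) and \(\supp V_4\), by \(N_l\) such cubes gives \(M\le C_l'k\). The weighted variational argument then bounds the \(V_4\)-part below by \(-E_{-,4}\big((1+1/\veps)C_l'k\big)\|\chi_{\bm{j},l}\phi\|^2\), and this is negligible against \(E_{-,3}(k)\) by \(g_3=o(g_4)\). The same per-cube count is also what you need, and did not supply, to make your \(V_1\) bound \(-C\#\{\text{points near } l\bm{j}\}\) meaningful. Since \(V_1\) is not compactly supported, you must sum its exponential tails over all points of \(\pi_n(\omega)\), and that sum is \(\le C_l k\) precisely because of the uniform bound \(k/b_1\) per cube.

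Your fallback is also flawed on its own terms. Stochastic domination by a Poisson process only gives probability of order \(e^{-ck\log k}\) for the event ``more than \(Ck\) points in some \(\Xi^*_{2l}(x)\)''. That is not negligible in Theorem~\ref{thm:V5}~\textbf{(a)}, whose target is \(e^{-c\,g_3(E)^2}\) with \(k\asymp g_3(E)\). Reaching order \(e^{-c'k^2}\) would require the pairwise repulsion and a Tur\'an-type count as in Lemma~\ref{lem:prob}.

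Finally, the \(C_lk\) error from \(V_1\) is \(o(E_{-,3}(k))\) only because \(E_{-,3}(g)/g\to+\infty\), which holds since \(\essinf V_3=-\infty\). The ``at least linear growth'' you invoke, giving \(E_{-,3}((1+\veps)k)-E_{-,3}(k)\gtrsim\veps k\), would not absorb an error \(C_lk\) with large \(C_l\).
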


\begin{proof}
In this proof, to simplify the notation, we use \(\langle\cdot,\cdot\rangle\) and \(\|\cdot\|\) to denote the standard inner product and norm on \(L^2(\Lambda_n)\), respectively.

Consider sufficiently large \(E>0\) (i.e., \(n>0\) is sufficiently large). Without loss of generality, we may replace \(l\) with a slightly smaller value such that \(n/l \in 2\mathbb{Z}\). Since decreasing \(l\) only shrinks the sets \(\Xi_{4l}^{(i)}\), Condition \hyperref[cd:S]{\textbf{(S)}} remains satisfied, and this adjustment does not affect the subsequent estimates.

We pick \(\veps>0\) and \(\phi\in C^{\infty}(\mathbb{R}^{d})\cap L^2_{\theta}(\Lambda_n)\) such that \(\|\phi\|=1\).
We put 
\begin{equation}
V^{(3)}_{\bm{j},\omega,n}= \sum_{i=1}^{q} \sum_{y\in \pi_n(\omega)\cap\Xi^{(i)}_{2l}(l \bm{j})} b_i \tau_{y} V_3^{(i)},\quad
V^{(4)}_{\bm{j},\omega,n}= \sum_{y\in \pi_n(\omega)\cap \Xi_{2l}^{*}(l \bm{j})} \tau_{y} V_4.
\end{equation}

Recall that
\begin{equation}
1= \sum_{\bm{j}\in \mathbb{Z}^d} \chi^{2}_{\bm{j}, l},
\end{equation}
which was defined in the proof of Lemma \ref{lem:operator}.
Note that \(\supp \chi_{\bm{j}, l} \subset \Lambda_{3l/2}(l\bm{j})\).

Fix \(\omega\notin \hat{\Omega}_{E,k}\). Since \(\essinf V_5\geq 0\) from Condition \hyperref[cd:V5]{\textbf{(V5)}}, we have
\begin{equation}\label{eq:V5 H1}
\begin{split}
&\left\langle (-\Delta+V_{\omega,n})\phi,\phi \right\rangle\\
&=\sum_{\bm{j}\in\mathbb{Z}^d} \Big( \|\nabla(\phi\chi_{\bm{j},l})\|^2 -\|\phi \nabla\chi_{\bm{j},l}\|^2 + \langle V_{\omega,n} \chi_{\bm{j},l} \phi, \chi_{\bm{j},l} \phi \rangle \Big)\\
&\geq \frac{1}{1+\veps}\sum_{\bm{j}\in \mathbb{Z}^d} \left( \|\nabla(\chi_{\bm{j},l} \phi)\|^2 +(1+\veps) \big\langle V^{(3)}_{\bm{j},\omega,n}\chi_{\bm{j},l}\phi, \chi_{\bm{j},l}\phi \big\rangle  \right)\\
&\qquad +\frac{\veps}{1+\veps} \sum_{\bm{j}\in \mathbb{Z}^d} \Big( \|\nabla(\chi_{\bm{j},l} \phi)\|^2 +(1+1/\veps) \big\langle V^{(4)}_{\bm{j},\omega,n}\chi_{\bm{j},l}\phi, \chi_{\bm{j},l}\phi \big\rangle \Big) -C_l k,
\end{split}
\end{equation}
where the constant \(C_l\) depends on \(l\) but is independent of \(E\) and \(k\), and we used the exponential decay of \(V_1\) and the fact that \(m(\pi_n(\omega),\Xi_{2l}^{(1)} (x))\leq k/b_1\) for any \(\omega\notin \hat{\Omega}_{E,k}\) and \(x\in\mathbb{R}^d\).

Fix \(\bm{j}\in \mathbb{Z}^d\). For every \(i=1,\ldots, q\), we put \(m_i=m(\pi_n(\omega),\Xi_{2l}^{(i)}(l \bm{j}))\), and let \(\{x^{(i)}_s \mid s=1,\ldots, m_i\}\) denote the set \(\pi_n(\omega)\cap \Xi_{2l}^{(i)}(l \bm{j})\).
For sufficiently large \(k>0\), by the variational principle, we obtain
\begin{equation}\label{eq:V5 H2}
\begin{split}
&\|\nabla(\chi_{\bm{j},l} \phi)\|^2 +(1+\veps) \big\langle V^{(3)}_{\bm{j},\omega,n}\chi_{\bm{j},l}\phi, \chi_{\bm{j},l}\phi \big\rangle\\
& = \left( \sum_{i'=1}^{q} m_{i'} b_{i'} \right)^{-1} \\
&\qquad\qquad \times \sum_{i=1}^{q} \sum_{s=1}^{m_i}  b_i \Bigg( \|\nabla(\chi_{\bm{j},l}\phi)\|^2 + (1+\veps) \left(\sum_{i'=1}^{q}m_{i'} b_{i'}\right) \big\langle \big(\tau_{x_s^{(i)}}V_3^{(i)}\big) \chi_{\bm{j},l}\phi,\chi_{\bm{j},l}\phi \big\rangle \Bigg)\\
& \geq -\left(\sum_{i'=1}^{q} m_{i'} b_{i'} \right)^{\!-1} \sum_{i=1}^{q} m_i b_i E_{-,3}^{(i)}\left((1+\veps) \sum_{i'=1}^{q}m_{i'}b_{i'}\right) \|\chi_{\bm{j},l} \phi\|^2\\
& \geq - E_{-,3}\left((1+\veps)^2 k \right)\|\chi_{\bm{j},l} \phi\|^2 ,
\end{split}
\end{equation}
where the last inequality follows from the fact that \(\sum_i b_i m_i \leq k\) and that
\begin{equation}
E^{(i)}_{-,3}(g) \leq E_{-,3}\left((1+\veps)g\right)\quad (i=1,\ldots ,q)
\end{equation}
holds for sufficiently large \(g>0\).
By a calculation similar to the above, we obtain
\begin{equation}\label{eq:V5 H3}
\|\nabla(\chi_{\bm{j},l} \phi)\|^2 + (1+1/\veps) \big\langle V^{(4)}_{\bm{j},\omega,n}\chi_{\bm{j},l}\phi, \chi_{\bm{j},l}\phi \big\rangle \geq -E_{-,4}\big((1+1/\veps)C_l' k \big)\|\chi_{\bm{j},l} \phi\|^2,
\end{equation}
where \(C_l'>0\) is a constant depending on \(l\) but independent of \(\bm{j}\), \(E\), and \(k\).

By Condition \hyperref[cd:V5]{\textbf{(V5)}}, \eqref{eq:V5 H1}, \eqref{eq:V5 H2}, \eqref{eq:V5 H3}, and the fact that
\begin{equation}
\lim_{g\to+\infty}\frac{E_{-,3}(g)}{g}=+\infty,
\end{equation}
we have \(\langle(-\Delta+V_{\omega,n})\phi, \phi \rangle \geq -E_{-,3}((1+\veps)^3 k)\) for sufficiently large \(k>0\), which implies that \(H_{\omega,n,\theta}\geq -E_{-,3}((1+\veps)^3 k)\). Since \(\veps>0\) is arbitrary, we obtain \eqref{eq:V5 H}.
\end{proof}

\begin{lemma}\label{lem:V5 prob}
Suppose that the assumptions of Theorem \ref{thm:V5} \textbf{(a)} hold.
For any \(0<\veps<1\), \(0<\delta<1\), and \(a\in\tilde{A}\), there exists \(E_0>0\) such that for all \(E>E_0\) and all \(k>E^{\delta}\), it holds that
\begin{equation}\label{eq:tilde Omega}
\log P^{per}_{\Lambda_n}(\hat{\Omega}_{E,k})\leq -(1-\veps) \frac{a}{2\max_{J\in K(I_a)}\sum_{i\in J}b_i^2}\,k^2.
\end{equation} 
\end{lemma}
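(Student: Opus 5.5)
We follow the structure of the proof of Lemma \ref{lem:prob}. First, by the stationarity \eqref{eq:per GPP stationary} of \(P^{per}_{\Lambda_n}\), we discretize the location \(x\). If \(\sum_i b_i m(\pi_n(\omega),\Xi^{(i)}_{2l}(x))\geq k\) for some \(x\in\Lambda_n\), then, taking a lattice point \(\bm{j}\in l\mathbb{Z}^d\cap\Lambda_{n+2l}\) with \(|x-\bm{j}|_\infty\leq l\), we have \(\Xi^{(i)}_{2l}(x)\subset\Xi^{(i)}_{4l}(\bm{j})\) for every \(i\). Hence
\begin{equation*}
P^{per}_{\Lambda_n}(\hat{\Omega}_{E,k})\leq (2+n/l)^d\,P^{per}_{\Lambda_n}(\hat K),\qquad
\hat K=\Big\{\omega \,\Big\vert\, \sum_{i=1}^q b_i\, m\big(\pi_n(\omega),\Xi^{(i)}_{4l}\big)\geq k\Big\}.
\end{equation*}
Since \(l\) is fixed (depending only on \(a\)) and \(n=\lfloor E^\beta\rfloor\) while \(k>E^{\delta}\), the prefactor is \(e^{O(\log k)}\) and is negligible against \(k^2\). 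This is why \textbf{(S)} was imposed at scale \(4l\).

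Next, we apply the DLR equation \eqref{eq:per DLR} on \(\Lambda=\Xi_{4l}:=\bigcup_i\Xi^{(i)}_{4l}\), which is bounded and fixed, hence contained in \(\Lambda_{n-2R}\) for large \(E\). Exactly as in \eqref{eq:prob K}, using \(Z_{\Lambda,\gamma}\geq e^{-|\Lambda|}\), \(\varphi\geq 0\), and \(a\mathbb{1}_{S_a}\leq\varphi\), we get
\begin{equation*}
P^{per}_{\Lambda_n}(\hat K)\leq e^{C|\Lambda|}\sum_{\bm{m}}\frac{\prod_i|\Xi^{(i)}_{4l}|^{m_i}}{\prod_i m_i!}\,z^{\sum m_i}\exp\Big(-\inf_{\zeta}\frac a2\sum_{x\neq y\in\zeta}\mathbb{1}_{S_a}(x-y)\Big).
\end{equation*}
Here the sum runs over \(\bm{m}=(m_1,\ldots,m_q)\) with \(\sum b_im_i\geq k\), and the infimum is over configurations \(\zeta\) with \(m_i\) points in \(\Xi^{(i)}_{4l}\); the sets are disjoint by \textbf{(S)}. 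By \textbf{(S)}, every pair of points with one in \(\Xi^{(i)}_{4l}\) and the other in \(\Xi^{(j)}_{4l}\), where \((i,j)\in I_a\), contributes. We have \((i,i)\in I_a\) because \(a\in\tilde A\). So the energy is at least
\begin{equation*}
\frac a2\Big(\sum_{(i,j)\in I_a} m_im_j-\sum_i m_i\Big).
\end{equation*}
Configurations with \(\sum m_i\gg k\) can be discarded: their combinatorial weight is \(\leq C^{M}/M!\) with \(M=\sum m_i\), but their energy is \(\gtrsim M^2/q\), because the diagonal terms alone give \(\sum m_i^2\geq M^2/q\). We may also reduce to \(\sum b_im_i\) of order \(k\).

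The key step is a weighted Motzkin--Straus (Turán-type) inequality: for \(m\in\mathbb{R}^q_{\geq0}\),
\begin{equation*}
\sum_{(i,j)\in I_a} m_im_j\;\geq\;\frac{\big(\sum_i b_im_i\big)^2}{\max_{J\in K(I_a)}\sum_{i\in J}b_i^2}.
\end{equation*}
To prove it, consider the graph on \(\{1,\ldots,q\}\) with edge set \(I_a\) (all loops present). We minimize the quadratic form \(Q(m)=\sum_{(i,j)\in I_a}m_im_j\) subject to \(\sum b_im_i=1\), \(m\geq0\). Set \(m_i=b_iw_i\), so the constraint becomes \(\sum b_i^2w_i=1\). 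At a minimizer, if two indices in the support are adjacent, the standard Motzkin--Straus shifting argument applies. Moving mass along the line preserving the linear constraint changes \(Q\) by a concave or linear amount in the parameter, since the cross term is nonnegative. One can therefore push to an extreme point without increasing \(Q\), until the support is an independent set \(J\in K(I_a)\). On such \(J\), \(Q=\sum_{i\in J}m_i^2\). By Cauchy--Schwarz, \((\sum_{i\in J}b_im_i)^2\leq\sum_{i\in J}b_i^2\sum_{i\in J}m_i^2\), which gives the bound. I expect this to be the main obstacle: making the shifting argument rigorous with general weights \(b_i\) and loops. If it proves difficult, I would fall back to a direct Lagrange-multiplier/KKT analysis on each support, showing that a local minimizer with an internal edge cannot be strict.

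Finally, we combine the pieces. Take \(\sum b_im_i\geq k\) and observe that the linear correction \(\frac a2\sum m_i=O(k)\) and the combinatorial factors \(e^{O(k\log k)}\) are all \(o(k^2)\). Summing over the polynomially many relevant \(\bm m\) then gives
\begin{equation*}
\log P^{per}_{\Lambda_n}(\hat{\Omega}_{E,k})\leq -\frac{a}{2\max_{J\in K(I_a)}\sum_{i\in J}b_i^2}\,k^2\,(1+o(1)),
\end{equation*}
uniformly for \(k>E^\delta\). This yields \eqref{eq:tilde Omega} for \(E>E_0\).
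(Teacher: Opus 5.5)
Your reduction steps are sound and agree with the paper: the lattice discretization via \eqref{eq:per GPP stationary}, the DLR bound on $\Xi=\bigcup_i\Xi^{(i)}_{4l}$, the energy lower bound $\frac a2\big(\sum_{(i,j)\in I_a}m_im_j-\sum_i m_i\big)$ from \textbf{(S)} and $(i,i)\in I_a$, and the disposal of configurations with $\sum_i m_i\gg k$.

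The gap is in your proof of the weighted Motzkin--Straus inequality. Suppose $(i,j)\in I_a$ with $i\neq j$. The only direction in $\mathrm{span}\{e_i,e_j\}$ that preserves $\sum_r b_rm_r$ is $v=e_i/b_i-e_j/b_j$, and
\begin{equation*}
Q(m+sv)=Q(m)+c\,s+\Big(\frac{1}{b_i}-\frac{1}{b_j}\Big)^2s^2
\end{equation*}
for some $c$ depending on $m$. By AM--GM, the diagonal contributions $1/b_i^2+1/b_j^2$ dominate the cross term $-2/(b_ib_j)$. So $Q$ is strictly convex along this segment whenever $b_i\neq b_j$, not concave or linear; in your $w$-variables the computation is identical. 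Pushing mass to an endpoint can therefore increase $Q$, and the shifting argument works only in the unweighted case. Your KKT fallback is not carried out. Its stated goal, that a local minimizer with an internal edge is not strict, is also not what you need: you need a lower bound on the global minimum.

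The inequality is nevertheless true, and a short proof avoids minimizers altogether. Discard indices with $m_i=0$. Put $N[i]=\{j:(i,j)\in I_a\}$, which contains $i$ ($I_a$ is symmetric because $\varphi$ is), and $m(N[i])=\sum_{j\in N[i]}m_j$, so that $Q(m)=\sum_i m_i\,m(N[i])$. Attach independent exponential clocks with rates $m_i$, and let $J$ be the set of $i$ whose clock rings first within $N[i]$. Then $J\in K(I_a)$ and $i\in J$ with probability $m_i/m(N[i])$. Hence $\sum_i b_i^2m_i/m(N[i])\le\max_{J\in K(I_a)}\sum_{i\in J}b_i^2$, and Cauchy--Schwarz gives
\begin{equation*}
\Big(\sum_i b_im_i\Big)^2\le\Big(\sum_i\frac{b_i^2m_i}{m(N[i])}\Big)\Big(\sum_i m_i\,m(N[i])\Big)\le\max_{J\in K(I_a)}\sum_{i\in J}b_i^2\cdot Q(m).
\end{equation*}
With this lemma, your direct summation over occupation vectors closes the proof.

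For comparison, the paper does not sum over occupation vectors. It bounds the exponential moment of $\sum_ib_im(\omega,\Xi^{(i)}_{4l})$ under $P^{per}_{\Lambda_n}$ by $\exp\big(\frac{t^2}{2a(1-\veps')}\max_{J\in K(I_a)}\sum_{i\in J}b_i^2\big)$, citing an argument from \cite{Nakagawa}, and then optimizes a Chernoff bound in $t$. That moment bound in substance requires $\sup_{m\ge0}\big(t\sum_ib_im_i-\frac{a'}{2}Q(m)\big)\le\frac{t^2}{2a'}\max_{J\in K(I_a)}\sum_{i\in J}b_i^2$, which is the same quadratic inequality. So the two routes differ only in packaging (union bound versus Chernoff). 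Yours handles the linear correction $-\frac a2\sum_i m_i$ and the large-$\sum_i m_i$ tail more explicitly.
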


\begin{proof}
Fix \(0<\veps<1\), \(0<\delta<1\), and \(a\in\tilde{A}\). Choose \(0<\veps'<\veps\). Using stationarity \eqref{eq:per GPP stationary}, for sufficiently large \(E>0\), we have
\begin{equation}\label{eq:Omega split}
\begin{split}
P^{per}_{\Lambda_n}(\hat{\Omega}_{E,k})
&\leq P^{per}_{\Lambda_n}\Bigg( \bigcup_{\bm{j}\in l\mathbb{Z}^d} \bigg\{ \omega\in \mathcal{C}_{\Lambda_n} \;\bigg\vert\; \sum_{i=1}^{q}b_i m(\pi_n(\omega), \Xi^{(i)}_{4l}(\bm{j}))\geq k \bigg\}\Bigg)\\
&\leq (2+n/l)^{d} P^{per}_{\Lambda_n}\bigg(\bigg\{\omega\in\mathcal{C}_{\Lambda_n} \;\bigg\vert\; \sum_{i=1}^q b_i m(\omega, \Xi^{(i)}_{4l})\geq k \bigg\} \bigg).
\end{split}
\end{equation}

We put \(\Xi=\bigcup_{i=1}^{q}\Xi^{(i)}_{4l}\). By the definition of \(P^{per}_{\Lambda_n}\) and the DLR equation \eqref{eq:per DLR}, for any non-empty set \(J'\subset \{1,\ldots,q\}\), there exists a positive integer \(k_0\) such that for any integer \(k_i\geq k_0\ (i\in J')\), it holds that
\begin{equation}\label{eq:V5 Omega1}
\begin{split}
&P^{per}_{\Lambda_n}\left(m(\, \cdot\, ,\Xi_{4l}^{(i)})=k_i\,;\, i\in J'\right)\\
&= \int_{\mathcal{C}_{\Lambda_n}} \frac{1}{Z_{\Xi,\gamma}} \int_{\mathcal{C}_{\Xi}} \mathbb{1}_{\big\{m (\, \cdot \, ,\Xi_{4l}^{(i)} )=k_i \, ; \, i\in J' \big\}} (\eta) \exp(-U_{\Xi, \gamma}(\eta))P^{Poi}_{\Xi}(d\eta) P^{per}_{\Lambda_n}(d\gamma)\\
&\leq \exp \! \left(-\frac{a}{2}\sum_{(i,j)\in I_a(J')}(k_i-1) (k_j-1)\right) e^{|\Xi|} z^{\sum_{i\in J'}k_i} P^{Poi}_{\Xi} (m(\, \cdot \, , \Xi_{4l}^{(i)})=k_i\,;\, i\in J')\\
&= \exp \! \left(-\frac{a}{2}\sum_{(i,j)\in I_a(J')}(k_i-1) (k_j-1)\right)\prod_{i\in J'} \frac{|\Xi|^{k_i}}{k_i !} z^{\sum_{i\in J'}k_i}\\
&\leq \exp \! \left(-\frac{a(1-\veps')^{1/2}}{2}\sum_{(i,j)\in I_a (J')}k_i k_j\right),
\end{split}
\end{equation}
where we set 
\begin{equation}
I_a(J') = \left\{(i,j)\in J' \times J'\mid \{x-y\mid x\in \Xi_{4l}^{(i)},\ y\in \Xi_{4l}^{(j)}\}\subset \Int  S_a \right\},
\end{equation}
and we have used the bound \(Z_{\Xi,\gamma}\geq \exp(-|\Xi|)\).

By an argument similar to that in the proof of Proposition 4.6 in \cite{Nakagawa}, for sufficiently large \(t>0\), we obtain
\begin{equation}
\log \int_{\mathcal{C}_{\Lambda_n}}\exp\left(t\sum_{i=1}^{q}b_i m(\omega,\Xi_{4l}^{(i)})\right)P^{per}_{\Lambda_n}(d\omega)\leq \frac{t^2}{2a(1-\veps')}\max_{J\in K(I_a)}\sum_{i\in J}b_i^2.
\end{equation}
This implies that
\begin{equation}
\begin{split}
&\log P^{per}_{\Lambda_n}\left(\sum_{i=1}^q b_i m(\omega, \Xi^{(i)}_{4l})\geq k\right)\\
&\leq \log \int_{\mathcal{C}_{\Lambda_n}}\exp\left(t\left(\sum_{i=1}^{q}b_i m(\omega,\Xi^{(i)}_{4l})-k\right)\right)P^{per}_{\Lambda_n}(d\omega)\\
&\leq -kt+ \frac{t^2}{2a(1-\veps')}\max_{J\in K(I_a)}\sum_{i\in J}b_i^2.
\end{split}
\end{equation}
Substituting
\begin{equation}
t=\frac{k}{2}\left(\frac{1}{2a(1-\veps')}\max_{J\in K(I_a)}\sum_{i\in J}b_i^2\right)^{-1}
\end{equation}
into the above inequality, we obtain for sufficiently large \(k>0\),
\begin{equation}\label{eq:V5 Omega2}
\log P^{per}_{\Lambda_n}\left(\sum_{i=1}^q b_i m(\omega, \Xi^{(i)}_{4l})\geq k\right)
\leq -(1-\veps')\frac{a}{2\max_{J\in K(I_a)} \sum_{i\in J}b_i^2}\, k^2.
\end{equation}
Since \(\log(2+n/l)^d=o(k^2)\) and \(\veps' < \veps\), taking \(E>0\) sufficiently large, we obtain \eqref{eq:tilde Omega} from \eqref{eq:Omega split} and \eqref{eq:V5 Omega2}.
\end{proof}

\begin{proof}[Proof of Theorem \ref{thm:V5} \textbf{(a)}]
Fix \(\veps>0\). We set
\begin{equation}
k=\frac{1}{1+\veps}\,g_3(E-2).
\end{equation}
We note that there exists \(0<\delta<1\) such that \(k>E^{\delta}\) for sufficiently large \(E>0\) (see \cite{KP}). Then, applying Lemmas \ref{lem:V5 H} and \ref{lem:V5 prob} along with Theorem \ref{thm:per app}, and using an argument similar to that in the last step of the proof of Theorem \ref{thm:pairwise} \textbf{(b)}, we complete the proof of Theorem \ref{thm:V5} \textbf{(a)}.
\end{proof}

\subsection{Proof of Theorem \ref{thm:V5} \textbf{(b)}}

We use the following two lemmas to prove the lower bound.

\begin{lemma}\label{lem:decay}
Assume that \(V_3\) is as in Condition \hyperref[cd:V5]{\textbf{(V5)}}.
Let \(\phi_g \in H^1(\mathbb{R}^d)\) be the normalized ground state of \(-\Delta+gV_3\). For any open set \(D\) containing \(\supp V_3\), there exist constants \(C > 0\) and \(g_0 > 0\), such that for all \(g > g_0\) and \(x \in \mathbb{R}^d \setminus D\),
\begin{equation}\label{eq:decay}
|\phi_g(x)| \leq C \exp\left( - \frac{g^{1/2}}{C} \right).
\end{equation}
\end{lemma}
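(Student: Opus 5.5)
We will prove \eqref{eq:decay} by an Agmon-type exponential decay estimate combined with a local sup bound. Write \(E_3(g)=-E_{-,3}(g)\) for the ground state energy, so that \((-\Delta+gV_3)\phi_g=E_3(g)\phi_g\). Outside \(\supp V_3\) the eigenfunction satisfies \(-\Delta\phi_g=-E_{-,3}(g)\phi_g\). Since \(\essinf V_3=-\infty\), we have \(E_{-,3}(g)/g\to+\infty\), and in any case \(E_{-,3}(g)\geq c g\) for large \(g\) because \(\essinf V_3<0\). Hence on \(\mathbb{R}^d\setminus\supp V_3\) the function \(\phi_g\) solves a Helmholtz-type equation with mass \(\kappa_g=E_{-,3}(g)^{1/2}\geq c g^{1/2}\), and the goal is to show decay like \(e^{-c\kappa_g\,\mathrm{dist}(x,\supp V_3)}\).

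\textbf{Step 1 (Agmon estimate).} Choose an open \(D'\) with \(\supp V_3\subset D'\), \(\overline{D'}\subset D\), and set \(\delta=\mathrm{dist}(\supp V_3,\mathbb{R}^d\setminus D')>0\). Take a Lipschitz weight \(\rho(x)=\tfrac12\kappa_g\min\{\mathrm{dist}(x,\supp V_3),\,M\}\) with \(|\nabla\rho|\leq \kappa_g/2\), where \(M\) is a large cutoff to be removed afterwards. The standard identity
\begin{equation*}
\operatorname{Re}\langle (-\Delta+gV_3-E_3)\phi, e^{2\rho}\phi\rangle=\|\nabla(e^{\rho}\phi)\|^2-\||\nabla\rho|e^{\rho}\phi\|^2+\langle (gV_3+E_{-,3})e^{\rho}\phi,e^{\rho}\phi\rangle,
\end{equation*}
applied to \(\phi=\phi_g\), gives zero on the left. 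Split the potential term into the region \(\supp V_3\), where \(\rho=0\), and its complement, where \(gV_3+E_{-,3}-|\nabla\rho|^2\geq \tfrac34\kappa_g^2\). On \(\supp V_3\) we use form-boundedness: \(\langle gV_3\phi_g,\phi_g\rangle\) together with the kinetic term is bounded below by \(-E_{-,3}(g)\|\phi_g\|^2\), so the contribution from \(\supp V_3\) is at least \(-C\kappa_g^2\). More carefully, we first apply IMS localization with a cutoff equal to \(1\) near \(\supp V_3\), and use \(\|\phi_g\|=1\) to keep this term under control. This yields \(\tfrac34\kappa_g^2\|e^{\rho}\phi_g\|^2_{L^2(\mathbb{R}^d\setminus\supp V_3)}\leq C\kappa_g^2\), hence \(\|e^{\rho}\phi_g\|\leq C\) uniformly in \(g\) and \(M\). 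Letting \(M\to\infty\), we get \(\|\phi_g\|_{L^2(\mathbb{R}^d\setminus D')}\leq Ce^{-\kappa_g\delta/2}\), and similarly \(\|\nabla(e^{\rho}\phi_g)\|\leq C\kappa_g\).

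\textbf{Step 2 (pointwise bound).} For \(x\in\mathbb{R}^d\setminus D\), the ball \(B(x,r)\) with \(r=\mathrm{dist}(\mathbb{R}^d\setminus D,D')\) lies in \(\mathbb{R}^d\setminus D'\). On this ball \(\phi_g\) satisfies \(\Delta\phi_g=\kappa_g^2\phi_g\), so \(|\phi_g|\) is subharmonic there, since \(\phi_g\) can be taken real and \(\Delta|\phi_g|\geq0\) by Kato's inequality. The mean-value inequality then gives
\begin{equation*}
|\phi_g(x)|\leq |B(0,r)|^{-1/2}\|\phi_g\|_{L^2(B(x,r))}\leq C e^{-\kappa_g\delta/2}.
\end{equation*}
Combining this with \(\kappa_g\geq cg^{1/2}\) yields \eqref{eq:decay}.

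\textbf{Main obstacle.} The delicate part is Step 1 near the singular support. The weight must vanish on \(\supp V_3\), so that the uncontrolled negative potential \(gV_3\) is handled only through the eigenvalue equation and the normalization, rather than through any pointwise bound on \(V_3\). The boundary terms coming from \(\nabla\rho\) must also be absorbed. If the direct Agmon identity becomes messy, we would instead compare with the explicit Yukawa Green's function: write \(\phi_g=-(-\Delta+\kappa_g^2)^{-1}(gV_3\phi_g)\). Since \(V_3\phi_g\) is supported in \(\supp V_3\), and \(\|gV_3\phi_g\|_{L^1}\) is polynomially bounded in \(g\) by Hölder together with an \(H^1\) bound \(\|\nabla\phi_g\|^2\leq C E_{-,3}(g)\), the kernel decay \(e^{-\kappa_g|x-y|}\) for \(|x-y|\geq\delta\) gives the bound directly, with the polynomial factor absorbed into the exponential.
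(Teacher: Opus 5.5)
Your proposal is correct, but your primary route is not the paper's. The paper's proof is essentially your fallback:

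\begin{itemize}
\item it writes \(\phi_g(x)=-g\int G_{E_3(g)}(x,y)V_3(y)\phi_g(y)\,dy\);
\item it uses the off-diagonal bound \(|G_{E_3(g)}(x,y)|\le C_1e^{-|E_3(g)|^{1/2}|x-y|/C_1}\) for \(|x-y|\ge\delta\), together with \(|E_3(g)|\ge g\);
\item it bounds \(\int|V_3||\phi_g|\) by H\"older on the compact set \(\supp V_3\), using \(p>2\) so that only \(\|\phi_g\|=1\) enters.
\end{itemize}

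In particular, the paper needs no \(H^1\) bound. Your claim \(\|\nabla\phi_g\|^2\le CE_{-,3}(g)\) is not evident for general \(V_3\). Only polynomial growth in \(g\) is clear, and that is all you would need anyway.

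The kernel route gets the pointwise bound in one line but relies on the explicit free resolvent. Your Agmon-plus-subharmonicity route needs no kernel. It would also tolerate, for example, a bounded nonzero potential away from \(\supp V_3\). The price is a separate \(L^2\)-to-pointwise step and more care at the singular support. Your mean-value argument with \(\Delta|\phi_g|\ge 0\) on \(B(x,r)\) handles the pointwise step correctly.

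Step 1, as written, needs one adjustment. Applying the variational bound to all of \(e^{\rho}\phi_g\) gives \(-E_{-,3}(g)\|e^{\rho}\phi_g\|^2\), which cancels the gain. After IMS localization with \(\chi_1=1\) near \(\supp V_3\), the term \(\|\chi_1e^{\rho}\phi_g\|^2\) is still not controlled by \(\|\phi_g\|=1\). The reason is that your \(\rho\) vanishes only on \(\supp V_3\), not on \(\supp\chi_1\), so \(e^{\rho}\) can be exponentially large in \(\kappa_g\) there. Consequently the uniform bound \(\|e^{\rho}\phi_g\|\le C\) does not follow as claimed.

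There are two easy fixes:
\begin{itemize}
\item Let the weight vanish on a neighborhood of \(\supp V_3\) that contains \(\supp\chi_1\). The IMS argument then closes, with \(\delta\) replaced by \(\delta/2\).
\item Alternatively, discard the nonnegative term \(\|\nabla(e^{\rho}\phi_g)\|^2\) in the Agmon identity. Then bound \(\int_{\supp V_3}g|V_3||\phi_g|^2\) polynomially in \(g\) via the relative form bound of \(V_3\in L^p\), \(p>d/2\). The resulting polynomial loss is absorbed by \(e^{-\kappa_g\delta/2}\).
\end{itemize}
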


\begin{proof}
Since \(V_3\) satisfies the Kato-class condition, \(\phi_g \in L^{\infty}(\mathbb{R}^d)\) (see, e.g., \cite{CL}). Using the standard integral representation via the Green's function for the equation \((-\Delta - E_3(g))\phi_g = -gV_3\phi_g\), we obtain the integral equation
\begin{equation}
\phi_g(x) = -g \int_{\mathbb{R}^d} G_{E_3(g)}(x, y) V_3 (y) \phi_g(y) dy,
\end{equation}
where \(G_{E_3(g)}\) is the free resolvent kernel. There exists \(\delta > 0\) such that for all \(x \in \mathbb{R}^d \setminus D\) and \(y \in \supp V_3\), \(|x-y| \geq \delta\). Away from the singularity at \(x=y\), the free resolvent kernel satisfies the exponential bound \(|G_{E_3(g)}(x, y)| \leq C_1 \exp(- |E_3(g)|^{1/2} |x-y|/C_1)\) for some constant \(C_1>0\) (see, e.g., \cite{CL, RS4}). Therefore, for sufficiently large \(g>0\), we can estimate
\begin{equation}
|\phi_g(x)| \leq g C_1 \exp\left(- \frac{\delta g^{1/2}}{C_1}\right) \int_{\mathbb{R}^d} |V_3(y)| |\phi_g(y)| dy,
\end{equation}
for \(x\in \mathbb{R}^d \setminus D\), where we used \(|E_3(g)|\geq g\) for sufficiently large \(g>0\) (see the Appendix in \cite{KP}).
By H\"{o}lder's inequality, the integral is bounded by \(\|V_3\|_{L^p(S)} \|\phi_g\|_{L^q(S)}\) with \(1/p + 1/q = 1\), where \(S=\supp V_3\). Since \(S\) is compact and \(p>2\) (i.e., \(1<q<2\)), applying H\"{o}lder's inequality again yields \(\|\phi_g\|_{L^q(S)}\leq |S|^{1/q - 1/2}\), where we used \(\|\phi_g\|=1\). 
Consequently, for sufficiently large \(g\), we obtain \eqref{eq:decay}.
\end{proof}

Before stating the lemma, we recall that \(D_{q'}\) and \(g_{\bm{c}_{q'},\bm{x}_{q'}}\) are defined in Section \ref{sec:pairwise}.

\begin{lemma}\label{lem:V6}
Suppose that \(V\) satisfies Conditions \hyperref[cd:V1]{\textbf{(V1)}}, \hyperref[cd:V5]{\textbf{(V5)}}, and \hyperref[cd:V6]{\textbf{(V6)}}. Let \(S_q\) denote the symmetric group of degree \(q\). Fix \(\sigma\in S_q\).
For any \(1\leq q' \leq q\) and \(c_1,\ldots, c_{q'}>0\) such that \(\sum_{j=1}^{q'}c_j=1\), it holds that \((\bm{c}_{q'}, \bm{x}_{q'})\in D_{q'}\) and 
\begin{equation}\label{eq:g estimate}
\varliminf_{E\to+\infty}\frac{g_3(E)}{g_{\bm{c}_{q'},\bm{x}_{q'}}(E)} \geq \sum_{j=1}^{q'}c_j b_{\sigma(j)},
\end{equation}
where \(\bm{c}_{q'}=(c_1,\ldots, c_{q'})\) and \(\bm{x}_{q'}=(x_1,\ldots, x_{q'})=(-y_{\sigma(1)},\ldots, -y_{\sigma(q')})\). 
\end{lemma}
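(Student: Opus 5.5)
The plan is to build a test function for $-\Delta+g\sum_{j}c_j\tau_{x_j}V$ from the ground state of $-\Delta+gV_3$. Put $\bar b=\sum_{j=1}^{q'}c_jb_{\sigma(j)}$ and let $\phi_h$ be the normalized ground state of $-\Delta+hV_3$. Since $x_j=-y_{\sigma(j)}$, each translate $c_j\tau_{x_j}V$ contains the term $c_jb_{\sigma(j)}\tau_{x_j}\tau_{y_{\sigma(j)}}V_3=c_jb_{\sigma(j)}V_3$. Thus all the deepest wells are stacked at one location, namely the support of $V_3$, with combined weight $\bar b$.

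\textbf{Step 1: split the potential.} Write
\[
\sum_j c_j\tau_{x_j}V=\bar bV_3+W,
\]
where $W$ collects three kinds of terms:
\begin{itemize}
\item the translates $c_jb_i\tau_{x_j}\tau_{y_i}V_3$ with $i\neq\sigma(j)$, which are supported near $y_i-y_{\sigma(j)}\neq 0$, hence away from $\supp V_3$, because the $y_i$ are distinct and the supports are disjoint;
\item the terms $c_j\tau_{x_j}(V_4+V_5)$;
\item the terms $c_j\tau_{x_j}V_1$.
\end{itemize}
Membership $(\bm c_{q'},\bm x_{q'})\in D_{q'}$ follows immediately: the essential infimum of the sum is $-\infty$ near $\supp V_3$, since $\bar bV_3$ is unbounded below there while the other terms there are either nonnegative, bounded, or small. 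A safer route to the same conclusion is to use the test function below, which already gives a negative spectral bottom.

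\textbf{Step 2: variational bound.} Use $\phi_{\bar b g}$ as a test function:
\[
E_{\bm c,\bm x}(g)\le E_3(\bar bg)+g\langle W\phi_{\bar bg},\phi_{\bar bg}\rangle .
\]
We then bound the contributions of $W$ piece by piece.
\begin{itemize}
\item The $V_5$ pieces contribute with a sign, but they are nonnegative and so only hurt. Fix an open neighbourhood $D$ of $\supp V_3$ disjoint from the shifted supports of $V_5$ and of the $V_3^{(i)}$ with $i\neq\sigma(j)$. Lemma~\ref{lem:decay} gives $|\phi_h|\le Ce^{-h^{1/2}/C}$ off $D$, so $g\int_{D^c}|W||\phi|^2$ is exponentially small for these terms. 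Using the $L^p$ bounds on compact supports, this is $o(E_{-,3}(g))$.
\item The $V_1$ term is bounded above by $Cg$. This is $o(E_{-,3}(\bar bg))$ because $E_{-,3}(h)/h\to\infty$.
\item The $V_4$ terms are the main obstacle, since $\tau_{x_j}V_4$ may overlap $D$. Here I would use the form bound of $V_4$ relative to $-\Delta$ with relative bound zero, in the form of (V5): $g_3=o(g_4)$ means $E_{-,4}$ is negligible compared with $E_{-,3}$ at comparable coupling. Concretely, for a small $\eta>0$ split
\[
-\Delta+\bar bgV_3+gV_4=(1-\eta)\Bigl(-\Delta+\tfrac{\bar b g}{1-\eta}V_3\Bigr)+\eta\Bigl(-\Delta+\tfrac{g}{\eta}V_4\Bigr)
\]
at the operator level rather than via the test function. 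Then
\[
E_{\bm c,\bm x}(g)\ge\ \text{(lower)}\quad\text{and}\quad E_{\bm c,\bm x}(g)\le -(1-\eta)E_{-,3}\bigl(\tfrac{\bar bg}{1-\eta}\bigr)+E_{-,4}\bigl(\tfrac{g}{\eta}\bigr)+\dots
\]
We only need an upper bound on $E_{\bm c,\bm x}$. So evaluate $\langle(-\Delta+\tfrac{g}{\eta}V_4)\phi,\phi\rangle\ge -E_{-,4}(g/\eta)$; that is the wrong direction. Instead I would bound $\langle V_4\phi,\phi\rangle$ from above using $\tilde E_{-,4}$ from (V6), since $\langle V_4\phi,\phi\rangle=-\langle(-V_4)\phi,\phi\rangle\le \eta^{-1}\|\nabla\phi\|^2+\eta^{-1}\tilde E_{-,4}(\eta g)/g$. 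This is exactly why (V6) involves $\tilde g_4$.
\end{itemize}
Combining these with $\|\nabla\phi_{\bar bg}\|^2\lesssim E_{-,3}(\bar bg)$, we get
\[
E_{-,\bm c,\bm x}(g)\ge (1-o(1))E_{-,3}((1-\eta)\bar bg)-o(E_{-,3}(g)).
\]

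\textbf{Step 3: invert.} The previous bound gives
\[
g_{\bm c,\bm x}(E)\le g_3\bigl((1+o(1))E\bigr)/\bigl((1-\eta)\bar b\bigr).
\]
The asymptotic $g_3((1+o(1))E)\sim g_3(E)$ follows from concavity of $g_3$, as in the proof of Theorem~\ref{thm:pairwise}(b). Letting $\eta\to0$ yields \eqref{eq:g estimate}.
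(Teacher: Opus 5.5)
\textbf{The gap is in your $V_4$ step.} Your splitting of the potential, your treatment of the $V_1$, $V_5$ and off-diagonal $V_3$ pieces via Lemma~\ref{lem:decay}, and your inversion through concavity of $g_3$ all match the paper.

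Your displayed inequality $\langle V_4\phi,\phi\rangle\le \eta^{-1}\|\nabla\phi\|^2+\eta^{-1}\tilde E_{-,4}(\eta g)/g$ is mis-scaled. The correct form is $h\langle V_4\phi,\phi\rangle\le\|\nabla\phi\|^2+\tilde E_{-,4}(h)$, and you need $h=g/\eta$ (not $\eta g$) to get a \emph{small} multiple of the kinetic energy. Even after this correction, testing with $\phi_{\bar bg}$ only gives
\[
E_{\bm{c},\bm{x}}(g)\le -E_{-,3}(\bar bg)+q'\eta\,\|\nabla\phi_{\bar bg}\|^2+\eta\sum_{j}\tilde E_{-,4}(c_jg/\eta)+Cg+o(1).
\]
You close this by invoking $\|\nabla\phi_{\bar bg}\|^2\lesssim E_{-,3}(\bar bg)$, for which you give no argument, and nothing in \textbf{(V1)}, \textbf{(V5)}, \textbf{(V6)} supplies it. The general facts available do not suffice:
\begin{itemize}
\item Testing $\phi_h$ against $(1-\eta)(-\Delta)+hV_3$ yields only $\eta\|\nabla\phi_h\|^2\le(1-\eta)E_{-,3}(h/(1-\eta))-E_{-,3}(h)$. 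This is comparable to $E_{-,3}(h)$ only under a doubling-type bound $E_{-,3}(\lambda h)\le CE_{-,3}(h)$, which holds for power-law wells but is not part of the hypotheses.
\item Form-boundedness gives only $\|\nabla\phi_h\|^2\lesssim h^{2p/(2p-d)}$, which is not comparable to $E_{-,3}(h)$ in general.
\end{itemize}
The same issue undermines your proof that $(\bm{c}_{q'},\bm{x}_{q'})\in D_{q'}$. The pointwise essinf argument is not sound, because the translates $\tau_{x_j}V_4$ may overlap $\supp V_3$ and are merely $L^p$. The spectral route you call safer depends on this same $V_4$ estimate.

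\textbf{How the paper avoids this.} It takes the kinetic energy from the \emph{target} operator rather than from the test function. Specifically, it bounds $(1-\veps)\bigl(-\Delta+\frac{g}{1-\veps}\sum_j c_j\tau_{x_j}V\bigr)=(1-\veps)(-\Delta)+g\sum_j c_j\tau_{x_j}V$ from above by
\[
\bigl(-\Delta+g\bar bV_3\bigr)-\sum_{j=1}^{q'}\frac{\veps}{q'}\Bigl(-\Delta-\frac{q'c_jg}{\veps}\tau_{x_j}V_4\Bigr)+gV'+Cg .
\]
The middle sum is at most $\frac{\veps}{q'}\sum_j\tilde E_{-,4}(q'c_jg/\veps)$ as an \emph{operator} inequality. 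Testing with $\phi_{\bar bg}$ then uses only $\langle(-\Delta+g\bar bV_3)\phi_{\bar bg},\phi_{\bar bg}\rangle=-E_{-,3}(\bar bg)$, and no information about $\|\nabla\phi_{\bar bg}\|$ is needed. \textbf{(V6)} enters exactly as $\tilde E_{-,4}(Mg)\le E_{-,3}(g)$ for each fixed $M$ and large $g$.

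The cost is that you control $E_{-,\bm{c}_{q'},\bm{x}_{q'}}(g/(1-\veps))$ instead of $E_{-,\bm{c}_{q'},\bm{x}_{q'}}(g)$. This is precisely the $(1-\eta)$ loss that already appears in your final bound, and your concavity argument in Step~3 removes it. With this replacement your proof goes through.
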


\begin{proof}
Throughout the proof, \(C>0\) denotes a generic sufficiently large constant that may change from line to line.

We can estimate \(\sum_{j=1}^{q'} c_j \tau_{x_j}V\) as
\begin{equation}
\sum_{j=1}^{q'} c_j \tau_{x_j}V
\leq \sum_{j=1}^{q'} c_j b_{\sigma(j)} V_3 +\sum_{j=1}^{q'}c_j\tau_{x_j}V_4 + V' +C,
\end{equation}
where \(V'\in L^p(\mathbb{R}^d)\) has compact support disjoint from \(\supp V_3\).

Fix \(0<\veps<1\). Let \(\phi_g\) denote the normalized ground state of the operator
\begin{equation}
-\Delta+g\sum_{j=1}^{q'} c_j b_{\sigma(j)}V_3.
\end{equation}
By the variational principle, we have
\begin{equation}
\begin{split}
&(1-\veps)E_{-,\bm{c}_{q'},\bm{x}_{q'}}\left(\frac{g}{1-\veps}\right)-E_{-,3}\Bigg(\sum_{j=1}^{q'} c_j b_{\sigma(j)}g\Bigg) +gC\\
&\geq -\frac{\veps}{q'}\sum_{j=1}^{q'}\tilde{E}_{-,4}\left( \frac{c_j q'}{\veps}g\right)-g\langle V'\phi_g, \phi_g\rangle.
\end{split}
\end{equation}
We now estimate each term on the right-hand side separately as \(g \to +\infty\).

Since \(g_3(E)=o(\tilde{g}_4(E))\) as \(E\to+\infty\) by Condition \hyperref[cd:V6]{\textbf{(V6)}}, for each \(j=1,\ldots, q'\), we obtain
\begin{equation}
\tilde{E}_{-,4}\left( \frac{c_j q'}{\veps}g\right) \leq E_{-,3}\Bigg(\sum_{i=1}^{q'} c_i b_{\sigma(i)}g\Bigg),
\end{equation}
for sufficiently large \(g>0\).

Since \(\supp V_3\) and \(\supp V'\) are compact and disjoint, there exists an open set \(D\subset\mathbb{R}^d\) containing \(\supp V_3\) such that \(D\cap \supp V' =\emptyset\). Lemma \ref{lem:decay} guarantees that the ground state \(\phi_g\) satisfies \eqref{eq:decay} for all \(x\in \mathbb{R}^d\setminus D\). Since \(V'\) has compact support, we find that \(\|V'\|_{L^1(\mathbb{R}^d)}\leq C\|V'\|_{L^p (\mathbb{R}^d)}\). Consequently, for sufficiently large \(g>0\), we have
\begin{equation}
|\langle V'\phi_g, \phi_g\rangle| \leq C\exp\left(-\frac{g^{1/2}}{C}\right) \|V'\|_{L^p(\mathbb{R}^d)}.
\end{equation}

Combining these estimates and \(\lim_{g\to+\infty} E_{-,3}(g)/g =+\infty\) (see \cite{KP}), we find that as \(g\to+\infty\),
\begin{equation}\label{eq:E estimate}
E_{-,\bm{c}_{q'},\bm{x}_{q'}}\left(\frac{g}{1-\veps}\right)
\geq E_{-,3}\Bigg(\sum_{j=1}^{q'} c_j b_{\sigma(j)}g\Bigg) + o(E_{-,3}(g)).
\end{equation}
Since \(\lim_{g\to+\infty}E_{-,3}(g)=+\infty\), this estimate implies that the infimum of the spectrum \(E_{\bm{c}_{q'},\bm{x}_{q'}}(g)\) of the operator \(-\Delta+g\sum_{j=1}^{q'} c_j\tau_{x_j}V\) is negative for sufficiently large \(g>0\). Thus, we have \(\essinf \sum_{j=1}^{q'} c_j\tau_{x_j}V <0\) and \((\bm{c}_{q'},\bm{x}_{q'})\in D_{q'}\). Moreover, \eqref{eq:E estimate} yields
\begin{equation}
\varliminf_{E\to+\infty} \frac{g_{3}(E/(1-\veps))}{g_{\bm{c}_{q'},\bm{x}_{q'}}(E)}\geq (1-\veps)\sum_{j=1}^{q'}c_j b_{\sigma(j)}.
\end{equation}
Since \(g_3\) is concave, from an argument similar to that of the proof of Theorem \ref{thm:pairwise} \textbf{(b)}, we obtain
\begin{equation}
\varliminf_{E\to+\infty}\frac{g_3(E)}{g_3(E/(1-\veps))} \geq 1-\veps.
\end{equation}
Combining these limits, we obtain \eqref{eq:g estimate}.
\end{proof}

\begin{proof}[Proof of Theorem \ref{thm:V5} \textbf{(b)}]
Applying Theorem \ref{thm:V5} \textbf{(a)} to the present setting, Condition \hyperref[cd:V-P1]{\textbf{(V--P1)}} and a straightforward calculation yield the upper bound:
\begin{equation}\label{eq:V6 upper}
\log N(-E)\leq -\frac{a_0}{2\max_{J\in K(I)}\sum_{i\in J}b_i^2}\,g_3(E)^2 (1+o(1))\quad (E\to+\infty).
\end{equation}

We now estimate the lower bound.
Fix \(J\in K(I)\). We can write \(J=\{\sigma(j)\mid j=1,\ldots,q'\}\) for some \(\sigma\in S_q\). We set \((\bm{c}_{q'}, \bm{x}_{q'})=((c_j)_{j=1}^{q'}, (-y_{\sigma(j)})_{j=1}^{q'})\), where each \(c_j\) is given by
\begin{equation}
c_j= \frac{b_{\sigma(j)}}{\sum_{i=1}^{q'}b_{\sigma(i)}}.
\end{equation}

Fix \(\veps>0\). Since \(y_{\sigma(i)}-y_{\sigma(j)}\notin \supp \varphi,\  (i\neq j)\), Theorem \ref{thm:pairwise} \textbf{(a)} and Lemma \ref{lem:V6} imply that for sufficiently large \(E>0\),
\begin{equation}
\begin{split}
\log N(-E) &\geq -(1+\veps)\frac{a_0}{2}\sum_{j=1}^{q'} c_j^2\ g_{\bm{c}_{q'},\bm{x}_{q'}}(E)^2\\
&\geq -(1+\veps)^2 \frac{a_0}{2}\sum_{j=1}^{q'} c_j^2 \Bigg(\sum_{i=1}^{q'} b_{\sigma(i)}c_i\Bigg)^{-2}\ g_{3}(E)^2\\
&\geq -(1+\veps)^2 \frac{a_0}{2} \Bigg( \sum_{j=1}^{q'}b_{\sigma(j)}^2 \Bigg)^{-1} \ g_3(E)^2.
\end{split}
\end{equation}
Therefore, we have the lower bound:
\begin{equation}\label{eq:V6 lower}
\log N(-E) \geq -\frac{a_0}{2\max_{J\in K(I)}\sum_{j\in J} b_j^2}\, g_3(E)^2 (1+o(1))\quad (E\to+\infty).
\end{equation}
This completes the proof.
\end{proof}

\subsection{Proof of Proposition \ref{prop:V6}}

\begin{proof}[Proof of Proposition \ref{prop:V6} \textbf{(a)}]
Fix \(\veps>0\). We choose \(a\in A\) satisfying
\begin{equation}
\frac{a}{\max_{J\in K(\tilde{I}_a)}\sum_{i\in J}b_i^2}\geq \sup_{a'\in A}\frac{a'}{\max_{J\in K(\tilde{I}_{a'})}\sum_{i\in J}b_i^2} -\veps.
\end{equation}
We fix \(r>0\) such that the following hold:
\begin{itemize}
\item \(|y_i-y_j|>2r\) for every \(1\leq i<j \leq q\);
\item \(B(0,2r)\subset \Int S_a\);
\item \(\tilde{I}_a= I_{a,r}\),
\end{itemize}
where we set
\begin{equation}
I_{a,r}=\{(i,j)\in\{1,\ldots,q\}^2 \mid B(y_i-y_j, 2r)\subset \Int  S_a\}.
\end{equation}

For each \(i=1,\ldots,q\), we define \(g_{3,r}^{(i)}(E)\) to be the inverse function of \(-E_{3,r}^{(i)}(g)\), defined for sufficiently large \(g\). Here, \(E_{3,r}^{(i)}(g)\) denotes the infimum of the spectrum of the operator \(-\Delta + gV_{3}^{(i)}\cdot \mathbb{1}_{B(y_i,r)}\).
Since \(\essinf V_{3}^{(i)}=-\infty\) and \(V_{3}^{(i)}-V_{3}^{(i)}\cdot \mathbb{1}_{B(y_i,r)}\) is bounded, a standard variational principle argument shows that \(E_{3,r}^{(i)}(g)\sim E_{3}^{(i)}(g)\) as \(g\to+\infty\).
Since \(g_{3,r}^{(i)}\) and \(g_3^{(i)}\) are concave functions, we obtain 
\begin{equation}
g_{3,r}^{(i)}(E)\sim g_{3}^{(i)}(E)\quad (E\to+\infty).
\end{equation}

Note that \(V\) can be rewritten as \(V=V_1'+\sum_{i=1}^{q} b_i V_{3}^{(i)}\cdot \mathbb{1}_{B(y_i,r)} +V_4 +V_5\) for some bounded function \(V_1'\) that exponentially decays as \(|x|\to+\infty\).
Applying Theorem \ref{thm:V5} \textbf{(a)} and using \(\tilde{I}_a= I_{a,r}\), we obtain
\begin{equation}
\begin{split}
\varlimsup_{E\to+\infty} \frac{\log N(-E)}{g_3(E)^2} &\leq -\frac{a}{2\max_{J\in K(I_{a,r})} \sum_{i\in J}b_i^2}\\
&\leq -\frac{1}{2}\sup_{a'\in A} \frac{a'}{\max_{J\in K(\tilde{I}_{a'})} \sum_{i\in J}b_i^2} +\frac{\veps}{2},
\end{split}
\end{equation}
which implies \eqref{eq:V5 IDS}.
\end{proof}

\begin{proof}[Proof of Proposition \ref{prop:V6} \textbf{(b)}]
By the lower estimate in the proof of Theorem \ref{thm:V5} \textbf{(b)}, we have \eqref{eq:V6 lower} in this case. Combining \eqref{eq:V6 lower}, Proposition \ref{prop:V6} \textbf{(a)}, and Condition \hyperref[cd:V-P2]{\textbf{(V--P2)}}, we obtain \eqref{eq:prop V6 IDS}.
\end{proof}

\begin{remark}
Finally, we note that this remark extends Proposition \ref{prop:V6}~\textbf{(b)} to the case where the shapes of the potential wells are slightly modified.
Throughout this remark, suppose that Conditions \hyperref[cd:V1]{\textbf{(V1)}}, \hyperref[cd:V5]{\textbf{(V5)}}, \hyperref[cd:L]{\textbf{(L)}}, and \hyperref[cd:V-P2]{\textbf{(V--P2)}} are satisfied.

The assumption \(V_3^{(i)}=\tau_{y_i}V_3\) in Condition \hyperref[cd:V6]{\textbf{(V6)}} can be relaxed to allow for local perturbations of the potential wells. Instead, we specifically assume that
\begin{equation}\label{eq:W_i V_3}
V_3^{(i)}= \tau_{y_i}(W_i V_3) \quad (i=1,\ldots,q),
\end{equation}
where each \(W_i\) is a compactly supported continuous function on \(\mathbb{R}^d\) satisfying \(W_i(0)=1\). 
Despite these local perturbations, the conclusion of Proposition \ref{prop:V6} \textbf{(b)} remains valid provided that either of the following two conditions holds.

\noindent \textbf{Case 1 (Nonpositive unique singularity):} \\
The potential \(V_3\) satisfies \(\esssup V_3 \leq 0\) and has a unique singular point at the origin.

\noindent \textbf{Case 2 (Polynomial singularity):} \\
The singularity of \(V_3\) at the origin is of polynomial order. In this case, the assumption \(\esssup V_3 \leq 0\) can be removed; namely, \(V_3\) is given by
\begin{equation}\label{eq:V_3 poly}
V_3(x) = h\bigg(\frac{x}{|x|}\bigg)|x|^{-\nu}
\end{equation}
in a neighborhood of the origin, where \(0<\nu<\min\{2,d/2\}\) and \(h\) is a continuous function on \(\mathbb{S}^{d-1}\) with \(\min h < 0\).

\begin{proof}[Proof sketch]
It suffices to prove an analogue of Lemma \ref{lem:V6} under these settings to estimate the lower bound of \eqref{eq:prop V6 IDS}. 

\noindent \textbf{Case 1:} \\
Since the leading term \(g_3(E)\) as \(E\to+\infty\) is determined solely by the behavior of \(V_3\) in a neighborhood of the origin, we may assume without loss of generality that \(\min_i W_i \geq 1-\veps\) on \(\supp V_3\) for any fixed \(\veps>0\). 
Because \(\esssup V_3 \leq 0\), we have the following operator bound:
\begin{equation}
-\Delta + g\sum_{j=1}^{q'}c_j b_{\sigma(j)}  W_{\sigma(j)} V_3 \leq -\Delta +g(1-\veps)\sum_{j=1}^{q'}c_j b_{\sigma(j)} V_3. 
\end{equation}
By the variational principle, this inequality implies that the minimum eigenvalue of the perturbed operator is bounded from above by \(E_3(g(1-\veps)\sum_{j=1}^{q'}c_j b_{\sigma(j)})\),
which is sufficient to establish the analogue of Lemma \ref{lem:V6}.

\noindent \textbf{Case 2:} \\
We set
\begin{equation}
W(x)= \frac{\sum_{j=1}^{q'}c_j b_{\sigma(j)}W_{\sigma(j)}(x)}{\sum_{j=1}^{q'}c_j b_{\sigma(j)}}.
\end{equation}
Let \(E_{3}^{*}(g)\) denote the infimum of the spectrum of the operator \(-\Delta + gWV_3\). If \(V_3\) is given by \eqref{eq:V_3 poly}, we have \(E_{3}^{*}(g) \sim E_{3}(g)\) as \(g\to+\infty\) (see \cite[Lemma 5.7]{KP}). We note that it follows that
\begin{equation}
-\Delta +g \sum_{j=1}^{q'}c_j b_{\sigma(j)}W_{\sigma(j)} V_3 =-\Delta + g \sum_{j=1}^{q'}c_j b_{\sigma(j)} W V_3.
\end{equation}
Combining these facts with simple calculations, we establish the corresponding statement of Lemma \ref{lem:V6}.
\end{proof}
\end{remark}



\begin{thebibliography}{99}

\bibitem{CL}
R. Carmona and J. Lacroix,
\emph{Spectral Theory of Random Schr\"{o}dinger Operators}, Probability and its Applications,
Birkh\"{a}user Boston, Inc., Boston, MA, 1990.
\MR{1102675}

\bibitem{Dereudre}
D. Dereudre,
Introduction to the theory of Gibbs point processes, in \emph{Stochastic Geometry}, 181--229, Lecture Notes in Mathematics, 2237, Springer, Cham, 2019.
\MR{3931586}

\bibitem{DV}
D. Dereudre and T. Vasseur, 
Existence of Gibbs point processes with stable infinite range interaction,
\emph{J. Appl. Probab.} \textbf{57} (2020), 775--791.
\MR{4148057}

\bibitem{Diestel}
R. Diestel,
\emph{Graph Theory} (5th ed.), Graduate Texts in Mathematics, 173,
Springer, Berlin, 2017.
\MR{3644391}

\bibitem{FU}
R. Fukushima and N. Ueki,
Classical and quantum behavior of the integrated density of states for a randomly perturbed lattice,
\emph{Ann. Henri Poincar\'{e}} \textbf{11} (2010), 1053--1083.
\MR{2737491}

\bibitem{Georgii}
H.-O. Georgii,
\emph{Gibbs Measures and Phase Transitions} (2nd ed.), de Gruyter Studies in Mathematics, 9, Walter de Gruyter, Berlin, 2011.
\MR{2807681}

\bibitem{GK}
H.-O. Georgii and T. K\"{u}neth,
Stochastic comparison of point random fields,
\emph{J. Appl. Probab.} \textbf{34} (1997), 868--881.
\MR{1484021}

\bibitem{HH}
C. Hofer-Temmel and P. Houdebert,
Disagreement percolation for Gibbs ball models, \emph{Stochastic Process. Appl.} \textbf{129} (2019), 3922--3940.
\MR{3997666}

\bibitem{KMN}
M. Kaminaga, T. Mine, and F. Nakano,
Integrated density of states for the Poisson point interactions on \(\mathbb{R}^3\),
\emph{Math. Z.} \textbf{311} (2025), 45, 47 pp.
\MR{4951583}

\bibitem{Kirsch}
W. Kirsch,
Random Schr\"{o}dinger operators: a course, in \emph{Schr\"{o}dinger Operators}, 
264--370, Lecture Notes in Physics, 345, Springer, Berlin, 1989.
\MR{1037323}

\bibitem{KP}
F. Klopp and L. Pastur,
Lifshitz tails for random Schr\"{o}dinger operators with negative singular Poisson potential,
\emph{Comm. Math. Phys.} \textbf{206} (1999), 57--103. \MR{1736990}

\bibitem{LO}
G. Last and M. Otto,
Disagreement coupling of Gibbs processes with an application to Poisson approximation,
\emph{Ann. Appl. Probab.} \textbf{33} (2023), 4091--4126. 
\MR{4663505}

\bibitem{LGP}
I. M. Lifshitz, S. A. Gredeskul, and L. A. Pastur,
\emph{Introduction to the Theory of Disordered Systems},
translated from the Russian, Wiley, New York, 1988.
\MR{1042095}

\bibitem{MR}
R. Meester and R. Roy,
\emph{Continuum Percolation}, Cambridge Tracts in Mathematics, 119, Cambridge University Press, Cambridge, 1996.
\MR{1409145}

\bibitem{Nakagawa}
Y. Nakagawa,
Asymptotic behaviors of the integrated density of states for random Schr\"{o}dinger operators associated with Gibbs point processes,
\emph{Electron. J. Probab.} \textbf{28} (2023), 158, 14 pp.
\MR{4671142}

\bibitem{Pastur}
L. A. Pastur,
The behavior of certain Wiener integrals as \(t\to\infty\) and the density of states of Schr\"{o}dinger equations with random potential, 
\emph{Teoret. Mat. Fiz.} \textbf{32} (1977), 88--95 (in Russian). \MR{0449356}

\bibitem{PF}
L. Pastur and A. Figotin,
\emph{Spectra of Random and Almost-Periodic Operators}, Grundlehren der mathematischen Wissenschaften, 297,
Springer, Berlin, 1992.
\MR{1223779}

\bibitem{Preston}
C. Preston,
\emph{Random Fields}, Lecture Notes in Mathematics, 534,
Springer, Berlin-New York, 1976.
\MR{0448630}

\bibitem{RS2}
M. Reed and B. Simon,
\emph{Methods of Modern Mathematical Physics. II. Fourier Analysis, Self-Adjointness},
Academic Press, New York-London, 1975.
\MR{0493420}

\bibitem{RS4}
M. Reed and B. Simon,
\emph{Methods of Modern Mathematical Physics. IV. Analysis of Operators}, 
Academic Press, New York-London, 1978.
\MR{0493421}

\bibitem{Ruelle}
D. Ruelle, 
\emph{Statistical Mechanics: Rigorous Results}, Reprint of the 1989 edition,
World Scientific Publishing Co., Inc., River Edge, NJ; Imperial College Press, London, 1999.
\MR{1747792}

\bibitem{Strauss}
D. J. Strauss,
A model for clustering,
\emph{Biometrika} \textbf{62} (1975), 467--475. \MR{0383493}

\end{thebibliography}


\begin{acks}
The author would like to thank Professor Naomasa Ueki for the helpful suggestions. 
The author is grateful to Kumano Dormitory, Kyoto University for its generous financial and living assistance.
Finally, the author acknowledges the use of Gemini (Google) for its assistance in refining the English phrasing of this paper.
This work was supported by JST SPRING, Grant Number JPMJSP2110.
\end{acks}


\end{document}